\documentclass[a4paper,11pt]{article}
\usepackage[utf8]{inputenc}
\usepackage[margin=1in]{geometry}
\usepackage[toc,page]{appendix}
\usepackage{pdfpages}
\usepackage{mathrsfs}
\usepackage{amsfonts}
\usepackage{amsmath}
\usepackage{mathtools}

\usepackage{amssymb}
\usepackage{bbm}
\usepackage{amsthm}
\usepackage{graphicx}
\usepackage{centernot}
\usepackage{caption}
\usepackage{subcaption}
\usepackage{braket}
\usepackage{lastpage}
\usepackage{enumitem}
\usepackage{setspace}
\usepackage{xcolor}
\usepackage[english]{babel} 
\usepackage{dsfont}
\usepackage{upgreek}
\usepackage{mparhack}

\usepackage[square,sort,comma,numbers]{natbib}
\usepackage[colorlinks=true,allcolors=blue]{hyperref}

\usepackage{fancyhdr}

\usepackage{orcidlink}

\newcommand{\norm}[1]{\left\lVert #1 \right\rVert}
\newcommand{\abs}[1]{\left\lvert #1 \right\rvert}
\newcommand{\floor}[1]{\left\lfloor #1 \right\rfloor}
\newcommand{\ceil}[1]{\left\lceil #1 \right\rceil}

\newcommand{\interior}[1]{%
	{\kern0pt#1}^{\mathrm{o}}%
}
\renewcommand{\braket}[1]{\left\langle#1\right\rangle}

\newcommand{\R}{\mathbb{R}}

\newcommand{\C}{\mathbb{C}}

\newtheorem{theorem}{Theorem}
\newtheorem{definition}[theorem]{Definition}
\newtheorem{proposition}[theorem]{Proposition}
\newtheorem{lemma}[theorem]{Lemma}
\newtheorem{corollary}[theorem]{Corollary}
\newtheorem{remark}[theorem]{Remark}
\newtheorem{assumption}[theorem]{Assumption}
\numberwithin{equation}{section}
\linespread{1.3}


\author{Johannes Agerskov\textsuperscript{1}\orcidlink{0000-0002-0533-3221}, Robin Reuvers\textsuperscript{2}\orcidlink{0000-0003-2949-3614}, and Jan Philip Solovej\textsuperscript{3}\orcidlink{0000-0002-0244-1497}}

\date{
    1. NNF Quantum Computing Programme, Niels Bohr Institute, University of Copenhagen, Denmark, Blegdamsvej 17, DK-2100, Copenhagen {\O}, Denmark
	\\
	2. Universit\`{a} degli Studi Roma Tre, Dipartimento di Matematica e Fisica, L.go S.\ L.\ Murialdo 1, 00146 Roma, Italy\\
	3. QMATH, Department of Mathematics, University of Copenhagen, Universitetsparken 5, DK-2100 Copenhagen \O, Denmark\\
	\today}
\title{Ground State Energy of Dilute Fermi Gases in 1D}
\begin{document}
	\maketitle
	\begin{abstract}
        We study the spin-$J$ Fermi gas, interacting through a general repulsive 2-body potential, and prove asymptotics of the ground state energy in the dilute limit. The asymptotic behaviour is given in terms of the ground state energy of a spin chain, which is the Heisenberg antiferromagnet in the case of spin-1/2 fermions.
	\end{abstract}
 \section{Introduction}
 \label{sec1}
A well-known feature of ground state energies of interacting quantum gases is that they show a kind of `universality' in the interaction potential when the density of the gas is low. The best known example is perhaps the validity of the Lee--Huang--Yang energy expansion for 3D dilute Bose gases with general potentials \cite{lee1957eigenvalues,dyson1957ground,lieb1998ground,yau2009second,fournais2020energy,basti2021new,fournais2021energy,basti2024upper, haberberger2023free, haberberger2024upper,fournais2024free,brooks2025third}, but other set-ups have been studied extensively in the mathematics literature in recent years, such as 2D dilute Bose gases \cite{schick1971two,lieb2001ground,andersen2002ground,mora2009ground,fournais2024ground,fournais2024lower} and 3D dilute Fermi gases \cite{lieb2005ground,lauritsen2024ground,lauritsen2024pressure,giacomelli2024huang,giacomelli2025huang,chen2025second} (see the introductions of the papers cited for further references).

In contrast, in one dimension, dilute Bose and Fermi gases interacting with general potentials have not been investigated as much\footnote{Note that weakly-interacting gases in 1D are not well described by the dilute limit, see Section \ref{notweakly} for a discussion.}. To address this, we previously studied the 1D Bose gas \cite{agerskov2022ground}, and considered a gas of $N$ spinless bosons in the interval $[0,L]$ (with density $\rho=N/L$), interacting via the $N$-body Hamiltonian 
\begin{equation}
	\label{H_N}
	H_N=-\sum^N_{i=1}\partial^2_{x_i}+\sum_{1\leq i<j\leq N}v(x_i-x_j),
\end{equation}
with $v$ a potential with even-wave scattering length $a_e$. We showed that for a wide class of potentials $v$, the ground state energy in the \textit{low-density limit} $\rho |a_e|\ll1$ can be expanded as 
\begin{equation}
\label{someeq1276}
N\frac{\pi^2}{3}\rho^2\left(1+2\rho a_e+\dots\right), 
\end{equation}
(where error terms have been omitted to have an informal discussion; see \cite[Theorem 1]{agerskov2022ground} for precise statements). 
This expression is in some sense `universal' because only the even wave scattering length appears to first order in the low-density ground state energy expansion, independent of the exact shape of the potential $v$. As indicated, a stronger `universality' (continuing the expansion to higher orders and finding that it depends only on $a$) has been demonstrated in 2 and 3D, but a crucial difference with higher dimensions is that in 1D, the leading-order term in the energy expansion is the ground state of the free Fermi gas. This means that the techniques used to prove \eqref{someeq1276} differ significantly from those used in 2 and 3D (see \cite[Section 1.2]{agerskov2022ground} for a summary).

To better understand \eqref{someeq1276}, we can study relevant trial wave functions. The leading-order term in \eqref{someeq1276} is the ground state energy of the free Fermi gas in 1D. For spinless bosons, it can be found by using the wave function 
\begin{equation}
\label{someeq1123}
\prod_{1\leq i<j\leq N}|\sin(\pi \frac{x_j-x_i}{L})|,
\end{equation}
which is the (unnormalized, periodic b.c.) 1D free Fermi ground state on the sector $0\leq x_1<\dots<x_N<L$, extended symmetrically to other orderings of the particles to render the wave function bosonic. This is the exact ground state of the Tonks--Girardeau gas \cite{girardeau1960relationship}, which corresponds to the low-density limit ($\rho/c\to0$) of the Lieb--Liniger gas ($v=2c\delta$ in \eqref{H_N} \cite{lieb1963exact}). This is effectively the state around which we expand to find the expansion \eqref{someeq1276}, while the expansions for the dilute Bose gas in 2 and 3D use the Bose condensate as a point of departure.

The next term in \eqref{someeq1276}---the correction involving the scattering length $a_e$---can be made intuitive by studying the trial state that was used to obtain the upper bound in \cite{agerskov2022ground}. The idea is that \eqref{someeq1123} needs to be modified in those areas of the configuration space where at least two bosons are close. In those regions, the potential should impact the wave function, and this needs to be accounted for. A 2-body energy minimization problem helps to define a good modification: for potentials with compact support contained in $[-R_0,R_0]$, we minimize the 2-body energy on some suitable scale $[-R,R]$ with $R\geq R_0$,
\begin{equation}
		\label{dyson1}
	\inf_{\substack{\psi\in H^1[-R,R]\\\psi(R)=\psi(-R)=1}}\int^{R}_{-R}2|\partial_x\psi|^2+v(x)|\psi(x)|^2d x=\int^{R}_{-R}2|\partial_x\psi_{\text{even}}|^2+v(x)|\psi_{\text{even}}(x)|^2d x= \frac{4}{R-a_e}. 
\end{equation}
Here, one should think of $x$ as the distance between the particles, so that the boundary condition $\psi(R)=\psi(-R)=1$ is truly a bosonic one. The minimum is attained by the \textit{even-wave scattering solution} $\psi_{\text{even}}$, while the minimal energy defines the \textit{even-wave scattering length} $a_e$. It turns out that $\psi_{\text{even}}$ is a good tool to modify \eqref{someeq1123}. If we focus on the sector $0\leq x_1<\dots<x_N< L$, the trial state producing \eqref{someeq1276} is roughly
\begin{equation}
\label{state1}
\prod_{1\leq i<j\leq N} f_{i,j}(x_i,x_j),
\end{equation}
with 
\[
\label{state2}
f_{i,j}(x_i,x_j)=\begin{cases}
\sin(\pi \frac{x_j-x_i}{L}) & \phantom{w} j>i+1 \textnormal{ or } |x_j-x_i|>R\\
\sin(\pi \frac{R}{L})\psi_{\textnormal{even}}(x_{i+1}-x_i)& \phantom{w} j=i+1 \textnormal{ and } |x_{i+1}-x_i|\leq R.
\end{cases}
\]
That is, we introduce the scattering solution for neighbouring particles, on a suitable scale $R$. Note the wave function is continuous. It suffices to consider neighbouring particles because we restricted to  $0\leq x_1<\dots<x_N< L$; the full trial state is found by symmetrically extending to other orderings of the particles on the line. 

In \cite{agerskov2022ground}, we used a slightly more complicated trial state of the type \eqref{state1} to obtain the upper bound needed for \eqref{someeq1276}, and we also proved a matching lower bound. Intuitively speaking, the appearance of the scattering length $a_e$ in the energy expansion \eqref{someeq1123} in the dilute limit $\rho|a_e|\ll1$ comes from the importance of the problem \eqref{dyson1}---the gas is so dilute that, to first order, interactions are successfully described by 2-body scattering only.

A similar result was found for the spinless (or spin-polarized) Fermi gas with Hamiltonian \eqref{H_N} \cite{agerskov2022ground} (see also an improved upper bound \cite{lauritsen2024ground} that followed our preprint). This requires only a small modification to \eqref{dyson1}: 
\begin{equation}
		\label{dyson2}
\inf_{\substack{\psi\in H^1[-R,R]\\\psi(R)=\psi(-R)=-1}}\int^{R}_{-R}2|\partial_x\psi|^2+v(x)|\psi(x)|^2d x=\int^{R}_{-R}2|\partial_x\psi_{\text{odd}}|^2+v(x)|\psi_{\text{odd}}(x)|^2d x= \frac{4}{R-a_o}, 
\end{equation}
where the boundary condition $\psi(R)=\psi(-R)=-1$ is now fermionic. This defines the \textit{odd-wave scattering solution} $\psi_{\text{odd}}$ and the \textit{odd-wave scattering length} $a_o$ (we can see that $a_e\leq a_o$ by comparing the energies \eqref{dyson1} and \eqref{dyson2} and even/odd boundary conditions). Replacing $\psi_{\text{even}}$ by $\psi_{\text{odd}}$ in the trial state \eqref{state1}, and extending antisymmetrically to get a fermionic wave function gives the upper bounded needed to prove a ground state energy expansion
\begin{equation}
\label{energ2}
N\frac{\pi^2}{3}\rho^2\left(1+2\rho a_o+\dots\right),
\end{equation}
again for a general class of potentials. The lower bound could also be adapted to this case. Note the leading order is again the free Fermi ground state---this is independent of the exact symmetry of the particles.

The logical next question is: what happens for a dilute 1D spin-1/2 Fermi gas with Hamiltonian \eqref{H_N}? 
We can expect a free Fermi leading-order term like before, but this time, spin is present, and the wave function should be antisymmetric under the combined exchange of space and spin. If we again want to proceed by modifying the free Fermi state with 2-body scattering solutions, this means both even-wave and odd-wave scattering are needed: depending on the combined spin of the two particles involved in the scattering, the spatial wave function has to be symmetric or antisymmetric under exchange of the spatial coordinates. Indeed, in \cite[Remark 5]{agerskov2022ground}, we conjecture that the correct ground state energy is
\begin{equation}
\label{energ3}
		N\frac{\pi^2}{3}\rho^2\big(1+2\ln(2)\rho a_{e}+2(1-\ln(2))\rho a_{o}+\dots), 
\end{equation}
which is based on the exactly-solvable cases of the hard-core gas and the Lieb--Liniger gas (in analogy with the original conjecture for bosons \cite{astrakharchik2010low,agerskov2022ground}).
Demonstrating \eqref{energ3} rigorously is the main motivation for this article (though the results extend beyond this case, see Section \ref{secmainresults}).

How should the trial state \eqref{state1} be modified? Again restricting to $0\leq x_1<\dots<x_N< L$, and adding a spin wave function $\phi_{\text{spin}}$, we can study
\begin{equation}
\label{newtrialstate}
\left(\prod_{1\leq i<j\leq N} f_{i,j}(x_i,x_j)\right)\phi_{\text{spin}}(\sigma_1,\dots,\sigma_N),
\end{equation}
with 
\begin{equation}
\label{someeq320990}
\begin{aligned}
&f_{i,j}(x_i,x_j)=\\
&\begin{cases}
\sin(\pi \frac{x_j-x_i}{L}) & \phantom{w} j>i+1 \text{ or } |x_j-x_i|>R\\
\sin(\pi \frac{R}{L})\left(\psi_{\text{even}}(x_{i+1}-x_i)P^{i,i+1}_A+\psi_{\text{odd}}(x_{i+1}-x_i)P^{i,i+1}_S\right)& \phantom{w} j=i+1 \text{ and } |x_{i+1}-x_i|\leq R,
\end{cases}
\end{aligned}
\end{equation}
where $P^{i,i+1}_A$ is the antisymmetric spin projection (spin zero) for particles $i$ and $i+1$ and $P^{i,i+1}_S$ is the symmetric spin projection (spin one) (again note that we only get these operators for neighbouring particles by construction). Thus, antisymmetric spin gets coupled to even-wave scattering and vice versa, which means the trial state can be extended fermionically to other orderings of the particles. Note that spin state $\phi_{\text{spin}}$ still needs to be chosen. The even-wave scattering function $\psi_{\text{even}}$ has a lower energy than $\psi_{\text{odd}}$ because even boundary conditions give a lower energy, which means that the spin singlet is preferred for neighbouring spins. Indeed, calculations similar to the ones that gave \eqref{someeq1276} and \eqref{energ2} for spinless particles now produce a first-order correction that is 
\[
2\rho a_{e}\left\langle\frac{1}{N}\sum_iP^{i,i+1}_A\right\rangle_{\phi_{\text{spin}}}+2\rho a_{o}\left\langle\frac{1}{N}\sum_iP^{i,i+1}_S
\right\rangle_{\phi_{\text{spin}}}.
\]
Given $a_e\leq a_{o}$ as pointed out before, to achieve the lowest energy we need to maximize the first expectation value (and hence minimize the second, given $P^{i,i+1}_S+P^{i,i+1}_A=\mathds{1}$). This is exactly the Heisenberg spin chain since
\[
2P^{i,i+1}_S=(S_i+S_{i+1})^2=\frac{3}{2}+2S_i\cdot S_{i+1},
\]
and the minimal expectation value is attained by the ground state $\phi_{\text{Heis}}$ of the Heisenberg chain \cite{bethe1931theorie,hult1938},
\[
\left\langle\frac{1}{N}\sum_iP^{i,i+1}_S\right\rangle_{\phi_{\text{Heis}}}=\frac{3}{4}+\left\langle\frac{1}{N}\sum_iS_i\cdot S_{i+1}\right\rangle_{\phi_{\text{Heis}}}=1-\ln(2).
\]
Formally, this gives the upper bound needed to prove  \eqref{energ3}. In other words, the ground state problem for the dilute, interacting spin-$1/2$ Fermi gas in 1D effectively reduces to the antiferromagnetic Heisenberg spin chain. 
More generally, ground state problems of 1D, dilute spin-$J$ Fermi gases give rise to an effective spin chain that is known as the Lai--Sutherland model \cite{lai1974lattice,sutherland1975model,volosniev2014strongly,deuretzbacher2014quantum,yang2015strongly,levinsen2015strong}, 
\begin{equation}
\label{LSintro}
H_{\textnormal{LS}}=\sum_{i=1}^{N}P_S^{i,i+1},
\end{equation}
where $P^{i,i+1}_S$ should now be understood as the projection onto all symmetric spin combinations of spins $i$ and $i+1$ (for example, if $J=5/2$, these would be combined spin $1$, $3$ and $5$, whereas spin $0$, $2$ and $4$ are antisymmetric).
This is not difficult to understand: as indicated above, in a dilute gas, the first order in the low-density energy expansion is effectively decided by 2-body scattering, which can be either even or odd wave. This means that the effective spin chain should distinguish even and odd spin combinations, and that is exactly what \eqref{LSintro} does. We again find \eqref{someeq320990} with the spin-$J$ interpretation of $P^{i,i+1}_S$ and $P^{i,i+1}_A=\mathds{1}-P^{i,i+1}_S$. Note that for spin-$J$ bosons, $a_e$ and $a_o$ switch places because the wave function needs to be symmetric upon combined exchange of space and spin coordinates. This means the correct Hamiltonian to study is now $-H_{LS}$. In this article, we demonstrate the relevance of the effective spin chains rigorously, for a general class of potentials $v$. 

In the physics literature, interacting 1D gases with spin have been studied a lot (see \cite{guan2013fermi,minguzzi2022strongly,bloch2008many,guan2022yang} for reviews), particularly with the contact interaction that first appeared in the Lieb--Liniger Hamiltonian \cite{lieb1963exact}
\begin{equation}
	\label{H_Ncontact}
	H_{\textnormal{\textnormal{LL}}}=-\sum^N_{i=1}\partial^2_{x_i}+\sum_{1\leq i<j\leq N}2c\delta(x_i-x_j).
\end{equation}
Following the introduction of the bosonic model and the exact Bethe ansatz solution by Lieb and Liniger in 1963, Gaudin and Yang independently found an exact solution for particles with spin in 1967 \cite{yang1967some,gaudin1967systeme}, with Sutherland extending the result to general symmetry types given by Young diagrams in 1968 \cite{sutherland1968further}. The Hamiltonian \eqref{H_Ncontact} for particles with spin is now known as the Yang--Gaudin model. Despite the exact solution, the connection with effective spin chains such as the Heisenberg model was not made at the time. 

This appears to have been done for the first time in the discrete set-up (the 1D Fermi--Hubbard model) by Ogata and Shiba in 1990 \cite{ogata1990bethe}. Interestingly, they were inspired by the Bethe ansatz solution by Lieb and Wu \cite{lieb1968absence}, who in turn adapted Yang and Gaudin's solution of the continuum model to the discrete case. Eventually, the spin chain observation made it back to the continuous setting \cite{guan2007ferromagnetic,matveev2008spectral} for spin-$1/2$ bosons with contact interactions. The possibility to study fermions and more general interactions was hinted at as well. For contact interactions between fermions, the phenomenon was studied in more detail with perturbation theory methods starting from 2014 \cite{volosniev2014strongly,deuretzbacher2014quantum,yang2015strongly,levinsen2015strong}. It was found that the relevant effective spin chain is the Lai--Sutherland model \eqref{LSintro}.

 After 2014, it was noticed that the effective spin chain can be engineered to differ from the Lai--Sutherland model, for example with external trapping potentials, spin-dependent interactions, or by considering several species of particles \cite{volosniev2015engineering,massignan2015magnetism,yang2016engineering, deuretzbacher2017tuning}. Effective spin chains have also been observed experimentally \cite{murmann2015antiferromagnetic}. 

Our main result is a rigorous derivation of the appearance of the Lai--Sutherland model for the ground state energy, in the absence of a trapping potential. It applies to general interaction potentials \eqref{H_N}, and therefore extends what has been done for contact interactions in the literature. 

This paper is organized as follows. Section \ref{resultsfermi} describes our main result for spin-$J$ fermions, while Section \ref{proofideas} describes some key ideas used in the proofs. Section \ref{resultsLLH} contains results about the Lieb--Liniger--Heisenberg model and Section \ref{openproblems} suggests a few open problems. It turns out the upper bounds can be formulated in slightly greater generality using matrix-valued potentials. Such potentials are introduced in Section \ref{matrixpotentials}, while the upper bound is stated and proved in Section \ref{secupperbound}. Finally, Section \ref{seclowerbound} describes the lower bound and its proof.

\subsection{Luttinger theory and weakly-interacting 1D gases}
\label{notweakly}
This paper studies the ground state energy of dilute gases of particles with spin in 1D, and shows that the first correction term in the expansion only depends on the even and odd wave scattering lengths. This follows work that showed that for dilute gases in 2D and 3D, the ground state energy is even `universal' for higher terms in the expansion (see the references at the start of the introduction). 

Of course, the study of universal properties of 1D gases has a long history, in particular through Tomonaga--Luttinger theory, bosonization, and the Luttinger liquid universality class \cite{Tomonaga,luttinger1963exactly,mattis1965exact,mattis1964band,haldane1981luttinger,metzner1993conservation,giamarchi2003quantum}. An important conclusion is that the theory of excitations of a large class of 1D models has a universal shape, independent of the exact microscopic model. However, this fact does not directly seem to comment on the ground state energy of the microscopic model, and how it depends on the shape of the interaction potential, which is the topic of interest here. 

In the mathematical physics literature, many results have been obtained by Benfatto, Gallavotti, Mastropietro and others for weakly-interacting Fermi gases \cite{benfatto1990perturbation, benfatto1994beta,benfatto2011drude, benfatto2014universality,benfatto2014universality2}. Using renormalization group techniques, these papers demonstrate a notion of universality, commonly used in the physics literature, that concerns the ground state energy, but also goes a great deal beyond that. 

However, while in 2D and 3D the dilute and weakly-interacting limits are related, this is not the case in 1D. This is most easily seen in the Lieb--Liniger and Yang--Gaudin models, whose solution can entirely be expressed in terms of the parameter $\gamma=c/\rho$ (with $2c$ the parameter in front of the $\delta$-potential, see \eqref{H_Ncontact}) \cite{lieb1963exact}. This means that the case of ``fixed potential and low density'' considered in this article corresponds to $\gamma\gg1$, while the case of ``fixed density and weak interactions'' that is the focus of the references above corresponds to $\gamma\ll1$. Hence, for the models with a $\delta$-interaction, the weakly-interacting regime and the low-density regime are opposites. 

It is easy to carry out a similar scaling for potentials that are not concentrated at the origin like the $\delta$-function. In that case, the scaling that is needed to go between the two set-ups will also affect the range (or concentration) of the potential, but the conclusion remains the same: the regime considered in this article corresponds to strong interactions in the set-up used in the works by Benfatto, Gallavotti, Mastropietro and others, and so the results do not overlap.

 \section{Main Results}
 \label{secmainresults}
\subsection{The ground state energy of the dilute spin-$J$ Fermi gas}
\label{resultsfermi}
 We consider a general even, compactly-supported, repulsive (i.e.\ positive) potential $v$ that is a measure. We the study the Hamiltonian 
   \begin{equation}
  \label{EqHamiltonian}
      H=-\sum_{i=1}^{N}\partial_i^2+\sum_{i<j}v(x_i-x_j),
  \end{equation}
  with energy functional given by
 \begin{equation}
 \label{functional}
\mathcal{E}(\Psi)=\int_{[0,L]^N} \sum_{i=1}^{N}\braket{\partial_i\Psi\vert \partial_i\Psi}+\sum_{i<j}\braket{\Psi \vert v(x_i-x_j)\Psi},
 \end{equation}
and with form domain 
\begin{equation}
\label{domainfermions}
\left\{\Psi\in \wedge^NL^2([0,L];\C^{2J+1}) \big\vert \Psi\in H^1([0,L]^N; (\C^{2J+1})^{\otimes N}) \textnormal{ and } \mathcal{E}(\Psi)<\infty \right\}.
 \end{equation}
 We will not assume $v$ to be absolutely continuous with respect to the Lebesgue measure, but we abuse notation and write $v(x)dx$ when integrating with respect to $v$.
 
 We denote by $P_S$ is the projection onto the symmetric subspace $\C^{2J+1}\vee \C^{2J+1}$, while $P_A$ is the projection onto the antisymmetric subspace $\C^{2J+1}\wedge \C^{2J+1}$. We will add indices $i,j$ to indicate the spin projection is acting on particles $i$ and $j$, e.g.\ $P_A^{i,j}$. The even and odd-wave scattering lengths $a_e$ and $a_o$ were mentioned in \eqref{dyson1} and \eqref{dyson2} (see \cite[Lemma 2 and Appendix A]{agerskov2022ground} for a precise definition, and Definition \ref{DefinitionScatteringLengthMatrix} below for a more general definition in the case of matrix-valued potentials). 
 
 \begin{theorem}[Main result for fermions]
	\label{TheoremDiluteFermiGroundStateEnergy}
	 Let $v$ be an even, repulsive (i.e.\ positive) interaction compactly supported in $[-R_0,R_0]$, with $a_e$ and $a_o$ its even and odd wave scattering lengths. Consider a spin-$J$ Fermi gas with energy functional \eqref{functional} and domain \eqref{domainfermions} restricted by any Robin boundary condition. Let $\rho=N/L$. Then, for $\rho|a_e|$, $\rho a_o$ and $\rho R_0$ sufficiently small, the ground state energy $E_{J}(N,L)$ satisfies
	\begin{equation}
    \label{eqlower}
		E_{J}(N,L)= N\frac{\pi^2}{3}\rho^2\left(1+2\rho \left[a_e+(a_o-a_e)\epsilon^{\textnormal{LS}}_J\right]+\mathcal{O}\big((\rho \max(R_0,a_o,\abs{a_e}))^{6/5}+N^{-2/3}\big)\right),
	\end{equation}
	where $\epsilon^{\textnormal{LS}}_J$ is the minimal energy per site of the spin-$J$ Lai--Sutherland model, i.e. the minimal expectation value of \footnote{\label{lastfootnote}We can assume periodic boundary conditions for the spin chain, but it does not matter for the theorem: other boundary conditions change $\epsilon^{\textnormal{LS}}_J$ by at most $N^{-1}$, which can be absorbed in the error term.}
    \begin{equation}
    \label{spinchainmt}
h_{LS}=\frac{1}{N}\sum_{i=1}^{N}P_S^{i,i+1}.
    \end{equation}
\end{theorem}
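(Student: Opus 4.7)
The plan is to prove matching upper and lower bounds on $E_J(N,L)$. The leading term $N\pi^2\rho^2/3$ is the ground state energy of the 1D free Fermi gas on $[0,L]$, which agrees across local boundary conditions up to errors of order $N^{-2/3}$ absorbed in the last term of \eqref{eqlower}. The real task is to isolate the subleading correction of order $\rho^3$ and identify its coefficient as $a_e+(a_o-a_e)\epsilon^{\textnormal{LS}}_J$.

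For the upper bound, I would use a trial state of the form \eqref{newtrialstate}--\eqref{someeq320990} on the ordered sector $0\leq x_1<\cdots<x_N<L$, extended antisymmetrically. The spatial part is built from the free Fermi wave function, modified at short distances by a spin-dependent two-body scattering solution: the even-wave $\psi_{\textnormal{even}}$ is coupled to the antisymmetric spin projector $P_A^{i,i+1}$ and the odd-wave $\psi_{\textnormal{odd}}$ to the symmetric one $P_S^{i,i+1}$. The spin factor $\phi_{\textnormal{spin}}$ is taken to be the ground state of the Lai--Sutherland Hamiltonian $h_{\textnormal{LS}}$ from \eqref{spinchainmt}. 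A direct energy computation, using \eqref{dyson1}--\eqref{dyson2} for the two-body contributions and expanding the density of neighbouring particles of the free Fermi state at short distances, should yield
\[
\mathcal{E}(\Psi_{\textnormal{trial}})=N\frac{\pi^2}{3}\rho^2\Big(1+2\rho a_e\Big\langle\frac{1}{N}\sum_i P_A^{i,i+1}\Big\rangle_{\phi_{\textnormal{spin}}}+2\rho a_o\Big\langle\frac{1}{N}\sum_i P_S^{i,i+1}\Big\rangle_{\phi_{\textnormal{spin}}}+\textnormal{errors}\Big).
\]
Using $P_A^{i,i+1}+P_S^{i,i+1}=\mathds{1}$ and $\epsilon^{\textnormal{LS}}_J=\langle h_{\textnormal{LS}}\rangle_{\phi_{\textnormal{spin}}}$ then gives the upper bound in \eqref{eqlower}. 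The mesoscopic scale $R\geq R_0$ must be optimized to balance several error contributions, which heuristically produces the $(\rho\max(R_0,a_o,|a_e|))^{6/5}$ exponent.

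For the lower bound, I would adapt the Dyson-type argument used for the spinless cases in \cite{agerskov2022ground}. The first step is to transfer a small fraction of the kinetic energy in each coordinate to absorb the two-body interactions at scale $R$ into an effective nearest-neighbour operator. Using the variational characterizations \eqref{dyson1}--\eqref{dyson2}, one obtains a pointwise operator lower bound whose neighbouring-pair part is $\tfrac{4}{R-a_e}P_A^{i,i+1}+\tfrac{4}{R-a_o}P_S^{i,i+1}$, localized near coincidence points. The remaining kinetic energy is then compared with the free Fermi energy, and averaging over the near-coincidence statistics reduces the problem to a free Fermi gas perturbed by an operator of Lai--Sutherland type on the spin factor, weighted by the gap $a_o-a_e$. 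Because the upper bound is formulated for matrix-valued potentials (Section \ref{matrixpotentials}), the same matrix-valued framework should be used on the lower bound side to package the $P_A/P_S$ structure cleanly together with the scalar scattering information.

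The hard part will be the lower bound, and in particular keeping the spin structure intact through the Dyson step. One must obtain a lower bound whose effective spin operator is genuinely proportional to $a_e P_A^{i,i+1}+a_o P_S^{i,i+1}$, rather than a crude replacement by $a_e$ or $a_o$ alone, because otherwise the Lai--Sutherland coefficient $\epsilon^{\textnormal{LS}}_J$ cannot appear. This requires preserving the coupling between neighbouring-pair distances and spin projectors through the localization and kinetic-energy sacrifice, and only at the final stage decoupling cleanly to expose the abstract Lai--Sutherland minimization. Controlling this decoupling error while remaining inside the dilute regime $\rho\max(R_0,a_o,|a_e|)\ll1$, and simultaneously absorbing boundary and finite-volume effects into the $N^{-2/3}$ and $(\rho\max(R_0,a_o,|a_e|))^{6/5}$ remainders, is where the main technical difficulty lies.
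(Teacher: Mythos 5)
Your upper bound is essentially the paper's: patching the free Fermi state for the closest pair with $\psi_{\text{even}}P_A^{i,i+1}+\psi_{\text{odd}}P_S^{i,i+1}$ and choosing the spin factor to be the Lai--Sutherland ground state is exactly what is done (Corollary \ref{TheoremUpperBound2}, obtained from Theorem \ref{TheoremUpperBoundSpinJFermi}), up to technical devices you do not mention but which are routine in spirit (an interpolation to keep the trial state continuous when the identity of the closest pair changes, and a decomposition into boxes to stop the error from growing with $N$). The boundary-condition issue is handled by proving the upper bound for Dirichlet and the lower bound for Neumann conditions.

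The lower bound is where there is a genuine gap. Your plan produces, after the Dyson step, a nearest-neighbour delta interaction with \emph{spin-dependent} coupling $\tfrac{1}{R-a_e}P_A^{i,i+1}+\tfrac{1}{R-a_o}P_S^{i,i+1}$, and you then propose to ``average over the near-coincidence statistics'' to extract the Lai--Sutherland operator. But the model you land on at that point is a gas with spin-dependent contact interactions---essentially the Yang--Gaudin/Lieb--Liniger--Heisenberg model---whose ground state energy is not rigorously known; this is precisely why Theorem \ref{TheoremLLHUpperBound} has no matching lower bound, and Appendix \ref{AppendixB} explains that even the validity of the Bethe ansatz for the Yang--Gaudin ground state is open. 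The step that makes the paper's lower bound work is different: in Lemma \ref{LemmaDysonSpin1/2Fermi} one is free to choose the Dyson radius separately in each spin channel, and taking radius $R$ for $P_S$ and $R-(a_o-a_e)$ for $P_A$ makes the two delta couplings \emph{equal}, both $2/(R-a_o)$. The spin dependence is thereby moved entirely into the amount of space removed around each coincidence point, so that in every spin sector $T$ one obtains the ordinary, spin-independent, exactly solvable Lieb--Liniger model, just on an interval of $T$-dependent length $L-(N-1)R+(N-1-n_{P_S}(T))(a_o-a_e)$. Expanding the known Lieb--Liniger energy in the interval length and resumming over the projections $T$ is what produces the operator $a_eP_A^{i,i+1}+a_oP_S^{i,i+1}$ and hence $\epsilon^{\textnormal{LS}}_J$. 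Without this device your reduction has no rigorously solvable reference model to compare against. Separately, your suggestion to run the lower bound in the matrix-valued framework does not work: the argument requires splitting the energy into pieces containing only $P_A^{i,i+1}\Psi$ or only $P_S^{i,i+1}\Psi$, which fails when the potential does not commute with these projections (see Footnote \ref{Footnote:lower bound for matrix potentials}); for the scalar $v$ of the theorem this split is available, and the matrix framework is not needed.
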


The theorem follows from an upper bound proved for Dirichlet boundary conditions (Corollary \ref{TheoremUpperBound2}, stated and proved in Section \ref{secupperbound}) and a lower bound proved for Neumann boundary conditions (Theorem \ref{TheoremLowerBoundSpin} in Section \ref{seclowerbound}). The upper bound is a corollary of a more general result that also covers matrix-valued potentials (Theorem \ref{TheoremUpperBoundSpinJFermi} in Section \ref{secupperbound}), which has the advantage that it allows us to disprove a conjecture by Girardeau (see Section \ref{resultsLLH}). The lower bound is only proved for scalar potentials. 

\begin{remark}
For bosons, we can prove a similar theorem, with an analogous proof. The only difference is that for bosons, symmetric spin ($P_S$) gets coupled to symmetric spatial behaviour ($a_e)$, while $P_A$ gets coupled to $a_o$, so that $a_e$ and $a_o$ swap roles. However, since the total ground state is known to be spatially symmetric \cite{eisenberg2002polarization}, this means that it was already covered by our result for the spinless Bose gas \cite{agerskov2022ground}: a fully polarized state has $\langle h_{\textnormal{LS}}\rangle=1$, so that $a_o+(a_e-a_o)\langle h_{\textnormal{LS}}\rangle=a_e$, and this gives the regular bosonic energy expansion.
\end{remark}

\begin{remark}
    The assumption that the potential is compactly supported is not necessary: we may allow for potentials decaying fast enough to satisfy $\int_{b}^\infty v(x)x^{6+\epsilon}d x<\infty$ for some $b\geq0$ and some $\epsilon>0$. In that case, the error term in \eqref{eqlower} is slightly more complicated. For simplicity, we only discuss the case of compactly-supported potentials in this paper, and refer to \cite{agerskov2022ground} for details on how to include potentials of non-compact support.
\end{remark}

\begin{remark}
\label{remarknopotential}
Note that the notion of diluteness needed for the theorem is that $\rho|a_e|$, $\rho a_o$ and $\rho R_0$ are all small (where the absolute values are needed for $a_e$ because it can be negative, and the error term involving $R_0$ cannot be avoided, see \cite[Remarks 2 and 3]{agerskov2022ground}). Since the free case $v=0$ has $a_e=-\infty$ (see \cite[Remark 4]{agerskov2022ground}), it cannot be considered dilute and $\rho|a_e|$ in the theorem is infinite.
\end{remark}

\begin{remark}
\label{remarkrangeintro}
The theorem covers scalar potentials $v\geq0$ with even and odd wave scattering lengths $a_e$ and $a_o$. These satisfy $a_e\leq a_o$ and $a_o\geq0$. The full range of scattering lengths defined by those two equations can be attained by varying the potential, see Remark \ref{remarkrange} in Appendix \ref{AppendixA}. 
\end{remark}

    Note that the Lai--Sutherland model is the antiferromagnetic Heisenberg model for spin $1/2$. It is solved exactly by the Bethe ansatz for any spin $J$ \cite{lai1974lattice,sutherland1975model}.\footnote{This means that the Bethe ansatz gives exact eigenstates of the Hamiltonian. Whether the Bethe ansatz states form a complete set is a more delicate problem, see \cite{takahashi1971one, cmp/1104252974}. Yang and Yang proved the validity of the Bethe ansatz for the ground state for the spin-$1/2$ Heisenberg chain in \cite{yang1966one}.} The thermodynamic ground state energy per site is \cite{sutherland1975model} 
     \begin{equation}
\lim_{N\to\infty}\epsilon^{\textnormal{LS}}_J=1-\frac{1}{2J+1}\left[\psi(1)-\psi(1/(2J+1))\right],
	\end{equation}
	where $\psi(x)=\frac{d}{d x}\ln\left(\Gamma(x)\right)$ is Euler's digamma-function. Hence, the thermodynamic energy per site of the spin-$1/2$ antiferromagnetic Heisenberg chain is $\lim\limits_{N\to\infty}\epsilon^\textnormal{LS}_{1/2}=1-\ln(2)$ (also see \cite{mattis2012theory,hult1938}).\\
	For a finite chain of length $N$, Lemma \ref{LemmaLaiSutherlandFiniteN} in Appendix \ref{AppendixB} shows that the ground state energy per site satisfies
	\begin{equation}
    \label{some310}
		\epsilon^{\textnormal{LS}}_J=1-\frac{1}{2J+1}\left[\psi(1)-\psi(1/(2J+1))\right]+\mathcal{O}(1/N).
	\end{equation}
Noting that $a_o\geq a_e$ and combining Theorem \ref{TheoremDiluteFermiGroundStateEnergy} and \eqref{some310} shows that the ground state energy of the dilute spin-$1/2$ Fermi gas is 
\begin{equation}
\label{energy3}
N\frac{\pi^2}{3}\rho^2\left(1+2\rho\ln(2) a_e+2\rho(1-\ln(2))a_o+\mathcal{O}\left((\rho \max(R_0,a_o,\abs{a_e}))^{6/5}+N^{-2/3}\right)\right),
\end{equation}
which proves a conjecture that we made in \cite[Remark 5]{agerskov2022ground}.

\begin{remark}
    Here are two exactly-solvable models that are covered by \eqref{energy3}.\\
    \textbf{The hard core potential} of radius $a>0$ has $a_e=a_o=a$, and the ground state is $(2J+1)^N$-fold degenerate, as there is one ground state for each spin state. The ground state energy in the dilute limit $\rho a\ll1$ is given by
    \[
N\frac{\pi^2}{3}\left(\frac{N}{L-Na}\right)^2=N\frac{\pi^2}{3}\rho^2\left(1-\rho a\right)^{-2}=N\frac{\pi^2}{3}\rho^2\left(1+2\rho a+\mathcal{O}\left((\rho a)^2\right)\right).
\]
\textbf{The (fermionic) Yang--Gaudin model} is a Fermi gas with point interactions, $v(x)=2c \delta(x)$. It was solved with the Bethe ansatz by Gaudin and Yang in \cite{gaudin1967systeme,yang1967some}, and the solution was generalized to any spin by Sutherland in \cite{sutherland1968further}. The ground state energy in the dilute limit $\rho/c\ll1$ is
\begin{equation}
\label{EqYGGroundStateEnergy}
\frac{\pi^2}{3}\rho^3\left(1+2\ln(2)\rho a_e+\mathcal{O}\left((\rho/c)^2\right)\right),
\end{equation}
where $a_e=-2/c$ is the even-wave scattering length of the interaction $v$ (note $a_o=0$).
\end{remark}

\subsection{Ideas used in the proof}
\label{proofideas}
The most important ingredient in the proof are generalizations of the minimizations \eqref{dyson1} for symmetric wave functions and \eqref{dyson2} for antisymmetric wave functions. Those were used to prove the first-order ground state energy expansions for dilute spinless Bose and Fermi gases in \cite{agerskov2022ground}. For clarity, we first summarize the strategies used in the spinless case (see \cite[Section 1.2]{agerskov2022ground} for a more extensive summary), followed by the adaptations needed for the case with spin.

The upper bound uses the trial state argument outlined in Section \ref{sec1}: for spinless bosons, the trial state departs from Girardeau's solution of the impenetrable Bose gas \eqref{someeq1123}, which is the free Fermi solution with absolute values to create a symmetric wave function. This produces the leading order in the expansion. The Girardeau wave function is then patched with the scattering solution \eqref{dyson1} on some suitable small scale, and for the closest pair of particles only, which, up to negligible error terms in the limits $\rho a_e\to0$ and $\rho R_0\to0$ (with $R_0$ the range of a compactly supported potential), gives our first-order result. 

The case of spin-$J$ fermions is similar, but it requires the spin-dependent patching discussed above in \eqref{newtrialstate} and \eqref{someeq320990}, leading to a first-order correction related to the Lai--Sutherland model \eqref{LSintro}. Given that we will actually prove the upper bound for the more general case of matrix-valued potentials, the proof requires a few additional ideas. The most important one is the matrix generalization of \eqref{dyson1} and \eqref{dyson2}: the matrix-valued scattering solution discussed in Section \ref{matrixpotentials}. The precise trial state is introduced in Definition \ref{DefinitionScatteringLengthMatrix} in Section \ref{sectrialstate}. The upper bound is found by estimating the energy of the trial state, and the proof involves calculations that reduce to the free Fermi ground state on which the trial state is based, as well as the matrix scattering solution of Definition \ref{DefinitionScatteringLengthMatrix}. 

The lower bound for spinless bosons in \cite{agerskov2022ground} again uses the minimization \eqref{dyson1}. An important idea (see \cite{dyson1957ground,lieb1998ground} in the 3D case) is that \eqref{dyson1} can be written as
\begin{equation}
	\label{eqidea}
	\int^{R}_{-R}2|\partial_x \psi|^2+v(x)|\psi(x)|^2dx\geq \frac{2}{R-a_e}\int (\delta_{R}(x)+\delta_{-{R}}(x))|\psi(x)|^2dx,
\end{equation}
for all $\psi\in H^1[-R,R]$ (note the boundary condition $\psi(R)=\psi(-R)=1$ in the minimization in  \eqref{dyson1}). This effectively lower bounds the kinetic and potential energy on $[-R,R]$ by a symmetric delta potential at radius $R$. To generalize this to an $N$-boson wave function, we repeatedly apply \eqref{eqidea} to obtain the symmetric delta potential for any neighbouring pairs of bosons. Inspired by \cite{lieb2004one}, we then throw away the regions where $|x_{i+1}-x_i|\leq R$ (which gives a lower bound since $v$ is positive) and as a consequence, the two delta functions at radius $|x_{i+1}-x_i|= R$ collapse into a single delta at $|x_{i+1}-x_i|= 0$, with value $4/(R-a_e)$. This corresponds to the Lieb--Liniger energy of some wave function on the reduced interval, which is of course lower bounded by the explicitly known Lieb--Liniger ground state energy \cite{lieb1963exact}. Heuristically, this corresponds to the following calculation: starting with an interval of length $L$, cut it back to length $L-(N-1)R$, so that the Lieb--Liniger ground state energy (see e.g. Lemma \ref{LemmaLiebLinigerNeumannLowerBound}) with $c=2/(R-a_e)$ and new density $N/(L-(N-1)R)=\rho(1+\rho R+\dots)$ gives a lower bound
\begin{equation}
\label{heurist}
N\frac{\pi^2}{3}\rho^2(1+2\rho R+\dots)(1-2\rho(R-a)+\dots)=N\frac{\pi^2}{3}\rho^2(1+2\rho a+\dots).
\end{equation}
As is done in \cite{lieb2004one}, we also show that the ground state wave function has little weight in the regions that get thrown out, so that \eqref{heurist} is accurate.

For spin-$J$ fermions, the strategy is the same, but some additional ideas are needed. The most important step is to combine symmetric scattering \eqref{dyson1} and antisymmetric scattering \eqref{dyson2} into a single (Dyson's) lemma involving the spin projections $P_S$ and $P_A$, see Lemma \ref{LemmaDysonSpin1/2Fermi} in Section \ref{secprep}. A summary of the additional ideas needed to solve the case with spin $J$ is given at the start of Section \ref{SectionLowerBoundProof}.

\subsection{Upper bound on the ground state energy of the Lieb--Liniger--Heisenberg model}
\label{resultsLLH}
The Lieb--Liniger--Heisenberg model is an effective model for spin-$1/2$ fermions in a tight waveguide introduced by Girardeau \cite{girardeau2004theory,girardeau2006ground}. Is has energy functional 
\begin{equation}
\label{functionalLLH}
     \mathcal{E}^{\textnormal{LLH}}(\Psi)=\int_{[0,L]^N} \sum_{i=1}^{N}\braket{\partial_i\Psi\vert \partial_i\Psi}+\sum_{i<j}\braket{\Psi \big| 2c'P_A^{i,j}\delta(x_i-x_j)+2cP_S^{i,j}\delta(x_i-x_j) \big|\Psi},
 \end{equation} 
 with $c',c>0$, and its spatially symmetric domain is
  \begin{equation}
  \label{domainLLH}
     \left\{\Psi\in \big(L^2[0,L]\big)^{\vee N}\otimes\big(\C^{2}\big)^{\otimes N}\big\vert \Psi\in H^1([0,L]^N; (\C^{2})^{\otimes N} )\textnormal{ and } \mathcal{E}(\Psi)<\infty \right\}.
 \end{equation}
 Note that the energy \eqref{functionalLLH} depends explicitly on spin, and this is the main motivation for studying the upper bound in the context of matrix-valued potentials in Section \ref{secupperbound}.
 The trial state argument is as sketched in the previous section, but of course we need to patch with a scattering solution that allows for matrix-valued potentials and ensure correct boundary conditions (see Definition \ref{DefinitionScatteringLengthMatrix} and Remarks \ref{remarkbcA} and \ref{finalremark}).
\begin{theorem}[Upper bound for the LLH model]
\label{TheoremLLHUpperBound}
 Consider a gas of spin-1/2 particles with energy functional \eqref{functionalLLH} with $c,c'>0$ and domain \eqref{domainLLH} with Dirichlet boundary conditions. Let $\rho=N/L$. Then, for $\rho/c$, $\rho/c'$ sufficiently small, there exists a constant $C_U>0$ such that the ground state energy $E^{\textnormal{LLH}}(N,L)$ satisfies
\[
	E^{\textnormal{LLH}}(N,L)\leq N\frac{\pi^2}{3}\rho^2\left(1+2\rho \epsilon^{\textnormal{LLH}}(c,c')+C_\text{U}\left((\rho/\min(c,c'))^{6/5}+N^{-1}\right)\right),
\]
where $\epsilon^{\textnormal{LLH}}(c,c')$ is the ground state energy of the spin chain Hamiltonian  
\[
H^{\textnormal{LLH}}=-\frac1N\sum_{i=1}^{N}\left(\frac{2}{c'}P_A^{i,i+1}+\frac{2}{c}P_S^{i,i+1}\right)=-\frac{2}{c'}+\frac1N\left(\frac{2}{c'}-\frac{2}{c}\right)H_{\textnormal{Heis}},
\]
and $H_{\textnormal{Heis}}=\sum_{i=1}^{N}P_S^{i,i+1}$ is the Heisenberg antiferromagnet.
\end{theorem}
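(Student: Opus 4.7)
The plan is to obtain Theorem \ref{TheoremLLHUpperBound} as a direct application of the general matrix-valued upper bound (Theorem \ref{TheoremUpperBoundSpinJFermi}), after writing the LLH interaction as a matrix-valued 2-body potential on $\C^2\otimes\C^2$ and computing its matrix scattering length via Definition \ref{DefinitionScatteringLengthMatrix}. The bracketed operator appearing in \eqref{functionalLLH},
\[
V(x)=2c'\,P_A\,\delta(x)+2c\,P_S\,\delta(x),
\]
commutes with the orthogonal decomposition $\mathds{1}=P_A+P_S$, so the matrix scattering problem at zero energy decouples into two independent scalar symmetric delta-scattering problems, with coupling $2c'$ on the $P_A$ component and $2c$ on the $P_S$ component. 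A direct Dyson-type minimization for $2c\delta$ with symmetric boundary condition $\psi(\pm R)=1$ yields the scalar scattering length $-2/c$, so the matrix scattering length is
\[
\mathcal{A}=-\tfrac{2}{c'}\,P_A-\tfrac{2}{c}\,P_S.
\]

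First I would apply Theorem \ref{TheoremUpperBoundSpinJFermi} with this $V$ and $\mathcal{A}$. Since the delta is a measure rather than an $L^\infty$ function, I would either take a mollified sequence $V_\varepsilon\to V$ and pass to the limit (using that the scalar delta scattering solution is available in closed form and that the matrix scattering length depends continuously on the potential), or verify that the framework of Section \ref{matrixpotentials} already accommodates positive matrix-valued measures, consistent with how the scalar potential is treated in \eqref{EqHamiltonian}. The resulting bound takes the form
\[
E^{\textnormal{LLH}}(N,L)\leq N\tfrac{\pi^2}{3}\rho^2\bigg(1+2\rho\,\Big\langle \tfrac{1}{N}\sum_{i=1}^{N} \mathcal{A}^{i,i+1}\Big\rangle_{\phi_{\textnormal{spin}}} +\textnormal{error}\bigg)
\]
for any spin wave function $\phi_{\textnormal{spin}}$. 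Minimising over $\phi_{\textnormal{spin}}$ produces $\epsilon^{\textnormal{LLH}}(c,c')$, the ground-state energy of $H^{\textnormal{LLH}}=\frac{1}{N}\sum_i \mathcal{A}^{i,i+1}$; the identity $H^{\textnormal{LLH}}=-\frac{2}{c'}+\frac{1}{N}\left(\frac{2}{c'}-\frac{2}{c}\right)H_{\textnormal{Heis}}$ in the statement is then immediate from $P_A+P_S=\mathds{1}$.

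A mild subtlety is that the LLH domain \eqref{domainLLH} is spatially symmetric, whereas Theorem \ref{TheoremUpperBoundSpinJFermi} is naturally stated on the fermionic sector. However, the trial state of Section \ref{sectrialstate} is constructed on the ordered sector $0\leq x_1<\cdots<x_N<L$ and extended by (anti)symmetry. The energy computation on this sector depends only on the interaction and on the boundary behaviour of the scattering patch at the diagonals, not on which extension is used, provided that the scattering solution used in the patch is chosen with the matching boundary condition. For LLH this is the \emph{symmetric} boundary condition $\psi(\pm R)=\mathds{1}$, which pairs with the Girardeau-type bosonic extension $\prod_{i<j}|\sin(\pi(x_j-x_i)/L)|$; the rest of the argument is structurally identical.

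The main obstacle is handling the singular delta interaction within the matrix framework. One must ensure the matrix scattering solution exists, is Lipschitz, and satisfies the a priori bounds used in the proof of Theorem \ref{TheoremUpperBoundSpinJFermi}; and one must verify that the error terms of that theorem, which depend on the potential range $R_0$, can be traded for the stated error $(\rho/\min(c,c'))^{6/5}+N^{-1}$ by optimising the patching scale $R$. The balancing is the same as for the scalar Lieb--Liniger $\delta$ treated in \cite{agerskov2022ground}: the subleading terms in the patched two-body energy scale like $\rho^2 R$, while the kinetic correction from truncating the Girardeau wave function scales like $1/R$ times the density of patched pairs, and equating them forces the exponent $6/5$. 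The matrix structure enters only through $\mathcal{A}$ and the first-order correction; it does not interact with this optimisation.
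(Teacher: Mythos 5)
Your proposal follows essentially the same route as the paper: the LLH bound is obtained by running the matrix-valued upper bound of Theorem \ref{TheoremUpperBoundSpinJFermi} with $V=(2c'P_A+2cP_S)\delta$, the symmetric boundary condition $F_0(\pm R)=I$ and a spatially symmetric extension of the trial state (exactly as indicated in Remarks \ref{remarkbcA} and \ref{finalremark}), yielding $\mathsf{A}=-\tfrac{2}{c'}P_A-\tfrac{2}{c}P_S$ and hence the spin chain $H^{\textnormal{LLH}}$. No mollification is needed: Assumption \ref{AssumptionUpperBound} already admits matrix-valued measures (write $V=2\min(c,c')\delta\, I+\tilde V$ with $\tilde V$ a positive matrix-valued measure of bounded variation), so the second branch of your hedge is the one that applies.
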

    In \cite{girardeau2006ground}, Girardeau used a trial state argument to obtain an upper bound on the Lieb--Liniger--Heisenberg ground state energy in the case $c>c'>0$,
    \begin{equation}
    \label{EqGirardeauUpperBound}
        E^{\textnormal{LLH}}(N,L)\leq N\frac{\pi^2}{3}\rho^2\left(1-\frac{4\rho}{\ln(2)  c+(1-\ln(2))c'}+o\left(\frac{\rho}{c'}\right)+\mathcal{O}(1/N)\right),
    \end{equation}
and he conjectured this might be sharp. 
However, $\epsilon^{\textnormal{LLH}}(c,c')=-2\left(\frac{\ln(2)}{c'}+\frac{1-\ln(2)}{c}\right)+\mathcal{O}(1/N)$ for $c>c'>0$ by the ground state energy of the Heisenberg model  (\eqref{some310} with $J=1/2$), so that it follows from the weighted harmonic-arithmetic mean inequality that Theorem \ref{TheoremLLHUpperBound} improves \eqref{EqGirardeauUpperBound}, disproving Girardeau's conjecture. The upper bound of Theorem \ref{TheoremLLHUpperBound} is presumably sharp, but we cannot prove this as our lower bound only covers scalar potentials.

\subsection{Open problems}
\label{openproblems}
Besides extending the lower bound to matrix-valued potentials and thus demonstrating that the expansion for the LLH model in Theorem \ref{TheoremLLHUpperBound} is correct, there are several other questions one could study. 

For example, in \cite[Section 1.5]{agerskov2022ground}, we conjectured that the energy expansions for dilute, spinless bosons and fermions \eqref{someeq1123} and \eqref{energ2} depend only on the scattering length to second order. In particular, for dilute bosons, we expect that the ground state energy is
\[
N\frac{\pi^2}{3}\rho^2\left(1+2\rho a_e+3(\rho a_e)^2+\dots\right)
\]
for a wide class of potentials. We can ask if this is also the case for the dilute spin-$1/2$ fermion result \eqref{energy3}. Can we simply follow the bosonic analogue and conjecture that the right expression is
\[
N\frac{\pi^2}{3}\rho^2\Big(1+2\big[\ln(2)\rho a_{e}+(1-\ln(2))\rho a_{o}\big]+3\big[\ln(2)\rho a_{e}+(1-\ln(2))\rho a_{o}\big]^2+\dots\Big)?
\]
In general, does $a_e$ get replaced by $a_e+(a_o-a_e)\epsilon^{\textnormal{LS}}_J$ like in Theorem \ref{TheoremDiluteFermiGroundStateEnergy},
\[
N\frac{\pi^2}{3}\rho^2\Big(1+2\rho[a_e+(a_o-a_e)\epsilon^{\textnormal{LS}}_J]+3\rho^2[a_e+(a_o-a_e)\epsilon^{\textnormal{LS}}_J]^2+\dots\Big),
\]
and likewise for the matrix-valued case?

Of course, the other open problems for dilute, spinless 1D bosons mentioned in \cite[Section 1.5]{agerskov2022ground}---studying momentum distributions and positive temperature for general potentials---extend to dilute gases with spin as well (see e.g. \cite{deuretzbacher2016momentum,yang2017one,kerr2024analytic} in the physics literature). It would be particularly interesting to see to what extent the physics is still described by the Lai--Sutherland model \eqref{spinchainmt}.

\section{Matrix-valued potentials}
\label{matrixpotentials}
In the previous sections, we presented our results for scalar-valued potentials, but the upper bound we prove in the next section also applies to matrix-valued potentials. To state that result, we need a definition of the scattering length that is more general than the approach taken in \eqref{dyson1} and \eqref{dyson2} in the introduction (explained in more detail in \cite[Lemma 2 and Appendix A]{agerskov2022ground}, also see \cite[Appendix C]{lieb2006mathematics} for a 3D gas). Background facts needed for the following definition have been collected in Appendix \ref{AppendixA}.

\begin{definition}[The scattering length matrix for spin-$J$ fermions]
\label{DefinitionScatteringLengthMatrix}
 Consider a symmetric, positive-semidefinite measure $V$ that takes values in the space of complex $(2J+1)^2\times (2J+1)^2$ matrices. Assume that $V$ commutes with $P_S$ and $P_A$ (meaning $V$ is block diagonal, leaving the symmetric and antisymmetric spin spaces invariant), and assume that $V$ is supported in $[-R_0,R_0]$ for some $R_0\geq 0$.
 
 Given $R\geq R_0$, let $F_0\in H^1([-R,R];\operatorname{Mat}((2J+1)^2,(2J+1)^2))$ be the unique solution (in the distributional sense) of the matrix equation 
 \begin{equation}
 \label{EqEvenScatteringEquation}
	-F_0''(x)+\frac{1}{2}V(x)F_0(x)=0,
	\end{equation}
on $[-R,R]$ with boundary conditions $ F_0(R)=I $ and $F_0(-R)=P_A-P_S$ (see Proposition \ref{CorollaryUniqueF0} in Appendix \ref{AppendixA}). The matrix-valued function $F_0$ is called the \textbf{scattering solution} of $V$, and it satisfies (see Proposition \ref{CorollaryUniqueF0} for proofs)
    	\begin{equation}
        \label{EqScatteringSolution}
		F_0(x)=\begin{cases}
(R-\mathsf{A})^{-1}(x-\mathsf{A}) &\ \ \  \text{if }\ \ R_0\leq x\leq R\\
(P_A-P_S)(R-\mathsf{A})^{-1}(-x-\mathsf{A})&\ \ \  \text{if } -R\leq x\leq -R_0
\end{cases}
\end{equation}
    for some matrix $\mathsf{A}$ called the \textbf{scattering length matrix} of $V$, and
	\begin{equation}\label{EqScatteringEnergy}
	\int_{[-R,R]} 2(F_0')^\ast F_0'+F_0^\ast VF_0 =4(R-\mathsf{A})^{-1}.
\end{equation} 
It follows from the above that that $\mathsf{A}$ is independent of $R\geq R_0$, that it is Hermitian and (by letting $R\to R_0$) that $\mathsf{A}\leq R_0$ (since $V$ is positive-semidefinite).
\end{definition}

\begin{remark}
\label{remarkbcA}
For spin-$J$ bosons, all we have to change in the definition above is the boundary condition. This becomes $ F_0(R)=I $ and $F_0(-R)=P_S-P_A$, so that there is symmetry upon mapping $R\to -R$ and exchanging the spins, rather than antisymmetry. 

In the Lieb--Liniger--Heisenberg model (Section \ref{resultsLLH}), the (spin-1/2) particles are spatially symmetric and the correct boundary condition to define the matrix scattering solution is $ F_0(R)=F_0(-R)=I $. Since the interaction is $V=2c'P_A+2cP_S$, this will mean that $F_0$ is constant on the spin singlet and triplet subspaces, with the spatial behaviour of the scattering solution of the Lieb--Liniger model (with constant $c'$ for the singlet and $c$ for the triplet).
\end{remark}

For the scalar potentials $V=vI$ that were discussed in the previous sections, it is easily seen that the scattering length matrix $\mathsf{A}$ splits in two diagonal blocks and may be written 
\begin{equation}\label{EqScalarPotentialScatteringLengthMatrix}
    \mathsf{A}=a_e P_A + a_o P_S,
\end{equation}
 where $a_e$ and $a_o$ are the even and odd-wave scattering lengths discussed in \eqref{dyson1} and \eqref{dyson2} in the introduction (see \cite[Lemma 2 and Appendix A]{agerskov2022ground} for more properties). 

In the general case, the scattering matrix $\mathsf{A}$ has two blocks corresponding to the symmetric and antisymmetric spin spaces. These can be diagonalized to give specific scattering lengths with corresponding (anti)symmetric eigenvectors. These might be linear combinations of different total spin. For example, if $J=1$, total spin 0 and 2 are both symmetric, and a linear combination of these might give an eigenvector of $\mathsf{A}$.

\begin{remark}
Generalizing the case $v=0$ discussed in Remark \ref{remarknopotential}, whenever $F_0$ has a constant eigenvector $\chi(x)=\chi$ (necessarily spin antisymmetric since it is spatially symmetric), we make this fit with \eqref{EqScatteringSolution} using the convention $\mathsf{A}\chi=-\infty \chi$. In that case, $\norm{\mathsf{A}}=\infty$ and the error terms in Theorems \ref{TheoremDiluteFermiGroundStateEnergy}, \ref{TheoremUpperBoundSpinJFermi}, and \ref{TheoremLowerBoundSpin} are infinite. 
\end{remark}

 \section{Upper Bound for matrix-valued potentials}
 \label{secupperbound}
  To state our upper bound in the most general way, we consider an even (repulsive) potential $V$ that is a positive semidefinite matrix-valued measure acting on $\C^{2J+1}\otimes \C^{2J+1}$. We study the Hamiltonian
  \begin{equation}
  \label{EqHamiltonianupperbound}
      H=-\sum_{i=1}^{N}\partial_i^2+\sum_{i<j}V_{ij},
  \end{equation}
  with corresponding energy functional \begin{equation}
  \label{functionalupperbound}
     \mathcal{E}(\Psi)=\int_{[0,L]^N} \sum_{i=1}^{N}\braket{\partial_i\Psi\vert \partial_i\Psi}+\sum_{i<j}\braket{\Psi \vert V_{ij} \vert \Psi},
 \end{equation}
 where $V_{ij}$ denotes the matrix potential $V(x_i-x_j)$ acting on the spins of particles $i$ and $j$ (the indices of the matrix will not be used much and these are not $i$ and $j$). The associated domain of $\mathcal{E}$ is 
 \begin{equation}
 \label{domainupperbound}
\left\{\Psi\in \wedge^NL^2([0,L];\C^{2J+1}) \big\vert \Psi\in H^1([0,L]^N; (\C^{2J+1})^{\otimes N}) \textnormal{ and } \mathcal{E}(\Psi)<\infty \right\}.
 \end{equation}
We will decompose the potential as $V=vI+\tilde{V}$, where $v$ is a positive scalar potential as in the previous sections. For technical reasons, $\tilde{V}$ is assumed to be of bounded variation, for which we recall the following definition.
 \begin{definition}
 \label{boundedvariation}
	We say that an $M\times M$ Hermitian matrix-valued measure $\tilde{V}$ has bounded variation if $\langle\chi|\tilde{V}|\chi\rangle$ is a finite signed measure for any $\chi\in \C^{M}$. In this case, we define the total variation of $\tilde{V}$ to be the finite\footnote{Let $\{e_\sigma\}_{\sigma=1}^M$ be the standard basis of $\C^M$. Note that $\|\tilde{V}\|(X)$ is finite by writing out $\xi_n$ in this basis
    \[
    \sup_{\substack{\xi_n\\ |\xi_n|=1}}|\langle\xi_n| \tilde{V}(X_n)\xi_n\rangle|\leq \sum_{1\leq\sigma,\sigma'\leq M}|\langle e_\sigma| \tilde{V}(X_n)|e_{\sigma'}\rangle|.
    \]
    and using polarization identities to reduce to the diagonal entries.
   } (positive) measure 
    \[
    \|\tilde{V}\|(X)\coloneqq\sup_{\{X_n\}}\sum_{n}\|\tilde{V}(X_n)\|=\sup_{\{X_n\}}\sum_{n}\sup_{\substack{\xi_n\\ |\xi_n|=1}}|\langle\xi_n| \tilde{V}(X_n)\xi_n\rangle|,
    \] where $X$ is a Borel-measurable set and the first supremum is over partitions of the set $X$ into a countable number of disjoint Borel-measurable subsets $X_n$.
\end{definition}
Recall that we abuse notation and write $V(x)dx$ when integrating with respect to the measure $V$, even if the measure does not allow for this. If we let $X$ be all of $\mathbb{R}$, the above definition says that $\int\|\tilde{V}\|<\infty$, and we will use this constant in some of our estimates. The precise assumptions on $V$ are as follows.

\begin{assumption}\label{AssumptionUpperBound}
$V$ is assumed to be an even, positive semidefinite, $(2J+1)\times (2J+1)$ matrix-valued measure that commutes with $P_S $ (and hence $P_A$), compactly supported in $[-R_0,R_0]$. It is also assumed that $V=vI+\tilde{V}$, where $v$ is any (positive) Borel measure and $\tilde{V}$ is a Hermitian matrix-valued Borel measure of bounded variation (with $\tilde{V}\geq -vI$). Denote by $\mathsf{A}$ the scattering length matrix of $V$ (see Definition \ref{DefinitionScatteringLengthMatrix}).
\end{assumption}

Our upper bound is given by the following theorem.
\begin{theorem}\label{TheoremUpperBoundSpinJFermi}
Consider a spin-$J$ Fermi gas with Hamiltonian \eqref{EqHamiltonianupperbound} and potential $V$ satisfying Assumption \ref{AssumptionUpperBound}. Let $\rho=N/L$. Then, for $\rho R_0$ and $\rho\|\mathsf{A}\|$ sufficiently small (where $\|\mathsf{A}\|$ denotes the operator norm of $\mathsf{A}$), there exist a constant $C_U>0$ such that the ground state energy with Dirichlet boundary conditions satisfies 
\begin{equation}
\label{bounddirichlet}
E^{\textnormal{Dirichlet}}_{J}(N,L)\leq N\frac{\pi^2}{3}\rho^2\left(1+2\rho \epsilon_{J}(\mathsf{A})+C_\text{U}\left((\rho \max(R_0,\norm{\mathsf{A}}))^{6/5}+N^{-1}\right)\right),
\end{equation}
where  $\epsilon_{J}(\mathsf{A})$ is the ground state energy of the spin chain Hamiltonian \footnote{As indicated in Footnote \ref{lastfootnote}, the boundary conditions of \eqref{lastspinchain} are irrelevant for the statement. We will work with periodic boundary conditions for the spin chain in the proof.}
\begin{equation}
\label{lastspinchain}
\frac{1}{N}\sum_{i=1}^{N}\mathsf{A}^{i,i+1},
\end{equation}
 with $\mathsf{A}^{i,i+1}$ acting as the scattering length matrix $\mathsf{A}$ on spins $i$ and $i+1$ and as the identity on the rest.
\end{theorem}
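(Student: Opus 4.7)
The plan is to exhibit a Jastrow-type trial state built from the free Fermi Dirichlet ground state and the matrix scattering solution $F_0$ of Definition \ref{DefinitionScatteringLengthMatrix}, tensored with a minimizer of the spin chain \eqref{lastspinchain}, and to evaluate its energy via the scattering identity \eqref{EqScatteringEnergy}. Let $\Phi_{\mathrm{free}}$ be the Slater determinant ground state of $N$ free spinless fermions on $[0,L]$ with Dirichlet boundary conditions (on the ordered sector it takes a Vandermonde-like form in $\cos(\pi x_i/L)$). On $0 \le x_1 < \cdots < x_N < L$, I would define
\[
\Psi = \Phi_{\mathrm{free}}(x_1,\ldots,x_N)\,\left(\prod_{i=1}^{N-1} G_i\right)\, \phi_{\mathrm{spin}},
\]
where $G_i$ acts as the identity on the spin space of particles $i,i+1$ when $|x_{i+1}-x_i|>R$, and otherwise $G_i$ replaces the nearest-neighbour factor $\sin(\pi(x_{i+1}-x_i)/L)$ of $\Phi_{\mathrm{free}}$ by $\sin(\pi R/L)\,F_0(x_{i+1}-x_i)$, with $F_0$ acting on the spin indices of $i,i+1$. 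The boundary data $F_0(R)=I$ and $F_0(-R)=P_A-P_S$ guarantee continuity across $|x_{i+1}-x_i|=R$ and the correct antisymmetric behaviour under exchange of adjacent particles, so $\Psi$ extends antisymmetrically to $[0,L]^N$. The spin factor $\phi_{\mathrm{spin}}\in(\mathbb{C}^{2J+1})^{\otimes N}$ is chosen as a minimizer of $\langle \phi|\tfrac{1}{N}\sum_i\mathsf{A}^{i,i+1}|\phi\rangle$, so this expectation equals $\epsilon_J(\mathsf{A})$ by construction.

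Next I would compute $\mathcal{E}(\Psi)/\|\Psi\|^2$ by reducing to a single nearest-neighbour scattering problem. The contribution of $\Phi_{\mathrm{free}}$ alone equals the free-Fermi energy $N\pi^2\rho^2/3$ (plus $O(N/L^2)$ edge effects). The correction is localized to the typically disjoint collision regions $\{|x_{i+1}-x_i|\le R\}$, and on each such region integration by parts together with \eqref{EqScatteringEnergy} yields $\int_{-R}^{R} 2(F_0')^*F_0' + F_0^*VF_0 = 4(R-\mathsf{A}^{i,i+1})^{-1}$. Subtracting the $V=0$ piece coming from $\Phi_{\mathrm{free}}$ on the same region (i.e.\ replacing $4(R-\mathsf{A})^{-1}$ by $4(R-\mathsf{A})^{-1}-4/R$) cancels the $R^{-1}$ singularity. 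Expanding in $\mathsf{A}/R$ and replacing the free-Fermi two-point density at coincident points by $\rho^2$ leaves a net correction
\[
N\tfrac{\pi^2}{3}\rho^2\cdot 2\rho\,\Big\langle \phi_{\mathrm{spin}}\Big|\tfrac{1}{N}\sum_i \mathsf{A}^{i,i+1}\Big|\phi_{\mathrm{spin}}\Big\rangle + (\text{subleading}) = N\tfrac{\pi^2}{3}\rho^2\cdot 2\rho\,\epsilon_J(\mathsf{A}) + (\text{subleading}),
\]
which is the announced leading correction in \eqref{bounddirichlet}.

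The subleading terms fall into four categories: (a) higher-order entries of the Neumann series $(R-\mathsf{A})^{-1}=R^{-1}\sum_{k\ge 0}(\mathsf{A}/R)^k$; (b) finite-$R$ corrections from integrating the two-point density against the scattering cell; (c) overlaps of adjacent collision regions, contributing at order $(\rho R)^2$; and (d) $O(1/N)$ effects from the Dirichlet edges and the choice of spin-chain boundary conditions. Optimizing $R$ between $R_0$ and $1/\rho$ to balance (a) against (b) produces the $(\rho\max(R_0,\|\mathsf{A}\|))^{6/5}$ relative error, as for the spinless gases treated in \cite{agerskov2022ground}. The main technical obstacle I expect is the matrix-valued bookkeeping: the Jastrow factors $G_i$ and $G_{i+1}$ share the spin index $\sigma_{i+1}$ and do not in general commute, so the reduction of the energy to a sum of independent scattering cells and the clean extraction of $\epsilon_J(\mathsf{A})$ require carefully tracking the order of the matrix factors. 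The block-diagonal structure of $V$ (commuting with $P_S$) and hence of $F_0$ is what makes the non-commutation manageable and prevents cross-terms from destroying the emergence of $\epsilon_J(\mathsf{A})$.
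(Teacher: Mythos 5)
Your overall strategy (free Fermi state patched with the matrix scattering solution, spin factor chosen as a minimizer of \eqref{lastspinchain}, leading correction extracted from the scattering identity \eqref{EqScatteringEnergy}, and $R$ optimized to get the $6/5$ exponent) is the right one, but your trial state is genuinely different from the paper's, and the difference is where the difficulty you flag at the end actually lives. You take the product $\prod_i G_i$ over \emph{all} nearest-neighbour pairs --- essentially the heuristic state \eqref{newtrialstate}--\eqref{someeq320990} from the introduction. The paper instead patches only the \emph{closest} pair, via $W^{\mathcal{R}}(x)=bF_{ij}(|x_i-x_j|)$, and repairs the resulting discontinuity at $\mathcal{R}(x)=\mathcal{R}_2(x)$ (where the identity of the closest pair changes) by interpolating with a hard-core profile through the cutoff $\eta$ in \eqref{EqTrialStateSpin1/2Fermi}. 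This choice is not cosmetic: with a single scattering factor, each derivative hits at most one matrix $W$, the energy reduces to one two-body cell plus explicit error integrals over $B_{12}^{23}$ and $B_{12}^{34}$ (Lemmas \ref{extralemma0}--\ref{LemmaSpin1/2EtaContribution1}), and the ordering problem for non-commuting factors sharing a spin index never arises. In your ansatz the cross terms $\langle \partial_i G_{i-1}\cdots\,|\,\partial_i G_i\cdots\rangle$ and the ordering of $\Pi_i G_i$ must be controlled; you correctly identify this as the main obstacle but do not resolve it, and the block-diagonality of $V$ with respect to $P_S,P_A$ does not by itself make $G_i$ and $G_{i+1}$ commute. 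A second, more local inaccuracy: the leading correction is not obtained by ``replacing the free-Fermi two-point density at coincident points by $\rho^2$''; the mechanism is that $\rho^{(2)}(x_1,x_2)\approx\frac{\pi^2}{3}\rho^4(x_1-x_2)^2$ (Lemma \ref{Lemma rho2 bound}), whose quadratic vanishing both cancels the $1/(x_1-x_2)^2$ coming from the Jastrow denominator and produces the prefactor $\frac{\pi^2}{3}\rho^2\cdot 2\rho$ in front of $\epsilon_J(\mathsf{A})$; using $\rho^2$ instead gives a divergent integral and the wrong coefficient.

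The more serious gap is the absence of the box argument. A single trial state on $[0,L]$ carries a relative error of order $N(\rho R)^3$ (from the normalization loss $1-\|\Psi\|^2\gtrsim N(\rho R)^3$ and from the multi-collision terms), which grows with $N$; no choice of $R$ in your items (a)--(d) removes it, so the $N$-uniform bound \eqref{bounddirichlet} does not follow from your error budget. The paper resolves this by concatenating $M=\lceil 2N(\rho b)^{3/2}\rceil$ Dirichlet boxes with $\sim(\rho b)^{-3/2}$ particles each, separated by gaps of width $b$, and by cutting the length-$N$ spin chain into translation-invariant pieces so that $\epsilon_J(\mathsf{A})$ survives the gluing up to an error $M\|\mathsf{A}\|/N$. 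Without this step (or an equivalent localization), your argument proves the theorem only for $N\lesssim(\rho b)^{-3/2}$.
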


\begin{remark}
\label{finalremark}
 A key feature of the proof of Theorem \ref{TheoremUpperBoundSpinJFermi} is that the  trial state is initially defined on the ordered sector $\{0\leq x_1<x_2<\dots<x_N<L\}$, and then extended to other sectors by symmetry. Hence, the proof can easily be adapted to other symmetries---such as spin-$J$ bosons or the set-up studied in the Lieb--Liniger--Heisenberg model (Section \ref{resultsLLH})---by extending in a different way to other orderings of the particles. The only change needed (to guarantee continuity of the trial state) is the choice of boundary conditions in the Definition \ref{DefinitionScatteringLengthMatrix} of $\mathsf{A}$---also see Remark \ref{remarkbcA}. 
 \end{remark}

 As a special case, for $V=vI$, we find the following corollary to Theorem \ref{TheoremDiluteFermiGroundStateEnergy} by \eqref{EqScalarPotentialScatteringLengthMatrix}.
\begin{corollary}[Upper bound in Theorem \ref{TheoremDiluteFermiGroundStateEnergy}]
\label{TheoremUpperBound2}
 Let $v$ be an even, positive interaction compactly supported in $[-R_0,R_0]$, with $a_e$ and $a_o$ its even and odd wave scattering lengths. Consider a spin-$J$ Fermi gas with energy functional \eqref{functional} and domain \eqref{domainfermions} with Dirichlet boundary conditions. Then, for $\rho R_0$, $\rho a_e$ and $\rho a_o$ sufficiently small, there exists a constant $C_U>0$ such that the ground state energy $E^{\textnormal{Dirichlet}}_{J}(N,L)$ satisfies
\begin{equation}
\label{upperboundequation}
E^{\textnormal{Dirichlet}}_{J}(N,L)\leq N\frac{\pi^2}{3}\rho^2\left(1+2\rho\left[a_e+(a_o-a_e)\epsilon^{\textnormal{LS}}_J\right]+C_\text{U}\big((\rho \max(R_0,a_o,|a_e|))^{6/5}+N^{-1}\big)\right),
\end{equation}
	where $\epsilon^{\textnormal{LS}}_J$ is the minimal energy per site of the spin-$J$ Lai--Sutherland model, i.e. the minimal expectation value of
    \[
h_{LS}=\frac{1}{N}\sum_{i=1}^{N}P_S^{i,i+1}.
\]
\end{corollary}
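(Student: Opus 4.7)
The plan is to derive Corollary \ref{TheoremUpperBound2} as an immediate specialization of Theorem \ref{TheoremUpperBoundSpinJFermi} to the case of a scalar potential, by taking $V=vI$ (so $\tilde V=0$ in the decomposition $V=vI+\tilde V$). Since $v\geq 0$ is a positive scalar measure supported in $[-R_0,R_0]$, $V=vI$ automatically commutes with $P_S$ and $P_A$, is positive semidefinite, and is compactly supported, so Assumption \ref{AssumptionUpperBound} is satisfied and Theorem \ref{TheoremUpperBoundSpinJFermi} applies.

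The main algebraic step is to translate the abstract bound \eqref{bounddirichlet} using the explicit form of the scattering length matrix. By \eqref{EqScalarPotentialScatteringLengthMatrix}, for the scalar potential $V=vI$ we have
\[
\mathsf{A}=a_e P_A+a_o P_S,
\]
an operator with spectrum $\{a_e,a_o\}$, hence $\|\mathsf{A}\|=\max(|a_e|,a_o)$ (recall $a_o\geq 0$ by Remark \ref{remarkrangeintro}). Therefore the smallness conditions in the corollary, $\rho R_0,\rho a_o,\rho|a_e|\ll1$, imply those of Theorem \ref{TheoremUpperBoundSpinJFermi}, and $\max(R_0,\|\mathsf{A}\|)=\max(R_0,a_o,|a_e|)$, which matches the error term in \eqref{upperboundequation}.

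The second step is to identify the effective spin chain. Using $P_A^{i,i+1}=\mathds{1}-P_S^{i,i+1}$, the Hamiltonian \eqref{lastspinchain} becomes
\[
\frac{1}{N}\sum_{i=1}^{N}\mathsf{A}^{i,i+1}=a_e\,\mathds{1}+(a_o-a_e)\,h_{LS},
\]
with $h_{LS}=\tfrac{1}{N}\sum_i P_S^{i,i+1}$ as in \eqref{spinchainmt}. Since $a_o-a_e\geq 0$ (Remark \ref{remarkrangeintro}), the ground state energy of this spin chain is obtained by minimizing $\langle h_{LS}\rangle$, giving
\[
\epsilon_J(\mathsf{A})=a_e+(a_o-a_e)\,\epsilon^{\textnormal{LS}}_J.
\]
Substituting this and the identification of $\|\mathsf A\|$ above into \eqref{bounddirichlet} yields \eqref{upperboundequation}.

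There is no genuine obstacle here: all the difficulty has been absorbed into the matrix-valued upper bound of Theorem \ref{TheoremUpperBoundSpinJFermi}, and what remains is the purely algebraic identification of $\mathsf{A}$, of $\|\mathsf{A}\|$, and of the scalar-spin-chain ground state energy via \eqref{EqScalarPotentialScatteringLengthMatrix}. The only thing worth double-checking is the reduction of the error term—specifically that $\max(R_0,\|\mathsf{A}\|)$ in Theorem \ref{TheoremUpperBoundSpinJFermi} gives exactly $\max(R_0,a_o,|a_e|)$—which is immediate from the spectral description of $\mathsf{A}$.
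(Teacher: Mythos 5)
Your proposal is correct and follows exactly the paper's route: the paper also obtains Corollary \ref{TheoremUpperBound2} by specializing Theorem \ref{TheoremUpperBoundSpinJFermi} to $V=vI$ and using \eqref{EqScalarPotentialScatteringLengthMatrix} to write $\mathsf{A}=a_eP_A+a_oP_S$, so that $\|\mathsf{A}\|=\max(|a_e|,a_o)$ and $\epsilon_J(\mathsf{A})=a_e+(a_o-a_e)\epsilon^{\textnormal{LS}}_J$. Your write-up just makes these algebraic identifications explicit where the paper leaves them implicit.
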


\subsection{Preparatory facts about the free Fermi ground state}
\label{freefermi}
Let $\Psi_F\in L^2([0,L]^N)$ be the spinless Dirichlet free Fermi ground state. Its $k$-particle reduced density matrix is given by (note the choice of normalization)
\[
	\gamma^{(k)}(x_1,...,x_k;y_1,...,y_k)=\tfrac{N!}{(N-k)!}\int_{[0,L]^{N-k}}\overline{\Psi_F(x_1,...,x_N)}\Psi_F(y_1,...,y_k,x_{k+1},x_N)d x_{k+1}\ldots d x_N.
\]
Similarly, we define the $k$-particle reduced density of $\psi_F$ by
\[
	\rho^{(k)}(x_1,...,x_k)=\gamma^{(k)}(x_1,...,x_k;x_1,...,x_k).
\]
We shall, given a function $f(x)$, in the following use the notation $f\vert_{x_1}=f(x)\vert_{x_1}=f(x_1)$ and $f\vert_{x_1}^{x_2}=f(x)\vert_{x_1}^{x_2}=f(x_2)-f(x_1)$.

We will need some facts about $\gamma^{(k)}$ and $\rho^{(k)}$ from \cite{agerskov2022ground}, and we add a new, very similar lemma. 
\begin{lemma}[Lemma 9 in \cite{agerskov2022ground}]\label{Lemma rho2 bound}
	\[
		\rho^{(2)}(x_1,x_2)=\left(\frac{\pi^2}{3}\rho^4+f(x_2)\right)(x_1-x_2)^2+\mathcal{O}(\rho^6(x_1-x_2)^4), 
	\] with $f$ a suitable function that satisfies $ \int_{[0,L]}\abs{f(x_2)}d x_2\leq \textnormal{ const. }\rho^3\ln(N) $.
\end{lemma}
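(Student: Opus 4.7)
The plan is to compute $\rho^{(2)}$ directly from the Slater-determinant structure of the free Fermi ground state and Taylor-expand near the diagonal $x_1=x_2$. The Dirichlet orbitals $\phi_n(x)=\sqrt{2/L}\sin(n\pi x/L)$ combined with the product-to-sum identity give the clean decomposition
\[
\gamma^{(1)}(x;y)=G_N(x-y)-G_N(x+y),\qquad \rho^{(1)}(x)=\rho-G_N(2x),
\]
where $G_N(u):=L^{-1}\sum_{n=1}^N\cos(n\pi u/L)$. The strategy is to exploit that the ``bulk'' piece $G_N(x_1-x_2)$ is translation-invariant and even, so it will reproduce the universal coefficient $\pi^2\rho^4/3$ that one would obtain for the infinite free Fermi line at Fermi momentum $k_F=\pi\rho$, while the ``mirror'' piece $G_N(x_1+x_2)$ encodes the boundary physics that must end up inside $f(x_2)$.

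First I would insert this decomposition into the Slater identity $\rho^{(2)}(x_1,x_2)=\rho^{(1)}(x_1)\rho^{(1)}(x_2)-\gamma^{(1)}(x_1;x_2)^2$ and expand in $s=x_1-x_2$. The $s^0$ and $s^1$ terms cancel automatically because Pauli forces $\rho^{(2)}$ to vanish quadratically on the diagonal. Using the exact value $G_N''(0)=-\pi^2 N(N+1)(2N+1)/(6L^3)=-\pi^2\rho^3/3+\mathcal{O}(\rho^2/L)$, the coefficient of $s^2$ in the expansion around $x_2$ can be written as $\pi^2\rho^4/3+f(x_2)$, with
\[
f(x_2) := -G_N(2x_2)\,\frac{\pi^2\rho^3}{3}\;-\;\rho^{(1)}(x_2)\,G_N''(2x_2)\;-\;G_N'(2x_2)^2\;+\;\mathcal{O}(\rho^3/L)
\]
collecting every contribution that would vanish on the infinite line. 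To control the remainder I would use the uniform pointwise bounds $|G_N^{(k)}(u)|\le \pi^k\rho^{k+1}/(k+1)$ on the higher Taylor coefficients, and pass to the midpoint coordinate $M=(x_1+x_2)/2$, in which $\rho^{(2)}$ is even in $s$ at fixed $M$: this removes the apparent cubic contribution and leaves a genuinely quartic $\mathcal{O}(\rho^6 s^4)$ error.

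The main obstacle is the integral bound $\int_0^L|f(x_2)|\,dx_2\le C\rho^3\ln N$. The naive pointwise estimate $|G_N^{(k)}(2x_2)|\lesssim \rho^{k+1}/(L\sin(\pi x_2/L))$ derived from the closed form $\sum_{n=1}^N\cos(n\theta)=\sin((N+\tfrac12)\theta)/(2\sin(\theta/2))-\tfrac12$ is too singular in $L^2$ near the Dirichlet endpoints to survive direct integration of $G_N'(2x_2)^2$. The fix is to invoke Parseval's identity on $[0,L]$, which gives the sharp estimates $\|G_N^{(k)}\|_{L^2}^2\sim \rho^{2k+1}$, so that Cauchy--Schwarz controls the bilinear pieces $\int |G_N G_N''|$ and $\int (G_N')^2$ by $\mathcal{O}(\rho^3)$. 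The remaining linear terms $\int|G_N(2x_2)|\,dx_2$ and $\int|G_N''(2x_2)|\,dx_2$ are handled by splitting the integration interval into a boundary layer of width $\mathcal{O}(1/N)$, where trivial uniform sum bounds apply, and the bulk, where the $1/\sin(\pi x_2/L)$ singularity is only logarithmically integrable---and it is precisely there that the $\ln N$ factor in the final bound is generated.
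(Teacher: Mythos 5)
Your route is essentially the one the paper (and \cite{agerskov2022ground}) takes: Wick's theorem reduces $\rho^{(2)}$ to $\gamma^{(1)}$ via $\rho^{(2)}=\rho^{(1)}(x_1)\rho^{(1)}(x_2)-\gamma^{(1)}(x_1;x_2)^2$, and a Taylor expansion near the diagonal is controlled by derivative bounds on $\gamma^{(1)}$. Your explicit decomposition $\gamma^{(1)}(x;y)=G_N(x-y)-G_N(x+y)$ is correct and makes the split into the universal bulk coefficient $\pi^2\rho^4/3$ (from $-\rho\,G_N''(0)$) and the boundary remainder $f(x_2)$ transparent; the vanishing of the $s^0$ and $s^1$ terms checks out, and your identification of $f$ is right. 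The mechanism for the $\ln N$ bound is also the correct one: Parseval gives $\|G_N^{(k)}(2\cdot)\|_{L^2([0,L])}^2\sim\rho^{2k+1}$, so Cauchy--Schwarz handles $\int|G_N G_N''|$ and $\int (G_N')^2$ at order $\rho^3$, while the linear terms $\rho^3\int|G_N(2x_2)|\,dx_2$ and $\rho\int|G_N''(2x_2)|\,dx_2$ produce the $\ln N$ from the logarithmically integrable $1/\sin(\pi x_2/L)$ singularity of the (differentiated) Dirichlet kernel, with a boundary layer of width $L/N$ controlled by the uniform bounds. One slip to fix there: the pointwise estimate should read $|G_N^{(k)}(2x_2)|\lesssim N^k/\bigl(L^{k+1}\sin(\pi x_2/L)\bigr)=\rho^k/\bigl(L\sin(\pi x_2/L)\bigr)$ (plus more singular, lower-order-in-$N$ terms), not $\rho^{k+1}/(L\sin(\pi x_2/L))$; the latter is dimensionally inconsistent and, taken at face value for $k=2$, would give $\int\rho^{(1)}|G_N''(2x_2)|\,dx_2\sim\rho^4\ln N$ rather than the needed $\rho^3\ln N$. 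Your own Parseval norms already confirm the correct power.

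The one genuine gap is the cubic term. Expanding around $x_2$ (not the midpoint), the $s^3$ coefficient is
\[
c_3(x_2)=-\rho^{(1)}(x_2)\,G_N'''(2x_2)+G_N'(2x_2)\bigl(G_N''(0)-G_N''(2x_2)\bigr),
\]
which does not vanish and is $\mathcal{O}(\rho^5)$ near the walls. Passing to the midpoint $M$ does give a purely even expansion in $s$, but the lemma pins the coefficient at $x_2$, and re-centering costs $\bigl(\tilde f(M)-\tilde f(x_2)\bigr)s^2=\mathcal{O}(\rho^5|s|^3)$ --- exactly $c_3(x_2)s^3$ back again. Since $\rho|s|\ll1$ in the regime where the lemma is applied, $\rho^5|s|^3$ dominates $\rho^6 s^4$, so your claim of a ``genuinely quartic'' pointwise error with coefficient $f(x_2)$ does not follow. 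To close this you should either state the expansion at the midpoint, or record the cubic term explicitly and observe that it is harmless: $c_3$ is built entirely from $G_N^{(k)}(2x_2)$ with $k\geq1$, so $\int_0^L|c_3|\,dx_2\lesssim\rho^4\ln N$, and $c_3(x_2)s^3$ is odd in $s$ at fixed $x_2$, so it cancels in the symmetric boundary evaluations $[\,\cdot\,]_{x_1=x_2-b}^{x_2+b}$ and integrates to zero against the even-in-$s$ weights appearing everywhere the lemma is invoked.
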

\begin{lemma}[from Lemma 10 in \cite{agerskov2022ground}]\label{LemmaDensityBounds}
	\[
		\begin{aligned}
            \rho^{(2)}(x_1,x_2)&\leq 8\pi^2\rho^4(x_1-x_2)^2\\
			\rho^{(3)}(x_1,x_2,x_3)&\leq \textnormal{const. }\rho^7(x_1-x_2)^2(x_2-x_3)^2\\
			\rho^{(4)}(x_1,x_2,x_3,x_4)&\leq \textnormal{const. }\rho^8(x_1-x_2)^2(x_3-x_4)^2\\
			\abs{\sum_{i=1}^{2}\partial_{y_i}^2\gamma^{(2)}(x_1,x_2;y_1,y_2)\rvert_{y=x}}&\leq \textnormal{const. } \rho^{6}(x_1-x_2)^2\\
			\abs{\partial_{y_1}^2\frac{\gamma^{(2)}(x_1,x_2;y_1,y_2)}{y_1-y_2}\Bigg\rvert_{y=x}}&\leq \textnormal{const. } \rho^{6}\abs{x_1-x_2}\\
			\abs{\sum_{i=1}^{3}\partial_{y_i}^2\gamma^{(3)}(x_1,x_2,x_3;y_1,y_2,y_3)\Bigg\vert_{y=x}}&\leq\textnormal{const. }\rho^9(x_1-x_2)^2(x_2-x_3)^2
		\end{aligned}
	\]
\end{lemma}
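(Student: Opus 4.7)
The proof exploits the Slater determinant structure of the Dirichlet free Fermi ground state $\Psi_F$, whose orthonormal orbitals are $\phi_l(x)=\sqrt{2/L}\sin(l\pi x/L)$, $l=1,\dots,N$. With the projection kernel $K(x,y)=\sum_{l=1}^N\phi_l(x)\phi_l(y)$, one has $\gamma^{(k)}(x_1,\dots,x_k;y_1,\dots,y_k)=\det(K(x_i,y_j))_{i,j=1}^k$, and Cauchy--Binet yields $\rho^{(k)}(x_1,\dots,x_k)=\sum_{1\le l_1<\dots<l_k\le N}|\det(\phi_{l_i}(x_j))|^2$. The central algebraic manoeuvre for the three density bounds is to extract only the antisymmetric factors the statement requires, by column operations rather than by writing the full Vandermonde. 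For example, in the $k=3$ case, subtracting column $1$ from column $2$ and column $2$ from column $3$ gives $\det(\phi_{l_i}(x_j))=(x_2-x_1)(x_3-x_2)\det(\phi_{l_i}(x_1),\phi_{l_i}[x_1,x_2],\phi_{l_i}[x_2,x_3])$, where $\phi[a,b]=(\phi(b)-\phi(a))/(b-a)$.

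For the density estimates I would then bound each entry of the remaining determinant uniformly by $\|\phi_l\|_\infty\le CL^{-1/2}$ on the value column and by $\|\phi_l'\|_\infty\le ClL^{-3/2}$ on the divided-difference columns, square, and sum over orbital indices. For $k=2$ I would start from the identity $\rho^{(2)}(x_1,x_2)=\tfrac12\sum_{l,m=1}^N[\phi_l(x_1)\phi_m(x_2)-\phi_l(x_2)\phi_m(x_1)]^2$ together with the mean-value estimate $|\phi_l(x_1)\phi_m(x_2)-\phi_l(x_2)\phi_m(x_1)|\le C(l+m)L^{-2}|x_1-x_2|$; summing over $l,m\le N$ yields the $8\pi^2\rho^4(x_1-x_2)^2$ bound with an explicit constant. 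For $k=4$ one extracts only $(x_1-x_2)$ from column $2$ and $(x_3-x_4)$ from column $4$ (not the other four Vandermonde factors), and the same counting produces $\rho^8(x_1-x_2)^2(x_3-x_4)^2$.

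For the kinetic bounds on $\gamma^{(k)}$ the key observation is that $\sum_i\partial_{y_i}^2$ applied to a Slater determinant replaces one orbital $\phi_l$ at a time by $-\phi_l''=(l\pi/L)^2\phi_l$. Hence $\sum_i\partial_{y_i}^2\gamma^{(k)}|_{y=x}$ admits a Cauchy--Binet-type expansion of exactly the same shape as $\rho^{(k)}$, but with one orbital per term carrying an extra factor of at most $(\pi\rho)^2$. This upgrades $\rho^4\to\rho^6$ for the $\gamma^{(2)}$ Laplacian and $\rho^7\to\rho^9$ for the $\gamma^{(3)}$ Laplacian, while preserving the antisymmetric $(x_i-x_j)^2$ factors.

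The real obstacle is the fifth estimate, involving $\partial_{y_1}^2[\gamma^{(2)}/(y_1-y_2)]|_{y=x}$: the antisymmetric factor $(y_1-y_2)$ must be extracted before differentiating, but it appears inside the $l<m$ sum rather than as a global prefactor. I would start from $\gamma^{(2)}(x_1,x_2;y_1,y_2)=\sum_{l<m}[\phi_l(x_1)\phi_m(x_2)-\phi_l(x_2)\phi_m(x_1)][\phi_l(y_1)\phi_m(y_2)-\phi_l(y_2)\phi_m(y_1)]$ and divide the $y$-bracket by $(y_1-y_2)$ via the integral representation $\int_0^1[\phi_l'(y_2+t(y_1-y_2))\phi_m(y_1-t(y_1-y_2))-\phi_l(y_2+t(y_1-y_2))\phi_m'(y_1-t(y_1-y_2))]\,dt$, which is smooth in $(y_1,y_2)$. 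Applying $\partial_{y_1}^2$ under the integral sign yields two further orbital derivatives, each contributing at most $Cl/L$ or $Cm/L$, while the $x$-bracket still supplies the linear factor $C(l+m)L^{-2}|x_1-x_2|$ by the same mean-value argument used for $\rho^{(2)}$. Summing over $l<m\le N$ gives the claimed $C\rho^6|x_1-x_2|$ bound.
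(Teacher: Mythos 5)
Your proposal is correct, and every one of the six estimates comes out with the right powers of $\rho$ and the right vanishing factors. It is, however, implemented at a different level than the paper's argument. The paper (following \cite[Lemma 10]{agerskov2022ground}) works with the two-point kernel: Wick's theorem writes $\gamma^{(k)}(x;y)=\det\big(\gamma^{(1)}(x_i,y_j)\big)$, all derivatives of $\gamma^{(1)}$ are bounded uniformly by the dimensionally correct power of $\rho$, and the $(x_i-x_j)$ factors are then extracted by Taylor-expanding the $k\times k$ determinant about coincident arguments (with the quotient rule handling the $1/(y_1-y_2)$ term). You instead decompose via Cauchy--Binet into a sum over orbital subsets, extract the antisymmetric factors by column operations/divided differences on the orbital matrix, and close the estimate with sup-norm bounds $\|\phi_l^{(j)}\|_\infty\leq CL^{-1/2}(l\pi/L)^j$ plus counting of the sums over $l_1<\dots<l_k\leq N$. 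The two routes rest on the same structural facts (determinantal antisymmetry plus power counting), but yours is more self-contained and yields explicit constants such as the $8\pi^2$ in the $\rho^{(2)}$ bound, at the cost of being tied to the explicit sine orbitals; the kernel-level argument is more compact and transfers verbatim to any quasi-free state whose kernel has the stated uniform derivative bounds. Two details in your write-up deserve the emphasis you give them and check out: the observation that $\sum_i\partial_{y_i}^2$ multiplies each Cauchy--Binet term by $-\sum_j(l_j\pi/L)^2$ (so the Laplacian bounds inherit the $(x_i-x_j)^2$ factors for free, gaining exactly $\rho^2$), and the integral representation of $D_{lm}(y)/(y_1-y_2)$, which is a clean substitute for the paper's quotient-rule-plus-Taylor treatment of the fifth estimate.
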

\begin{proof}
    (See \cite{agerskov2022ground} for more details.) Wick's theorem says that all free Fermi reduced densities matrices can be expressed in terms of $\gamma^{(1)}$. Also, we can verify that all derivatives of $\gamma^{(1)}$ can be bounded uniformly in the coordinates in terms of powers of $\rho$ (where the power can be obtained from dimensional analysis)---and this fact extends to all higher-order reduced density matrices by Wick's theorem. The estimates above then follow from using the quotient rule for the fractions (if present) and Taylor expansion, simplified by symmetry and dimensional analysis to determine the correct power of $\rho$ (see \cite[Lemma 10]{agerskov2022ground} for examples).
\end{proof}

Similarly, we obtain the following estimates. 
\begin{lemma}[New estimates, similar to Lemma \ref{LemmaDensityBounds}]
\label{LemmaDensityBounds2}
\[
\begin{aligned}
\partial_{y_i}\partial_{x_i}\frac{\gamma^{(4)}(y_1,y_2,y_3,y_4;x_1,x_2,x_3,x_4)}{\abs{x_2-x_1}\abs{y_2-y_1}}\Bigg \lvert_{y=x}&\leq \textnormal{const. } \rho^{8}\\
\abs{\partial_{y_i}\frac{\gamma^{(4)}(y_1,y_2,y_3,y_4;x_1,x_2,x_3,x_4)}{\abs{x_2-x_1}\abs{y_2-y_1}}\Bigg \lvert_{y=x}}&\leq \textnormal{const. } \rho^{8}\abs{x_3-x_4}\\
\abs{\sum_{i=1}^{4}\partial_{y_i}^2\gamma^{(4)}(y_1,y_2,y_3,y_4;x_1,x_2,x_3,x_4) \Bigg \lvert_{y=x}}&\leq \textnormal{const. } \rho^{10} (x_1-x_2)^2(x_3-x_4)^2\\
\partial_{y_i}\partial_{x_i}\frac{\gamma^{(3)}(y_1,y_2,y_3;x_1,x_2,x_3)}{\abs{x_2-x_1}\abs{y_2-y_1}}\Bigg \lvert_{y=x}&\leq \textnormal{const. } \rho^{7}\\
\abs{\partial_{y_i}\frac{\gamma^{(3)}(y_1,y_2,y_3;x_1,x_2,x_3)}{\abs{x_2-x_1}\abs{y_2-y_1}}\Bigg \lvert_{y=x}}&\leq \textnormal{const. } \rho^{7}\abs{x_2-x_3}
\end{aligned}
\]
\end{lemma}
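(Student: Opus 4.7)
The plan is to follow the same scheme as the proof of Lemma \ref{LemmaDensityBounds}: combine Wick's theorem, fermionic antisymmetry, and dimensional analysis, and then Taylor-expand on the diagonals where antisymmetry forces vanishing. By Wick's theorem, for $k = 3, 4$,
\begin{equation*}
\gamma^{(k)}(y;x) = \det\bigl[\gamma^{(1)}(y_i;x_j)\bigr]_{1 \leq i, j \leq k},
\end{equation*}
so $\gamma^{(k)}$ is separately antisymmetric in the $y$-arguments and in the $x$-arguments, and vanishes (to first order) on every hyperplane $\{y_i = y_j\}$ and $\{x_i = x_j\}$. By multilinearity of the determinant, I would pull out specific linear factors and write $\gamma^{(k)}(y;x) = (y_2-y_1)(x_2-x_1)\tilde{\gamma}^{(k)}(y;x)$ with $\tilde{\gamma}^{(k)}$ smooth. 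Since $\gamma^{(k)}/(|y_2-y_1||x_2-x_1|) = \operatorname{sgn}(x_2-x_1)\operatorname{sgn}(y_2-y_1)\tilde{\gamma}^{(k)}$ and the sign factors are locally constant away from $y_1=y_2$ and $x_1=x_2$, all derivatives of the ratio reduce (up to an overall sign at the point of evaluation) to derivatives of $\tilde{\gamma}^{(k)}$.

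Next, dimensional analysis fixes the $\rho$-weight of every derivative. Since the only length scale in the problem is $\rho^{-1}$, every mixed partial derivative of $\gamma^{(1)}$ of total order $m$ is bounded uniformly by a constant times $\rho^{1+m}$; by the Leibniz rule and the determinant expansion, this propagates to $|\partial^\alpha \gamma^{(k)}| \leq C\rho^{k+|\alpha|}$ and, after the two length factors have been divided out, to $|\partial^\alpha \tilde{\gamma}^{(k)}| \leq C\rho^{k+2+|\alpha|}$. Taking $|\alpha|=2$ immediately yields the first and fourth estimates, $\rho^{4+2+2}=\rho^8$ for $k=4$ and $\rho^{3+2+2}=\rho^7$ for $k=3$.

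For the second and fifth estimates, one extracts one extra vanishing factor from a further pair antisymmetry of $\gamma^{(k)}$: the swap $x_3\leftrightarrow x_4$ for $k=4$ and the swap $x_2\leftrightarrow x_3$ for $k=3$. In each case $\gamma^{(k)}$ vanishes on the corresponding hyperplane, and because the denominator $|x_2-x_1||y_2-y_1|$ does not involve the variables being swapped and $\partial_{y_i}$ commutes with permutations of the $x$'s, this vanishing is inherited by $\partial_{y_i}\tilde{\gamma}^{(k)}$. Setting $y=x$ (which preserves the vanishing on that diagonal) and Taylor-expanding at the diagonal produces the extra factor $|x_3-x_4|$ (resp.\ $|x_2-x_3|$), and by dimensional analysis raises the power of $\rho$ by one, yielding $\rho^8|x_3-x_4|$ and $\rho^7|x_2-x_3|$. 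The third estimate uses two simultaneous vanishings: $F(y;x) := \sum_{i=1}^4 \partial_{y_i}^2 \gamma^{(4)}(y;x)$ remains antisymmetric in $y_1,y_2$, in $y_3,y_4$, in $x_1,x_2$, and in $x_3,x_4$ (the symmetric operator $\sum_i\partial_{y_i}^2$ preserves antisymmetry in the $y$'s), so pulling out the four corresponding linear factors gives $F(y;x)=(y_1-y_2)(x_1-x_2)(y_3-y_4)(x_3-x_4)D(y;x)$ with $D$ smooth, and restricting to $y=x$ produces $F(x;x)=(x_1-x_2)^2(x_3-x_4)^2 D(x;x)$ with $|D(x;x)|\leq C\rho^{10}$ by dimensional analysis.

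The main obstacle, as in Lemma \ref{LemmaDensityBounds}, is careful bookkeeping: at each step one has to track which pair antisymmetries of $\gamma^{(k)}$ survive the division by $|x_2-x_1||y_2-y_1|$ and the particular partial derivatives being applied, and then identify the correct diagonal on which to Taylor-expand. The powers of $\rho$ themselves are automatic from dimensional analysis once the correct length factors have been identified.
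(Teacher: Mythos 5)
Your proposal is correct and follows essentially the same route as the paper, which obtains these bounds exactly as in Lemma \ref{LemmaDensityBounds}: Wick's theorem, uniform bounds $|\partial^\alpha\gamma^{(1)}|\leq C\rho^{1+|\alpha|}$ from dimensional analysis, and Taylor expansion after factoring out the linear factors forced by antisymmetry. One small slip: for the $k=3$ estimate your justification ``the denominator does not involve the variables being swapped'' fails (the swap $x_2\leftrightarrow x_3$ does touch $|x_2-x_1|$), but the needed vanishing of the quotient on $\{x_2=x_3\}$ still holds because the numerator vanishes there while the denominator is generically nonzero, so the Taylor expansion goes through unchanged.
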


\subsection{Constructing a trial state}
\label{sectrialstate}
\subsubsection{Notation}
\label{notation}
We will use $|.|$ for absolute values and the vector 2-norm in $\otimes^N\mathbb{C}^{2J+1}$, and $\langle.|.\rangle$ for its inner product. Let $\|\mathsf{A}\|$ denote the corresponding matrix norm of the scattering length matrix $\mathsf{A}$, which is equal to the spectral radius since $\mathsf{A}$ is Hermitian. Definition \ref{boundedvariation} and the remark below it discuss the meaning of $\int\|\tilde{V}\|$.

As explained in the introduction and Section \ref{proofideas}, the ground state energy of a spinless gas in the dilute limit can be approximated by making use the free Fermi gas when particles are far apart, adjusted with the 2-body scattering solution when they are close \cite{agerskov2022ground,agerskov2023one}. We denote by $b>R_0$ the range within which we will alter the free Fermi state to account for scattering effects. This will be chosen appropriately in the proof of Theorem \ref{TheoremUpperBoundSpinJFermi} in Section \ref{proofupperbound}. Let
\[
\begin{aligned}
\mathcal{R}(x)&:=\min_{i<j}\abs{x_i-x_j}\\
\mathcal{R}_2(x)&:=\min_{\{i,j\}\neq \{k,l\}}\max(\abs{x_i-x_j},\abs{x_k-x_l})
\end{aligned}
\]
denote the closest and second closest distance between two particles, respectively. 

For particles $i$ and $j$, we define\footnote{Note the absolute values: this is convenient for our method, and it makes no difference since we will define the trial state on $0\leq x_1<\dots<x_N<L$ only and extend antisymmetrically. We will often use the scattering energy \eqref{EqScatteringEnergy}, but this is unaffected by the sign differences created by the absolute values. 
}
\begin{equation}
\label{Wij}
W_{ij}(x)\coloneqq bF_{ij}(\abs{x_i-x_j}), 
\end{equation}
where $F_{ij}$ is the matrix scattering solution from Definition \ref{DefinitionScatteringLengthMatrix} with $R$ replaced by $b\geq R_0$, acting on the spin indices of the particles with coordinates $i$ and $j$. We would also like to have this definition for the closest pair: say particles $i$ and $j$ are the closest pair in a given configuration $x$, i.e.\ $\mathcal{R}(x)=\abs{x_i-x_j}$, define 
\begin{equation}
\label{WR}
W^\mathcal{R}(x)\coloneqq W_{ij}(\abs{x_i-x_j})=bF_{ij}(\abs{x_i-x_j}).
\end{equation}
Similarly, let
 \begin{equation} 
\label{hcij}
 W_{\text{hc},ij}(x):=\begin{cases} 
 \ 0&\textnormal{   if }|x_j-x_i|<R_0\\
 \ b\frac{|x_j-x_i|-R_0}{b-R_0}\ I&\textnormal{   if }b\geq|x_j-x_i|\geq R_0
 \end{cases}
 \end{equation}
 be related to the scattering solution from Definition \ref{DefinitionScatteringLengthMatrix} for the hard-core potential with radius $R_0$. We again extend this definition to the closest pair,\footnote{For our proof to work, we could also choose the hard-core radius to be the spectral radius of the positive part of $\mathsf{A}$, $\|\mathsf{A}_{+}\|$. This possibility is important to apply the proof to non-compactly supported potentials.}
 \begin{equation} 
\label{hc}
 W_{\text{hc}}^\mathcal{R}(x):=\begin{cases} 
 \ 0&\textnormal{   if }\mathcal{R}(x)<R_0\\
 \ b\frac{\mathcal{R}(x)-R_0}{b-R_0}\ I&\textnormal{   if }b\geq\mathcal{R}(x)\geq R_0
 \end{cases}.
 \end{equation}
 Note that there is an important difference between \eqref{WR} and \eqref{hc}: the latter matrix acts as the identity on the spins, which means that \eqref{hc} is a continuous function in $x$ (given that $\mathcal{R}(x)$ is). However, \eqref{WR} acts non-trivially on the closest pair of spins, and so it may not be continuous whenever $\mathcal{R}(x)=\mathcal{R}_2(x)$, because which pair of particles is the closest may change at this point and the action of the scattering solution matrix will jump with it. 
 
 All this means that our preferred strategy---patching with the scattering solution \eqref{WR} for the closest pair, which was done for spinless bosons in \cite{agerskov2022ground}---leaves discontinuities. To solve this technical challenge, we introduce the continuous function \eqref{hc}, and the continuous and almost everywhere differentiable interpolation function 
\begin{equation}
\label{eta}
\eta(x)\coloneqq\begin{cases}
0&\text{ if } \mathcal{R}_2(x)\leq b\\
\frac{\mathcal{R}_2(x)}{b}-1 &\text{ if } b<\mathcal{R}_2(x)<2b\\
1 &\text{ if } \mathcal{R}_2(x)\geq 2b
\end{cases}.
\end{equation}
We use $\eta$ to define a continuous trial state \eqref{EqTrialStateSpin1/2Fermi} below, interpolating between the preferred, but discontinuous \eqref{WR} and the continuous \eqref{hc} at a negligible length scale (i.e.\ at the cost of an acceptable error). 

\subsubsection{The trial state} 
Let $\Psi_F$ be the spinless free Fermi state with Dirichlet boundary conditions (depending only on spatial coordinates). Using the notation introduced in the previous subsection, we define the (unnormalized) trial state to be as follows on the set $\{0\leq x_1<x_2<\ldots<x_N<L\}$,
\begin{equation}
\label{EqTrialStateSpin1/2Fermi}
\Psi_\chi(x):=\begin{cases}
\frac{\Psi_F(x)}{\mathcal{R}(x)}\left(\eta(x) W^{\mathcal{R}}(x)+(1-\eta(x))W_{\text{hc}}^\mathcal{R}(x)\right)\chi&\textnormal{   if }\mathcal{R}(x)<b\\
\Psi_F(x)\chi &\textnormal{   if }\mathcal{R}(x)\geq b
\end{cases},
\end{equation}
 where $\chi\in\bigotimes^N\C^{2J+1}$ is a fixed spin state to be chosen below in Section \ref{proofupperbound}.  Note that $W^{\mathcal{R}}$ and $W_{\text{hc}}^\mathcal{R}$ are matrices acting on $\chi$. The extension of $\Psi_\chi$ to other orderings of the coordinates $x_i$ is defined by antisymmetry in the combined space-spin variables, and we denote the extension by $\Psi_\chi$ as well. We will assume that $ \chi $ is translation symmetric (with periodic boundary conditions), meaning that it is an eigenvector of the spin translation operator.

This trial state is in the relevant domain \eqref{domainupperbound}: note that $ P_S^{i,j}\Psi_\chi\rvert_{x_i=x_j}=0 $ due to the boundary condition at $0$ satisfied by $W^{\mathcal{R}}$ and $W_{\text{hc}}^\mathcal{R}$, so that the antisymmetric extension of the part with symmetric spin is not discontinuous at $x_i=x_j$; also note that the discontinuity in $W^{\mathcal{R}}$ at $\mathcal{R}(x)=\mathcal{R}_2(x)$ discussed in the previous subsection is avoided because of the interpolation with $\eta$ and the continuity of $W_{\text{hc}}^\mathcal{R}$.

As was the case in \cite{agerskov2022ground,agerskov2023one}, the trial state given in \eqref{EqTrialStateSpin1/2Fermi} produces an error that grows with the particle number. This is not sufficient to prove Theorem \ref{TheoremUpperBoundSpinJFermi}. However, we can construct a full trial state on the set $\{0\leq x_1<x_2<\ldots<x_N<L\}$ by taking a product of wave functions of the form \eqref{EqTrialStateSpin1/2Fermi} defined on small, non-overlapping intervals (leaving some distance to avoid interactions between the intervals), and extending to $[0,L]^N$ by antisymmetry. The details are worked out in Section \ref{proofupperbound}.

\subsection{Proof of Theorem \ref{TheoremUpperBoundSpinJFermi}}
\subsubsection{Ingredients in the estimates}
Given the fact that the definition \eqref{EqTrialStateSpin1/2Fermi} of $\Psi_\chi$ involves $\min_{i<j}\abs{x_i-x_j}$, it is useful to define 
\[
\begin{aligned}
B&:=\{x\in [0,L]^N\ \vert\ \mathcal{R}(x)<b\}\\
B_{12}&:=\{x\in [0,L]^N\ \vert\ \mathcal{R}(x)=\abs{x_1-x_2}<b\},
\end{aligned}
\]
where $B_{12}$ will be convenient to use because of antisymmetry of $\Psi_\chi$.
The following domains of integration (all subsets of $[0,L]^N$) will be used in the proof.
\begin{equation}
\label{intdomains}
\begin{aligned}
B^{\geq}_{12}&:=B_{12}\cap \{\mathcal{R}_2(x)\geq 2b\}\\
B^{23}_{12}&:=B_{12}\cap\{\mathcal{R}_2(x)=\abs{x_2-x_3}<2b\}\\
B^{34}_{12}&:=B_{12}\cap\{\mathcal{R}_2(x)=\abs{x_3-x_4}<2b\}\\
A_{12}&:=\left\{\abs{x_1-x_2}<b\right\}\\
A_{12}^{23}&:=A_{12}\cap \{\abs{x_2-x_3}<2b\}\\
A_{12}^{34}&:=A_{12}\cap \{\abs{x_3-x_4}<2b\}
\end{aligned}
\end{equation} 
The distinction between the $B$'s is made because of the interpolation with $\eta$ in \eqref{EqTrialStateSpin1/2Fermi}. The $A$'s are convenient for calculations because they relax the requirements about being the (next) closest pair of particles. The energy density is positive so we often extend e.g.\ $B_{12}\subset A_{12}$ at the cost of an acceptable error.

Using the definition \eqref{Wij} of $W_{12}$, we also define
\begin{equation}
\label{psiij}
(\Psi)_{12}(x):=\frac{\Psi_F(x)}{\abs{x_2-x_1}}W_{12}(x_2-x_1)=\frac{\Psi_F(x)}{\abs{x_2-x_1}}b F_{12}(|x_2-x_1|),
\end{equation}
where $(\Psi)_{12}$ is now a matrix-valued function, related to $F_{12}$ (the matrix scattering solution of the potential acting on the spin indices of the particles with coordinates $1$ and $2$). At this point, it may be good to contrast this with the spinless free Fermi ground state $\Psi_F$, which is scalar valued, and our trial state $\Psi_\chi$, which is vector valued. 

Given that $\Psi_\chi$ was only made explicit on the set  $\{0\leq x_1<x_2<\ldots<x_N<L\}$ in \eqref{EqTrialStateSpin1/2Fermi}, we introduce notation for other `sectors'---sets that order the coordinates in a different way, 
\[
\{\sigma\}\coloneqq\{0\leq x_{\sigma_1}<x_{\sigma_2}<\ldots<x_{\sigma_N}<L\},
\]
with $\sigma=(\sigma_1,\dots,\sigma_N)$ a permutation that sends $(1,...,N)\mapsto (\sigma_1,...,\sigma_N)$. It will be useful to find an explicit expression for $\Psi_\chi$ for other particle orders as well. We work out the case of $B_{12}^{\geq}\cap\{\sigma\}$ here, and refer to these steps in other cases later. 

Note that by the definition  \eqref{EqTrialStateSpin1/2Fermi} of $\Psi_\chi$ and \eqref{WR} and \eqref{eta}, we have that $\Psi_\chi=(\Psi)_{12}\chi$ on the set $B_{12}^{\geq}\cap\{0\leq x_1<x_2<\ldots<x_N<L\}$ (this is the reason for using the symbol $(\Psi)_{12}$). Can we find a similar expression when the positions are ordered in a different way? It turns out that, on $B_{12}^{\geq}\cap\{\sigma\}$ (assuming $\sigma$ is such that the intersection is not empty\footnote{Note that since particles 1 and 2 are closest on $B_{12}^{\geq}$, they they must be next to each other on the line, and only permutations that respect this will give a non-zero intersection $B_{12}^{\geq}\cap \{\sigma\}$.}), we have that 
\begin{equation}
\label{psionsectors}
    \Psi_\chi(x)=(\Psi)_{12}\chi_\sigma(x)=\frac{\Psi_F(x)}{\abs{x_2-x_1}}W_{12}(x_2-x_1)\chi_\sigma,
\end{equation}
where $\chi_\sigma$ denotes the spin state $\chi$ with spins permuted by $\sigma$. To understand this, it is useful to consider an example for $N=4$. On $B_{12}^{\geq}\cap\{0\leq x_3\leq x_4\leq x_2\leq x_1<L\}$, the trial state is defined by ordering the spins and positions such that the positions are ascending (up to a sign that is $-$ in this case). Denoting by $\braket{s_1,s_2,s_3,s_4\vert \cdot}$ the spin coordinate function $\braket {s_1\otimes s_2\otimes s_3\otimes s_4 \vert \cdot}$ (where $s_1,...,s_4\in \C^{2J+1}$ are any set of basis states of the spin space), we find that, given that the particles with coordinates $x_1$ and $x_2$ form the closest pair (which, for the purpose of the second equality and using the definition  \eqref{EqTrialStateSpin1/2Fermi} of $\Psi_\chi$, are the third and fourth spins),
\begin{equation}
\label{examplepsi}
\begin{aligned}
\braket{s_1,s_2,s_3,s_4\vert\Psi_\chi(x_1,x_2,x_3,x_4)}&=-\braket{s_3,s_4,s_2,s_1\vert \Psi_\chi(x_3,x_4,x_2,x_1)}\\
&=-\frac{\Psi_F(x_3,x_4,x_2,x_1)}{\abs{x_1-x_2}}\braket{s_3,s_4,s_2,s_1\vert W_{34}(x_2-x_1)\chi}\\
&=\frac{\Psi_F(x_1,x_2,x_3,x_4)}{\abs{x_1-x_2}}\braket{s_3,s_4,s_1,s_2\vert W_{34}(x_2-x_1)\chi}\\
&=\frac{\Psi_F(x_1,x_2,x_3,x_4)}{\abs{x_1-x_2}}\braket{s_1,s_2,s_3,s_4\vert W_{12}(x_2-x_1)\chi_{\sigma=(3,4,2,1)}}\\
\end{aligned}
\end{equation}
A similar, but general argument gives \eqref{psionsectors}. From the third line, we can obtain an additional fact about inner products: on $B_{12}^{\geq}$, the particles with coordinates $x_1$ and $x_2$ are always next to each other because they form the closest pair. This means that in an inner product in $(\mathbb{C}^{2J+1})^{\otimes N}$, we can translate the corresponding spins (3 and 4 in the example) back to 1 and 2 (without doing anything with the spatial coordinates). Of course, we also need to translate the spins in $\chi$, but $\chi$ was assumed to be translation invariant. Thus, for example, a general version of this argument gives for all $x\in B^{\geq}_{12}\cap\{\sigma\}$ and any $1\leq i\leq N$,
\begin{equation}
\label{translation}
        \abs{\partial_i\Psi_{\chi}(x)}^2=\braket{\partial_i (\Psi)_{12}(x)\chi_{\sigma} \vert \partial_i (\Psi)_{12}(x)\chi_\sigma}=\braket{\partial_i (\Psi)_{12}(x)\chi \vert \partial_i (\Psi)_{12}(x)\chi}.
\end{equation}
Since the latter is independent of $\sigma$, the equality between the first and last expression actually holds on all of $B^{\geq}_{12}$. We will apply this idea to various inner products in the proof, such as the expression above. A similar argument can for example be used to show that for all $x\in B^{\geq}_{12}$,
\begin{equation}
\label{potenergyeq}
\braket{\Psi_\chi(x)|V_{12}(x)\Psi_\chi(x)}=\braket{(\Psi)_{12}(x)\chi|V_{12}(x)(\Psi)_{12}(x)\chi}.
\end{equation}

\subsubsection{Splitting the energy into parts and estimating them}
We can now start estimating. The quantity of interest is the energy \eqref{functionalupperbound} of the trial state $\Psi_\chi$ (we will later account for the norm of $\Psi_\chi$),
\[
\mathcal{E}(\Psi_\chi)=\int_{[0,L]^N}\sum_{i=1}^{N} \abs{\partial_i\Psi_\chi}^2+\sum_{1\leq i<j\leq N}\braket{\Psi_\chi \vert V_{ij}  \Psi_\chi}.
\]
Let $E_F=N\frac{\pi^2}{3}\rho^2(1+\mathcal{O}(1/N))$ denote the ground state energy of the spinless Dirichlet free Fermi
gas. By adding and subtracting $\int_B\sum_i\abs{\partial_i\Psi_F}^2$ and using antisymmetry in the second step, we find that
\begin{equation}\label{EqEnergy1Spin1/2}
\begin{aligned}
\mathcal{E}(\Psi_\chi)&= E_F + \int_{B}\sum_{i=1}^{N}\abs{\partial_i\Psi_{\chi}}^2+\sum_{1\leq i<j\leq N}\braket{\Psi_{\chi}\vert V_{ij}  \Psi_{\chi}}-\sum_{i=1}^{N}\abs{\partial_i \Psi_F}^2\\
&=E_F+\tbinom{N}{2}\int_{B_{12}}\sum_{i=1}^{N}\abs{\partial_i\Psi_{\chi}}^2+\sum_{1\leq i<j\leq N}\braket{\Psi_{\chi}\vert V_{ij} \Psi_{\chi}}-\sum_{i=1}^{N}\abs{\partial_i \Psi_F}^2,
\end{aligned}
\end{equation}
where we used that there are no interactions outside of $B$ because $b$ is bigger than the range of the potential $R_0$ by assumption, as well as the fact that $|\partial_i\Psi_\chi(x)|^2=|\partial_i\Psi_F(x)|^2$ outside of $B$ by the definition \eqref{EqTrialStateSpin1/2Fermi}.

We wish to integrate the last term in \eqref{EqEnergy1Spin1/2} over $A_{12}$ rather than $B_{12}$: note that $x\in A_{12}\setminus B_{12}$ implies $x\in A_{ij}$ for some $(i,j)\neq (1,2)$, so we can estimate
\[
\begin{aligned}
\int_{A_{12}\setminus B_{12}}\sum_{i=1}^{N}\abs{\partial_i\Psi_F}^2&\leq \tbinom{N-2}{2}\int_{(A_{12}\setminus B_{12})\cap A_{34}}\sum_{i=1}^{N}\abs{\partial_i\Psi_F}^2+2(N-2)\int_{(A_{12}\setminus B_{12})\cap A_{13}}\sum_{i=1}^{N}\abs{\partial_i\Psi_F}^2\\
&\leq \tbinom{N-2}{2}\int_{A_{12}\cap A_{34}}\sum_{i=1}^{N}\abs{\partial_i\Psi_F}^2+2N\int_{A_{12}\cap A_{13}}\sum_{i=1}^{N}\abs{\partial_i\Psi_F}^2.
\end{aligned}
\]
We use this estimate on \eqref{EqEnergy1Spin1/2} to rewrite the third term as integration over $A_{12}$. In the other two terms, we write the terms involving $\partial_1$, $\partial_2$ and $V_{12}$ separately and split some of the integrals, with equality, using the integration domains \eqref{intdomains} and antisymmetry to collect similar terms. This gives
\begin{equation}
\label{EqTrialStateEnergyUpperBound1}
\begin{aligned}
\mathcal{E}(\Psi_{\chi})\leq E_F+\tbinom{N}{2}\Bigg[&\int_{B^{\geq}_{12}}2\abs{\partial_1\Psi_{\chi}}^2+\braket{\Psi_\chi|V_{12}\Psi_\chi}-\int_{A_{12}}\sum_{i=1}^{N}\abs{\partial_i \Psi_F}^2\\
+&\int_{B^{\geq}_{12}}\sum^N_{i=3}\abs{\partial_i\Psi_{\chi}}^2+\int_{B_{12}}\sum_{\substack{1\leq i<j\leq N\\(i,j)\neq (1,2)}}\braket{\Psi_\chi|V_{ij}\Psi_\chi}\\
+&\tbinom{N-2}{2}\int_{B_{12}^{34}}\Big(\sum^N_{i=1}\abs{\partial_i\Psi_{\chi}}^2\Big)+\braket{\Psi_\chi|V_{12}\Psi_\chi}\\
+&2(N-2)\int_{B_{12}^{23}}\Big(\sum^N_{i=1}\abs{\partial_i\Psi_{\chi}}^2\Big)+\braket{\Psi_\chi|V_{12}\Psi_\chi}\\
+&\tbinom{N-2}{2}\int_{A_{12}\cap A_{34}}\sum_{i=1}^{N}\abs{\partial_i\Psi_F}^2+2N\int_{A_{12}\cap A_{13}}\sum_{i=1}^{N}\abs{\partial_i\Psi_F}^2\Bigg].
\end{aligned}
\end{equation} 
The following lemmas show that the last four lines all end up in the error term. The first three are proved in Section \ref{lemmas}, the last (concerning the free Fermi gas) was proved in \cite[Lemma 12]{agerskov2022ground}. 
\begin{lemma}
\label{extralemma0}
\[
\tbinom{N}{2}\int_{B^\geq_{12}}\sum^N_{i=3}\abs{\partial_i\Psi_{\chi}}^2\leq\textnormal{const. } E_F N(\rho b)^3
\]
\end{lemma}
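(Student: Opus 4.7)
The plan is to use that on $B_{12}^\geq$ particles $1$ and $2$ form the unique closest pair (and are therefore adjacent on the line), and that the matrix factor $F_{12}(|x_2-x_1|)$ in $(\Psi)_{12}$ depends only on $x_1,x_2$, so for $i\geq 3$ the derivative $\partial_i$ acts purely on the scalar prefactor $\Psi_F/|x_2-x_1|$. The bound then reduces to a weighted kinetic estimate for the scalar free Fermi state, which is handled by Lemma \ref{LemmaDensityBounds2}.

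First I would apply \eqref{psionsectors} on each contributing sector $\{\sigma\}$ and the translation-invariance of $\chi$ argument leading to \eqref{translation}, to obtain $|\partial_i\Psi_\chi(x)|^2=|\partial_i(\Psi)_{12}(x)\chi|^2$ on all of $B_{12}^\geq$. Since $\partial_i F_{12}(|x_2-x_1|)=0$ for $i\geq 3$, this equals $b^2|F_{12}(|x_2-x_1|)\chi|^2\cdot |\partial_i\Psi_F|^2/(x_2-x_1)^2$. The explicit form \eqref{EqScatteringSolution} together with $\|\mathsf{A}\|\ll b$ gives $\|F_{12}\|\leq \textnormal{const.}$ on $R_0\leq |x|\leq b$, and on $|x|\leq R_0$ the same bound is preserved by a standard energy estimate for the ODE \eqref{EqEvenScatteringEquation} using $V\geq 0$. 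Hence $|\partial_i\Psi_\chi|^2\leq \textnormal{const.}\,b^2 |\partial_i\Psi_F|^2/(x_2-x_1)^2$ on $B_{12}^\geq$.

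Next, enlarging $B_{12}^\geq\subset\{|x_1-x_2|<b\}$ (the integrand is nonnegative) and using antisymmetry of $\Psi_F$ in $x_3,\dots,x_N$ to equate the $N-2$ terms of the sum after integration, the definition of the 3-particle reduced density matrix yields
\begin{equation*}
\binom{N}{2}\sum_{i=3}^{N}\int_{[0,L]^{N-2}}\frac{|\partial_i\Psi_F|^2}{(x_2-x_1)^2}\,dx_3\cdots dx_N=\frac{1}{2}\int_0^L \partial_{y_3}\partial_{x_3}\frac{\gamma^{(3)}(x_1,x_2,x_3;x_1,x_2,y_3)}{(x_2-x_1)^2}\bigg|_{y_3=x_3}dx_3,
\end{equation*}
where pulling $(x_2-x_1)^{-2}$ under $\partial_{y_3}\partial_{x_3}$ is allowed since those derivatives do not touch $x_1,x_2$. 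The fourth estimate of Lemma \ref{LemmaDensityBounds2}, applied with $y_1=x_1$ and $y_2=x_2$ (so that $|x_2-x_1||y_2-y_1|=(x_2-x_1)^2$), bounds this integrand by $\textnormal{const.}\,\rho^7$. Integrating in $x_3$ gives a factor $L$, and integrating the resulting bound over the band $\{|x_1-x_2|<b\}\cap[0,L]^2$ contributes a further volume $\leq 2bL$, producing
\begin{equation*}
\binom{N}{2}\int_{B_{12}^\geq}\sum_{i=3}^N|\partial_i\Psi_\chi|^2\,dx\leq \textnormal{const.}\,b^2\cdot \rho^7 L\cdot 2bL=\textnormal{const.}\,b^3\rho^5 N^2=\textnormal{const.}\,E_F\,N(\rho b)^3,
\end{equation*}
using $E_F\sim N\rho^2$ and $L=N/\rho$.

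The main (minor) delicate point is the application of Lemma \ref{LemmaDensityBounds2} with a partial evaluation $y_1=x_1$, $y_2=x_2$ performed before differentiating in $y_3$ and setting $y_3=x_3$; this is legitimate precisely because the denominator depends only on $(x_1,x_2,y_1,y_2)$ while the derivatives are in $(x_3,y_3)$ only. A smaller bookkeeping item is the uniform bound on $\|F_{12}\|$ for $|x|<R_0$, but this follows immediately from the prescribed boundary data and $V\geq 0$.
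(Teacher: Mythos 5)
Your proof is correct, but it follows a genuinely different route from the paper's. The paper also starts from \eqref{translation}, extends to $A_{12}\supset B^{\geq}_{12}$ by positivity, and uses the fact that for $i\geq 3$ only the scalar factor $\Psi_F/|x_2-x_1|$ is differentiated; but it then integrates by parts in $x_i$ and invokes the eigenvalue equation $-\sum_i\partial_i^2\Psi_F=E_F\Psi_F$ to trade $\sum_{i\geq3}(-\partial_i^2)$ for $E_F+\sum_{i=1}^{2}\partial_i^2$, so that everything is controlled by the \emph{two}-particle objects $\rho^{(2)}$ and $\partial_{y_i}^2\gamma^{(2)}|_{y=x}$ from Lemma \ref{Lemma rho2 bound} and Lemma \ref{LemmaDensityBounds}, together with the convexity bound $|W_{12}\chi|^2\leq b^2$. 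You instead keep the derivatives where they are and express $\binom{N}{2}\sum_{i\geq3}\int|\partial_i\Psi_F|^2/(x_2-x_1)^2$ exactly through the off-diagonal \emph{three}-particle density matrix, then apply the fourth estimate of Lemma \ref{LemmaDensityBounds2}; your combinatorial factor $\tfrac12\cdot\tfrac{N!}{(N-3)!}=\binom{N}{2}(N-2)$ and the observation that the partial evaluation $y_1=x_1$, $y_2=x_2$ commutes with $\partial_{y_3}\partial_{x_3}$ are both correct, and the final count $b^3\rho^7L^2\sim E_FN(\rho b)^3$ matches. Your approach avoids the integration by parts and the eigenvalue equation, at the cost of leaning on the higher-order density estimates of Lemma \ref{LemmaDensityBounds2} (which the paper proves anyway for Lemma \ref{LemmaSpin1/2EtaContribution1}). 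One small refinement: for the uniform bound $|F_{12}(x)\chi|\leq\textnormal{const.}$ you do not need $\|\mathsf{A}\|\ll b$ on $[R_0,b]$ --- since $\mathsf{A}\leq R_0\leq x\leq b$, the explicit formula \eqref{EqScatteringSolution} already gives $\|(b-\mathsf{A})^{-1}(x-\mathsf{A})\|\leq1$ --- and on $|x|\leq R_0$ the cleanest justification is the convexity of $x\mapsto|F_{12}(x)\chi|^2$ from Footnote \ref{footconvexbound} rather than an unspecified ``energy estimate''; this also keeps the lemma free of the hypothesis $b>2\|\mathsf{A}\|$, which the statement does not assume.
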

\begin{lemma}
\label{extralemma}
Assuming $ b>2\max(\norm{\mathsf{A}},R_0)$, 
\[
\tbinom{N}{2}\int_{B_{12}}\sum_{\substack{1\leq i<j\leq N\\(i,j)\neq (1,2)}}\braket{\Psi_\chi|V_{ij}\Psi_\chi}\leq \textnormal{const. }E_FN\left(\rho b\right)^3\Big[1+\rho b^2\int \|\tilde{V}\|\Big]
\]
\end{lemma}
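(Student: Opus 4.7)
The plan is to prove the bound by showing its left-hand side is in fact identically zero: the trial state $\Psi_\chi$ vanishes pointwise on $B_{12}\cap\operatorname{supp}(V_{ij})$ for every $(i,j)\neq(1,2)$, so each integrand is zero and the inequality holds trivially against its nonnegative right-hand side. The key insight is that the interpolation in \eqref{EqTrialStateSpin1/2Fermi} was designed precisely to kill potential-energy contributions from configurations with two simultaneously-close pairs, so the lemma amounts to a cross-check that this design goal is met.

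Concretely, I would carry out three steps. First, pin down the geometry: since $V$ is supported in $[-R_0,R_0]$, any $x$ in the support of $V_{ij}$ satisfies $|x_i-x_j|\leq R_0$, and because $(1,2)$ is a closest pair on $B_{12}$, this forces $\mathcal{R}(x)=|x_1-x_2|\leq|x_i-x_j|\leq R_0$; in particular, since the hypothesis gives $b>2R_0$, we sit in the first branch of \eqref{EqTrialStateSpin1/2Fermi}. Second, use that the two distinct pairs $\{1,2\}$ and $\{i,j\}$ both have distance at most $R_0<b$ to conclude from the definition of $\mathcal{R}_2$ that $\mathcal{R}_2(x)\leq R_0<b$; by the definition \eqref{eta} of the interpolation function this gives $\eta(x)=0$, so the trial state reduces to $\Psi_\chi(x)=(\Psi_F(x)/\mathcal{R}(x))\,W_{\mathrm{hc}}^{\mathcal{R}}(x)\,\chi$. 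Third, invoke the definition \eqref{hc} of $W_{\mathrm{hc}}^{\mathcal{R}}$: it is zero on $\{\mathcal{R}<R_0\}$ and evaluates to $b(R_0-R_0)/(b-R_0)\,I=0$ at the threshold $\mathcal{R}=R_0$, so combined with $\mathcal{R}(x)\leq R_0$ from the first step we get $W_{\mathrm{hc}}^{\mathcal{R}}(x)=0$, and thus $\Psi_\chi(x)=0$ on the whole of $B_{12}\cap\operatorname{supp}(V_{ij})$.

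To close, since $V_{ij}$ is a positive semidefinite matrix-valued Borel measure and $\Psi_\chi$ is continuous (it lies in the form domain \eqref{domainupperbound} as verified after \eqref{EqTrialStateSpin1/2Fermi}), the pairing $\int_{B_{12}}\langle\Psi_\chi|V_{ij}\Psi_\chi\rangle$ is a nonnegative integral of a continuous function against a measure supported where the function vanishes, hence equals zero. Summing over the $O(N^2)$ pairs $(i,j)\neq(1,2)$ and multiplying by $\binom{N}{2}$ still gives zero, which is bounded by the nonnegative right-hand side of the lemma.

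The main obstacle, such as it is, is not computational but a careful case analysis: one must verify that the geometric constraint $\mathcal{R}\leq R_0$ coming from $(1,2)$ being closest, the constraint $\mathcal{R}_2\leq R_0<b$ coming from having a second close pair, and the vanishing of $W_{\mathrm{hc}}^{\mathcal{R}}$ at $\mathcal{R}=R_0$ all align on $B_{12}\cap\operatorname{supp}(V_{ij})$. The lemma is stated as a non-tight inequality with a convenient right-hand side (matching in form the other error terms collected in \eqref{EqTrialStateEnergyUpperBound1}), but the underlying structural fact is the pointwise vanishing of $\Psi_\chi$ on any three-body coincidence set, which is precisely what the interpolation with $W_{\mathrm{hc}}^{\mathcal{R}}$ was introduced for in Section \ref{notation}.
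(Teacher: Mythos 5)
Your proof is correct, and it takes a genuinely different route from the paper's. Your key observation — that on $B_{12}\cap\supp(V_{ij})$ with $\{i,j\}\neq\{1,2\}$ one has $\mathcal{R}(x)=\abs{x_1-x_2}\leq\abs{x_i-x_j}\leq R_0$ and hence $\mathcal{R}_2(x)\leq R_0<b$, so that $\eta(x)=0$ by \eqref{eta} and $W_{\text{hc}}^{\mathcal{R}}(x)=0$ by \eqref{hc}, forcing $\Psi_\chi(x)=0$ — is valid (and survives the antisymmetric extension to other sectors, since $\mathcal{R}$ and $\mathcal{R}_2$ are permutation invariant), so the left-hand side is in fact exactly zero. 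The paper does not use this vanishing at all: it splits $V=vI+\tilde{V}$, bounds $\abs{\tilde{W}_{12}\chi}^2\leq b^2$ (and, for the scalar part, by $\abs{W_{12}(x_i-x_j)\chi}^2$ via convexity and monotonicity), and integrates against the reduced densities $\rho^{(3)}$ and $\rho^{(4)}$, which yields a small but nonzero bound. Your argument is sharper and far more elementary for the lemma as stated; what the paper's longer route buys is robustness. First, the footnote attached to \eqref{hc} notes that the hard-core radius in $W_{\text{hc}}^{\mathcal{R}}$ may be taken to be $\norm{\mathsf{A}_{+}}$ rather than $R_0$ (needed for potentials without compact support), and in that variant $W_{\text{hc}}^{\mathcal{R}}$ no longer vanishes on $\supp(V)$, so the integrand is genuinely nonzero and an estimate of the paper's type becomes necessary. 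Second, the same estimating template is reused in Lemma \ref{LemmaSpin1/2EtaContribution1}, where on $B_{12}^{34}$ the pair $(3,4)$ need only satisfy $\abs{x_3-x_4}<2b$, so $\eta$ can be positive and no vanishing is available. Two small points: continuity of $\Psi_\chi$ does not follow from membership in the form domain \eqref{domainupperbound} (an $H^1$ function of $N$ variables need not be continuous) — what you actually need, and have, is that the pairing with the measure $V_{ij}$ is evaluated on the explicit formula \eqref{EqTrialStateSpin1/2Fermi}, which is continuous in the relative coordinate and vanishes identically on the set in question; and the interpolation with $W_{\text{hc}}^{\mathcal{R}}$ was introduced in Section \ref{notation} to repair the discontinuity of $W^{\mathcal{R}}$ when the closest pair changes, not specifically to kill these potential terms — their vanishing is a (pleasant) by-product rather than the stated design goal.
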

\begin{lemma}
\label{LemmaSpin1/2EtaContribution1}
Assuming $N(\rho b)^3<1$ and $b>2\max(\norm{\mathsf{A}},R_0)$,
\[
\begin{aligned}
\tbinom{N}{2}\tbinom{N-2}{2}\int_{B_{12}^{34}}\Big(\sum^N_{i=1}\abs{\partial_i\Psi_{\chi}}^2\Big)+\braket{\Psi_\chi|V_{12}\Psi_\chi}&\leq\textnormal{const. }E_FN\left(\rho b\right)^3\Big[1+\rho b^2\int \|\tilde{V}\|\Big]\\
\tbinom{N}{2}2(N-2)\int_{B_{12}^{23}}\Big(\sum^N_{i=1}\abs{\partial_i\Psi_{\chi}}^2\Big)+\braket{\Psi_\chi|V_{12}\Psi_\chi}&\leq\textnormal{const. }E_F\left(\rho b\right)^3\Big[1+\rho b^2\int \|\tilde{V}\|\Big]
\end{aligned}
\]
\end{lemma}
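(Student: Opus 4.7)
The plan is to mimic the bookkeeping used in Lemmas \ref{extralemma0} and \ref{extralemma}, but carefully track the additional interpolation factor $\eta$ and the fact that the integration now happens on the thin slabs $B_{12}^{34}$ and $B_{12}^{23}$, on which a \emph{second} pair of particles is forced to be within distance $2b$. On $B_{12}^{34}\cup B_{12}^{23}\subset B_{12}$, the trial state is
\[
\Psi_\chi(x)=\frac{\Psi_F(x)}{\mathcal R(x)}\Big(\eta(x)W^{\mathcal R}(x)+(1-\eta(x))W_{\textnormal{hc}}^{\mathcal R}(x)\Big)\chi,
\]
and I would first replace $\mathcal R(x)=|x_2-x_1|$ and $W^{\mathcal R}=W_{12}$ (and similarly use \eqref{psionsectors}/\eqref{potenergyeq} together with translation invariance of $\chi$ to flatten the dependence on the spatial sector), so that every inner product reduces to one with the matrix $F_{12}$ acting only on the first two spins. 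Under Assumption \ref{AssumptionUpperBound} and the hypothesis $b>2\max(\|\mathsf A\|,R_0)$, the scattering solution \eqref{EqScatteringSolution} satisfies $\|F_{12}(x)\|\leq C$ and $\|F_{12}'(x)\|\leq C/b$ on $[-b,b]$, so that $\|W^{\mathcal R}\|,\|W_{\textnormal{hc}}^{\mathcal R}\|\leq Cb$ and their derivatives are bounded by constants; the interpolation satisfies $|\eta|\leq 1$ and $|\partial_i\eta|\leq C/b$, supported where $b<\mathcal R_2(x)<2b$.

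For the kinetic term, I would apply the Leibniz rule to $\partial_i\Psi_\chi$ and square, which produces three qualitative pieces: derivatives falling on $\Psi_F/\mathcal R(x)$ (multiplied by a matrix of norm $\leq Cb$), derivatives falling on $\eta$ (multiplied by $W^{\mathcal R}-W_{\textnormal{hc}}^{\mathcal R}$, of norm $\leq Cb$, hence net pointwise factor $C$), and derivatives falling on $W^{\mathcal R},W_{\textnormal{hc}}^{\mathcal R}$ (pointwise factor $C$). After squaring and integrating out the spin degrees of freedom, these reduce to integrals of $|\Psi_F/\mathcal R|^2$ and $|\partial_i(\Psi_F/\mathcal R)|^2$ over $B_{12}^{34}$ or $B_{12}^{23}$. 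These in turn are controlled by the new bounds of Lemma \ref{LemmaDensityBounds2} (for $\gamma^{(4)}/(|x_2-x_1||y_2-y_1|)$ on $B_{12}^{34}$ and $\gamma^{(3)}/(|x_2-x_1||y_2-y_1|)$ on $B_{12}^{23}$), exactly as in the proofs of Lemmas \ref{extralemma0}--\ref{extralemma}. Crucially, the extra constraint defining $B_{12}^{34}$ restricts $x_3,x_4$ to a set of measure $\lesssim Lb$, giving an extra factor $\rho b$ compared with the analogous integral on $B_{12}^\geq$; the constraint defining $B_{12}^{23}$ restricts $x_3$ to a set of measure $\lesssim b$, giving the same extra factor $\rho b$ but absorbing one of the $N$'s from the combinatorial prefactor $2(N-2)\binom{N}{2}$. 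This explains the absence of the extra $N$ in the second estimate of the lemma, and in both cases the combined prefactor times the density/measure estimates yields the claimed $E_FN(\rho b)^3$ (respectively $E_F(\rho b)^3$), assuming $N(\rho b)^3<1$ so that the trial-state norm is $1+o(1)$.

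For the potential term $\langle\Psi_\chi|V_{12}\Psi_\chi\rangle$, I would write $V_{12}=v_{12}I+\tilde V_{12}$ as in Assumption \ref{AssumptionUpperBound}. Using \eqref{potenergyeq} and $\|W^{\mathcal R}\|,\|W_{\textnormal{hc}}^{\mathcal R}\|\leq Cb$ gives a pointwise bound
\[
\langle\Psi_\chi|V_{12}\Psi_\chi\rangle\leq Cb^2\,\frac{|\Psi_F(x)|^2}{\mathcal R(x)^2}\,\big(v(x_2-x_1)+\|\tilde V(x_2-x_1)\|\big).
\]
Integrating $v$ against $|\Psi_F/\mathcal R|^2$ and using that $v$ is repulsive and supported in $[-R_0,R_0]$ yields a contribution that is handled exactly as in the scalar case of Lemma \ref{extralemma} (the coefficient $\int v$ is controlled by the scattering-length identity \eqref{EqScatteringEnergy}, whose matrix version is valid under the hypotheses), while $\tilde V$ is handled by Definition \ref{boundedvariation}, replacing $\int v$ by $\int\|\tilde V\|$. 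Multiplying by the appropriate reduced-density estimate for the remaining $N-2$ (or $N-3$) coordinates and by the combinatorial prefactor gives the $\rho b^2\int\|\tilde V\|$ correction in the bracket.

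The main obstacle I expect is not any single estimate but the combinatorial bookkeeping: the product rule produces many cross terms (especially on $B_{12}^{23}$, where the derivative in $x_2$ simultaneously sees $\mathcal R$, $\eta$, and $W^{\mathcal R}$), and one must check that no term requires a higher-order density bound than those supplied by Lemmas \ref{LemmaDensityBounds} and \ref{LemmaDensityBounds2}. A minor subtlety is that on $B_{12}^{23}$ the closest and second-closest pairs share the particle $2$, so after extending integration from $B_{12}^{23}$ to $A_{12}^{23}$ one must use the three-particle density $\rho^{(3)}$ rather than $\rho^{(4)}$; this is why only $\binom{N}{2}2(N-2)\sim N^3$ appears in the second bound and the final estimate lacks the prefactor $N$.
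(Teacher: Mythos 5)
Your overall architecture coincides with the paper's: flatten the sector dependence so that $\Psi_\chi=\frac{\Psi_F}{\abs{x_2-x_1}}W^{34}_{12}\chi$ on $B^{34}_{12}$ (with the $\eta$-interpolation between $W_{12}$ and $W_{\mathrm{hc},12}$), extend to $A^{34}_{12}$ by positivity, control everything through the reduced-density bounds of Lemmas \ref{LemmaDensityBounds} and \ref{LemmaDensityBounds2}, split $V_{12}=v_{12}I+\tilde V_{12}$ and use \eqref{EqScatteringEnergy} for the scalar part and the total variation for $\tilde V$. Your power counting (the constraint $\abs{x_3-x_4}<2b$ buying a factor $\rho b$, and the $B^{23}_{12}$ case using $\rho^{(3)}$ so that one factor of $N$ disappears) is also correct. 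However, two of your pointwise bounds would fail for the class of potentials actually covered by Assumption \ref{AssumptionUpperBound}, where $V$ is only a positive matrix-valued \emph{measure}. First, the claim $\norm{F_{12}'(x)}\leq C/b$ on all of $[-b,b]$ (hence ``derivatives of $W^{\mathcal R}$ bounded by constants'') is unproven: inside $[-R_0,R_0]$ the derivative $F_0'\chi$ jumps by $\tfrac12V(\{x_0\})F_0(x_0)\chi$ across atoms of $V$, and in the matrix-valued case this jump is not controlled by $1/b$ even when $\norm{\mathsf{A}}$ and the scattering energy are (a strong atom acting in a direction nearly orthogonal to $F_0(x_0)\chi$ contributes little energy but a large jump). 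What does hold, and what the paper uses, is the integrated bound
\[
\int_{x_2-b}^{x_2+b}\abs{\partial_1 W^{34}_{12}\chi}^2\,dx_1\leq b^2\max\Bigl(\tfrac{2}{b-\norm{\mathsf{A}}},\tfrac{2}{b-R_0}\Bigr)\leq 4b,
\]
obtained by dropping the positive potential term in \eqref{EqScatteringEnergy}; this $L^2$-in-$x_1$ control suffices because the first two estimates of Lemma \ref{LemmaDensityBounds2} are precisely formulated as suprema over $x_1$ of the density factors.

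Second, your displayed potential bound $\braket{\Psi_\chi\vert V_{12}\Psi_\chi}\leq Cb^2\frac{\abs{\Psi_F}^2}{\mathcal R^2}\bigl(v+\norm{\tilde V}\bigr)$ discards the cancellation between $v$ and $\abs{W_{12}\chi}^2$, and the subsequent appeal to ``the coefficient $\int v$'' cannot work: $\int v$ is not controlled by the scattering length and is $+\infty$ for the hard core, which is included in the theorem. The identity \eqref{EqScatteringEnergy} controls $\int\braket{F\chi\vert VF\chi}$, i.e.\ $v$ weighted by $\abs{F\chi}^2$ (which is small exactly where $v$ is large), so one must keep $\braket{W_{12}(x_1-x_2)\chi\vert v(x_1-x_2)W_{12}(x_1-x_2)\chi}$ intact and integrate over $x_1$; only the $\tilde V$-part may be crudely bounded by $b^2\norm{\tilde V}$ via \eqref{estimate20023} (this is what the paper's add-and-subtract of $\braket{W_{12}\chi\vert\tilde V_{12}W_{12}\chi}$ accomplishes). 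A smaller gloss: for the kinetic terms with $i\geq5$ the derivative hits only $\Psi_F$, and converting $\int\sum_{i\geq5}\abs{\partial_i\Psi_F}^2g$ (with $g$ depending on $x_1,\dots,x_4$) into four-particle density-matrix quantities requires integrating by parts and using $-\sum_i\partial_i^2\Psi_F=E_F\Psi_F$; your appeal to Lemma \ref{LemmaDensityBounds2} presupposes this step without stating it. With these three substitutions your argument becomes the paper's proof.
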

\begin{lemma}[Lemma 12 in \cite{agerskov2022ground}]
\label{LemmaE2BoundSpin1/2}
Assuming $N(\rho b)^3<1$,
\[
\tbinom{N}{2}\Bigg[\tbinom{N-2}{2}\int_{A_{12}\cap A_{34}}\sum_{i=1}^{N}\abs{\partial_i\Psi_F}^2+2N\int_{A_{12}\cap A_{13}}\sum_{i=1}^{N}\abs{\partial_i\Psi_F}^2\Bigg]\leq \textnormal{const. }E_FN (\rho b)^3
\]
\end{lemma}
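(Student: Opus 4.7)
The plan is to reduce each integral to a bound on the reduced density matrices $\gamma^{(k)}$ and $\rho^{(k)}$ of $\Psi_F$, and then apply the pointwise decay estimates from Lemmas \ref{LemmaDensityBounds} and \ref{LemmaDensityBounds2} together with an elementary volume estimate of the restricted integration domain. The starting point is an identity for $k\in\{3,4\}$ obtained by integrating over the $N-k$ unconstrained coordinates. For $i\leq k$ I use $\abs{\partial_i\Psi_F}^2=\partial_{x_i}\partial_{y_i}[\bar\Psi_F(x)\Psi_F(y)]|_{y=x}$ and the fact that integrating out $x_{k+1},\dots,x_N$ produces $\tfrac{(N-k)!}{N!}\gamma^{(k)}$. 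For $i>k$ I integrate by parts in $x_i$ (no boundary term under Dirichlet conditions, since $x_i$ runs over all of $[0,L]$) and apply the eigenvalue equation $-\sum_{j=1}^{N}\partial_j^2\Psi_F = E_F\Psi_F$. Combining the two contributions,
\[
\int_{[0,L]^{N-k}}\sum_{i=1}^{N}\abs{\partial_i\Psi_F}^2\,dx_{k+1}\cdots dx_N = \tfrac{(N-k)!}{N!}\left(E_F\,\rho^{(k)}(x_1,\dots,x_k)+\sum_{i=1}^{k}\left[\partial_{x_i}\partial_{y_i}+\partial_{y_i}^{2}\right]\gamma^{(k)}(x;y)\Big|_{y=x}\right).
\]

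For the $A_{12}\cap A_{13}$ integral I apply this with $k=3$ and combine the bounds $\rho^{(3)}\leq\textnormal{const.}\,\rho^{7}(x_1-x_2)^2(x_2-x_3)^2$ and $\abs{\sum_{i}\partial_{y_i}^{2}\gamma^{(3)}|_{y=x}}\leq\textnormal{const.}\,\rho^{9}(x_1-x_2)^2(x_2-x_3)^2$ from Lemma \ref{LemmaDensityBounds} with an analogous bound $\sum_{i}\partial_{x_i}\partial_{y_i}\gamma^{(3)}|_{y=x}\leq\textnormal{const.}\,\rho^{9}(x_1-x_2)^2(x_2-x_3)^2$. Since $\int_{A_{12}\cap A_{13}}(x_1-x_2)^2(x_2-x_3)^2\,dx_1dx_2dx_3\leq\textnormal{const.}\,Lb^{6}$, multiplying by $\binom{N}{2}\cdot 2N\cdot(N-3)!/N!=\mathcal{O}(1)$ and using $E_F=\mathcal{O}(N\rho^2)$, the total is $\leq\textnormal{const.}\,N^{2}\rho^{8}b^{6}=\textnormal{const.}\,E_F\cdot N(\rho b)^{6}\leq\textnormal{const.}\,E_F N(\rho b)^3$ whenever $\rho b\leq 1$. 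The $A_{12}\cap A_{34}$ integral is handled similarly with $k=4$, using the $\rho^{(4)}$ bound from Lemma \ref{LemmaDensityBounds}, the bound on $\sum_i\partial_{y_i}^{2}\gamma^{(4)}|_{y=x}$ from Lemma \ref{LemmaDensityBounds2}, and an analogous bound for $\sum_i\partial_{x_i}\partial_{y_i}\gamma^{(4)}|_{y=x}$. With $\int_{A_{12}\cap A_{34}}(x_1-x_2)^2(x_3-x_4)^2\,dx\leq\textnormal{const.}\,L^{2}b^{6}$ and $\binom{N}{2}\binom{N-2}{2}(N-4)!/N!=\mathcal{O}(1)$, the total is $\leq\textnormal{const.}\,N^{3}\rho^{8}b^{6}=\textnormal{const.}\,E_F N(\rho b)^{3}\cdot N(\rho b)^{3}\leq\textnormal{const.}\,E_F N(\rho b)^3$ under the hypothesis $N(\rho b)^3<1$.

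The only missing input above is the pointwise bound on $\sum_{i}\partial_{x_i}\partial_{y_i}\gamma^{(k)}|_{y=x}$, which is not recorded explicitly in Lemmas \ref{LemmaDensityBounds} or \ref{LemmaDensityBounds2}; this is the main obstacle. It is, however, a routine consequence of the Wick-theorem-plus-dimensional-analysis strategy mentioned in the proof of Lemma \ref{LemmaDensityBounds}: the free Fermi reduced density matrix $\gamma^{(k)}(x;y)$ is the $k\times k$ determinant of the one-body kernel $\gamma^{(1)}(x_i;y_j)$, whose derivatives are each uniformly bounded by the appropriate power of $\rho$, while antisymmetry of the Slater determinant in $x$ and in $y$ produces the required Pauli suppression factors. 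Expanding the determinant and using the Leibniz rule gives the estimate with the correct powers of $\rho$, exactly as was carried out in detail in \cite[Lemma 12]{agerskov2022ground}.
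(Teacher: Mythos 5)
The paper does not actually reprove this lemma---it is imported verbatim from \cite[Lemma 12]{agerskov2022ground}---but your reduction (integrate out the unconstrained coordinates; for $i>k$ integrate by parts and use the eigenvalue equation; bound the resulting combination $E_F\rho^{(k)}+\sum_{i\le k}[\partial_{x_i}\partial_{y_i}+\partial_{y_i}^2]\gamma^{(k)}|_{y=x}$ by reduced-density estimates and the volume of the domain) is exactly the technique the present paper uses for the neighbouring Lemmas \ref{extralemma0}--\ref{LemmaSpin1/2EtaContribution1}, so the architecture is the right one.

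There is, however, a genuine error in the one estimate you supply yourself. You claim that $\sum_i\partial_{x_i}\partial_{y_i}\gamma^{(k)}(x;y)|_{y=x}$ enjoys the same Pauli suppression as $\rho^{(k)}$, e.g.\ $\leq \textnormal{const. }\rho^{9}(x_1-x_2)^2(x_2-x_3)^2$ for $k=3$, on the grounds that ``antisymmetry of the Slater determinant in $x$ and in $y$ produces the required factors.'' This is false for the pair containing the differentiated index: the quadratic vanishing of $\gamma^{(k)}$ in $x_1-x_2$ comes from $\gamma^{(k)}$ vanishing on the diagonal together with evenness of the diagonal restriction, and applying $\partial_{x_1}\partial_{y_1}$ destroys the vanishing at $x_1=x_2$. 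Concretely, for $k=2$,
\begin{equation*}
\partial_{x_1}\partial_{y_1}\gamma^{(2)}(x;y)\big|_{y=x}=\partial_1\partial_2\gamma^{(1)}(x_1;x_1)\,\gamma^{(1)}(x_2;x_2)-\partial_1\gamma^{(1)}(x_1;x_2)\,\partial_2\gamma^{(1)}(x_2;x_1),
\end{equation*}
which tends to $\partial_1\partial_2\gamma^{(1)}(x_1;x_1)\,\rho-\big(\partial_1\gamma^{(1)}(x_1;x_1)\big)^2\sim\rho^4\neq 0$ as $x_2\to x_1$. The correct statement is that $\partial_{x_i}\partial_{y_i}\gamma^{(k)}|_{y=x}$ retains the factor $(x_j-x_l)^2$ only for pairs $\{j,l\}$ not containing $i$.

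Fortunately the argument survives with the corrected, weaker bound, because the restricted volume of the domain does the work that the missing suppression factor was supposed to do. For $k=4$ on $A_{12}\cap A_{34}$, the terms with $i\in\{1,2\}$ are only $\leq\textnormal{const. }\rho^{8}(x_3-x_4)^2$; integrating over the domain (measure $\sim Lb$ in $(x_1,x_2)$ and $\int_{\abs{x_3-x_4}<b}(x_3-x_4)^2\sim Lb^3$) still gives $\textnormal{const. }\rho^{8}L^2b^4=\textnormal{const. }E_F\,N(\rho b)^4\le \textnormal{const. }E_FN(\rho b)^3$, and symmetrically for $i\in\{3,4\}$. For $k=3$ on $A_{12}\cap A_{13}$, the $i=2$ term vanishes only as $(x_1-x_3)^2\le b^2$, which over a domain of measure $\sim Lb^2$ contributes $\textnormal{const. }\rho^{7}Lb^4=\textnormal{const. }E_F(\rho b)^4$. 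So the lemma holds as stated, but you should not record the intermediate bound in the form you wrote it; the repair is a one-line change, replacing the false suppression factor by the trivial bound $(x_i-x_j)^2\le b^2$ on the relevant pair.
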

For the integral over $B^\geq_{12}$ in the first line of \eqref{EqTrialStateEnergyUpperBound1}, we use \eqref{translation} and \eqref{potenergyeq}, and extend to $A_{12}\supset B^\geq_{12}$ by positivity. The four lemmas above then say that
\begin{equation}
\label{EqTrialStateEnergyUpperBound2}
\begin{aligned}
\mathcal{E}(\Psi_{\chi})\leq E_F&+\tbinom{N}{2}\Bigg[\int_{A_{12}}2\braket{(\Psi)_{12}\chi \vert \partial_1(\Psi)_{12}\chi}+\braket{(\Psi)_{12}\chi|V_{12}(\Psi)_{12}\chi}-\int_{A_{12}}\sum_{i=1}^{N}\abs{\partial_i \Psi_F}^2\Bigg]\\
&+\textnormal{const. }E_FN\left(\rho b\right)^3\Big[1+\rho b^2\int \|\tilde{V}\|\Big].
\end{aligned}
\end{equation}
Besides the proof of Lemmas \ref{extralemma0}, \ref{extralemma} and \ref{LemmaSpin1/2EtaContribution1} (postponed to Section \ref{lemmas}) there are now two steps left in the proof of Theorem \ref{TheoremUpperBoundSpinJFermi}: 1.\ Estimating the remaining terms to find the spin chain.  We will do this now in Lemma \ref{LemmaE1BoundSpin1/2}, and it gives the upper bound on $\mathcal{E}(\Psi_\chi)$ formulated in Proposition \ref{LemmaUpperBoundFewParticles}.
2.\ Formulating a full trial state built from copies of $\Psi_\chi$ and finishing the proof in Section \ref{proofupperbound}.
\begin{lemma}
\label{LemmaE1BoundSpin1/2}
Assuming $\rho b<1$ and $ b>2\max(\norm{\mathsf{A}},R_0) $,
\begin{equation}
\label{042094}
\begin{aligned}
\tbinom{N}{2}&\int_{A_{12}}2\braket{(\Psi)_{12}\chi \vert \partial_1(\Psi)_{12}\chi}+\braket{(\Psi)_{12}\chi|V_{12}(\Psi)_{12}\chi}-\sum_{i=1}^{N}\abs{\partial_i \Psi_F}^2\\
&\leq E_F\left\langle\chi\left\lvert2\rho\frac{1}{N}\sum_{i=1}^{N} {\mathsf{A}}^{i,i+1}\left[1+\frac{\mathsf{A}^{i,i+1}}{bI-\mathsf{A}^{i,i+1}}\right]\right\rvert\chi\right\rangle+\textnormal{const. }E_F\left(N(\rho b)^3+\rho \norm{\mathsf{A}}\frac{\ln(N)}{N}\right),
\end{aligned}
\end{equation}
with $\mathsf{A}^{i,i+1}$ acting as the scattering length matrix $\mathsf{A}$ from Definition \ref{DefinitionScatteringLengthMatrix} on spins $i$ and $i+1$ and as the identity on the rest, and $\mathsf{A}^{N,N+1}=\mathsf{A}^{N,1}$
\end{lemma}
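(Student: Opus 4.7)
The plan is to substitute $(\Psi)_{12}(x) = g(x)\, bF_{12}(r)$ with $g = \Psi_F/r$ and $r = x_2-x_1$, expand the derivatives by the product rule, and reduce the integrand of \eqref{042094} to the scattering energy identity \eqref{EqScatteringEnergy} together with the free-Fermi density expansion of Lemma \ref{Lemma rho2 bound}. I interpret the expression $2\braket{(\Psi)_{12}\chi|\partial_1 (\Psi)_{12}\chi}$ as the symmetric kinetic sum $|\partial_1(\Psi)_{12}\chi|^2 + |\partial_2(\Psi)_{12}\chi|^2$, which is justified on $A_{12}$ by $x_1\leftrightarrow x_2$ symmetry. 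Writing things in Jacobi coordinates $u=(x_1+x_2)/2$, $r=x_2-x_1$, the integrand splits into (i) a $g$-kinetic piece $(|\partial_1 g|^2+|\partial_2 g|^2)\,b^2\langle\chi, F_{12}^{\ast} F_{12}\chi\rangle$, (ii) a relative-scattering piece $g^2\langle\chi, 2b^2|F_{12}'|^2 + b^2 F_{12}^{\ast} V F_{12}\,\chi\rangle$, and (iii) a cross piece $2\partial_r g^2\cdot b^2\,\mathrm{Re}\langle\chi, F_{12}^{\ast} F_{12}'\chi\rangle$, from which the free-Fermi term $\sum_{i=1}^N|\partial_i\Psi_F|^2$ is subtracted.

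The key identity is $2|F_{12}'|^2 + F_{12}^{\ast} V F_{12} = 2\partial_r(F_{12}^{\ast} F_{12}')$, obtained from the scattering equation $F_{12}'' = \tfrac12 V F_{12}$; integrating over $[-b,b]$ yields $2[F_{12}^{\ast} F_{12}']_{-b}^{b} = 4(b-\mathsf{A})^{-1}$ by \eqref{EqScatteringEnergy}. For piece (ii), I would freeze $g^2$ at $r=0$: after integrating out $x_3,\ldots,x_N$, the identity $\binom{N}{2}\int g^2\,dx_3\cdots dx_N = \rho^{(2)}(x_1,x_2)/(2r^2)$ combined with Lemma \ref{Lemma rho2 bound} gives $\pi^2\rho^4/6 + O(f(x_2)) + O(\rho^6 r^2)$. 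Integrating the leading constant against the scattering identity (and $u$ over $[0,L]$) produces $E_F\cdot 2\rho\,b^2\,\langle\chi,(b-\mathsf{A}^{1,2})^{-1}\chi\rangle$; the translation invariance of $\chi$ then replaces $\mathsf{A}^{1,2}$ by $\frac{1}{N}\sum_i \mathsf{A}^{i,i+1}$, producing the spin-chain form. The algebraic identity $b^2(b-\mathsf{A})^{-1} = bI + \mathsf{A}[I + \mathsf{A}(b-\mathsf{A})^{-1}]$ isolates the desired second summand as the coefficient of $2\rho$ in the lemma; the remaining constant $bI$ must be cancelled by the other pieces.

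The main obstacle is therefore the exact cancellation of the spurious $E_F\cdot 2\rho b$ contribution. It is of size $E_F\rho b$---far larger than either the target $E_F\rho\|\mathsf{A}\|$ or the error budget $E_F N(\rho b)^3$---and must cancel identity-wise; the sanity check is $\mathsf{A}=0$, where $(\Psi)_{12}=\Psi_F$ identically and the whole LHS collapses to $-\binom{N}{2}\int_{A_{12}}\sum_{i\ge 3}|\partial_i\Psi_F|^2$, controlled by Lemma \ref{LemmaDensityBounds}. I would drive the cancellation by substituting $(\partial_1 g)^2+(\partial_2 g)^2 = r^{-2}[(\partial_1\Psi_F)^2+(\partial_2\Psi_F)^2] - 2r^{-2}g^2 - 2r^{-1}\partial_r g^2$ into piece (i): the first summand cancels the subtracted $(\partial_1\Psi_F)^2+(\partial_2\Psi_F)^2$ up to the factor $b^2\langle F_{12}^{\ast} F_{12}\rangle/r^2 - 1$ (which vanishes at $r=\pm b$), and the remaining pieces combine with the cross term (iii) through integration by parts in $r$; the boundary evaluations using $F_{12}(\pm b)=\pm I$ and $F_{12}'(\pm b)=\pm(b-\mathsf{A})^{-1}$ produce exactly the $-E_F\cdot 2\rho b$ needed to cancel the spurious constant. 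The residual errors then have three sources: (a) the $O(\rho^6 r^2)$ remainder in Lemma \ref{Lemma rho2 bound} paired against scattering data of size $b^2\|(b-\mathsf{A})^{-1}\|\lesssim b$, giving $O(E_F N(\rho b)^3)$; (b) the $f(x_2)$ term paired with the already-cancelled scattering expression (of size $\|\mathsf{A}\|$ rather than $b$), giving $O(E_F\rho\|\mathsf{A}\|\ln N/N)$; and (c) the piece $\int_{A_{12}}\sum_{i\ge 3}|\partial_i\Psi_F|^2$, bounded by Lemma \ref{LemmaDensityBounds} to give $O(E_F N(\rho b)^3)$.
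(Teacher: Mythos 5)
Your proposal follows essentially the same route as the paper's proof: substitute $(\Psi)_{12}=\tfrac{\Psi_F}{r}\,bF_{12}$, expand by the product rule, integrate by parts so that the scattering identity \eqref{EqScatteringEnergy} delivers $4(b-\mathsf{A})^{-1}$, use Lemma \ref{Lemma rho2 bound} to evaluate $\rho^{(2)}/r^2$ in the bulk and $-\rho^{(2)}/r$ at $r=\pm b$ (the latter boundary term supplying exactly the $-2\rho b\,E_F$ that turns $b^2(b-\mathsf{A})^{-1}$ into $\mathsf{A}b(b-\mathsf{A})^{-1}=\mathsf{A}[I+\mathsf{A}(b-\mathsf{A})^{-1}]$), and finish with translation invariance of $\chi$; your three error sources (a), (b), (c) are precisely the paper's. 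The one point to tighten is that your identity for $(\partial_1 g)^2+(\partial_2 g)^2$ splits a bounded quantity into pieces that individually behave like $r^{-2}$ near $r=0$ (since $F_{12}(0)P_A\neq0$ and $\partial_1\Psi_F$ does not vanish on the diagonal), so the term $r^{-2}[(\partial_1\Psi_F)^2+(\partial_2\Psi_F)^2]\,b^2|F_{12}\chi|^2$ must be grouped with $-2r^{-2}g^2\,b^2|F_{12}\chi|^2$ \emph{before} integrating rather than with the cross term, and the relevant cancellation is at $r=0$, not at $r=\pm b$; the paper avoids this bookkeeping entirely by keeping $\partial_1(\Psi_F/r)$ intact and integrating by parts in $x_1$.
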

\begin{proof}
 Inserting the expression for $(\Psi)_{12}$ from \eqref{psiij}, we find that 
 \begin{equation}\label{EqE1(1)}
\begin{aligned}
2\tbinom{N}{2}\int_{A_{12}}&\braket{(\Psi)_{12}\chi \vert \partial_1(\Psi)_{12}\chi}\\
&=2\binom{N}{2}\int_{A_{12}}\Bigg[\braket{\partial_{1}\left(W_{12}(x_1-x_2)\right)\frac{\Psi_F(x)}{x_1-x_2}\chi\Bigg\vert \partial_{1}\left(W_{12}(x_1-x_2)\right)\frac{\Psi_F(x)}{x_1-x_2}\chi}\\
&\qquad\qquad +2\operatorname{Re}\braket{W_{12}(x_1-x_2)\partial_{1}\left(\frac{\Psi_F(x)}{x_1-x_2}\right)\chi\Bigg\vert\partial_1(W_{12}(x_1-x_2))\frac{\Psi_F(x)}{x_1-x_2}\chi}\\
&\qquad\qquad +\braket{W_{12}(x_1-x_2)\partial_{1}\left(\frac{\Psi_F(x)}{x_1-x_2}\right)\chi\Bigg\vert W_{12}(x_1-x_2)\partial_{1}\left(\frac{\Psi_F(x)}{x_1-x_2}\right)\chi}\Bigg].
		\end{aligned}
	\end{equation} 
	Applying partial integration to the second $\partial_1$ in the following integral, using notation introduced above Lemma \ref{LemmaDensityBounds}, gives
	\begin{equation}
    \label{someeq12}
		\begin{aligned}
			\int_{A_{12}}&\braket{W_{12}(x_1-x_2)\partial_{1}\left(\frac{\Psi_F(x)}{x_1-x_2}\right)\chi\Bigg\vert\partial_1(W_{12}(x_1-x_2))\frac{\Psi_F(x)}{x_1-x_2}\chi}\\
            =
        &\int_{A_{12}}\Bigg[-\braket{\partial_1(W_{12}(x_1-x_2))\partial_{1}\left(\frac{\Psi_F(x)}{x_1-x_2}\right)\chi\Bigg\vert W_{12}(x_1-x_2)\frac{\Psi_F(x)}{x_1-x_2}\chi}\\
            &\qquad\ \ \ -\braket{W_{12}(x_1-x_2)\partial_{1}\left(\frac{\Psi_F(x)}{x_1-x_2}\right)\chi\Bigg\vert W_{12}(x_1-x_2)\partial_1\left(\frac{\Psi_F(x)}{x_1-x_2}\right)\chi} \\
            &\qquad\ \ \ -\braket{W_{12}(x_1-x_2)\partial_{1}^2\left(\frac{\Psi_F(x)}{x_1-x_2}\right)\chi\Bigg\vert W_{12}(x_1-x_2)\frac{\Psi_F(x)}{x_1-x_2}\chi}\Bigg]\\
			&+\int \left[\braket{W_{12}(x_1-x_2)\partial_{1}\left(\frac{\Psi_F(x)}{x_1-x_2}\right)\chi\Bigg\vert W_{12}(x_1-x_2)\frac{\Psi_F(x)}{x_1-x_2}\chi}\right]_{x_1=x_2-b}^{x_2+b} d x_2\dots dx_N.
		\end{aligned}
	\end{equation}
	 Since $\Psi_F:[0,L]^N\to \R$ is real valued, this equation allows us to treat the second term in \eqref{EqE1(1)} and cancel the third. If we also apply the product rule to the final term in \eqref{someeq12}, integrate out the variables $x_3,\dots,x_N$ and use that $\|W_{12}(\pm b)\chi\|^2=b^2\|F_{12}(b)\chi\|^2=b^2$ by the boundary condition of $F_{12}$, we find that \eqref{EqE1(1)} equals
	\[
		\begin{aligned}
			2\tbinom{N}{2}\int_{A_{12}}&\braket{(\Psi)_{12}\chi \vert \partial_1(\Psi)_{12}\chi}=\int\left[\partial_{1}\left(\gamma^{(2)}(x_1,x_2;y,x_2)\right)\bigg\vert_{y=x_1}-\frac{\rho^{(2)}(x_1,x_2)}{(x_1-x_2)}\right]_{x_1=x_2-b}^{x_2+b} dx_2\\
			&+2\binom{N}{2}\int_{A_{12}}\Bigg[\braket{\partial_{1}(W_{12}(x_1-x_2))\frac{\Psi_F(x)}{x_1-x_2}\chi\Bigg\vert\partial_1(W_{12}(x_1-x_2))\frac{\Psi_F(x)}{x_1-x_2}\chi}\\
            &\qquad\qquad\qquad-\braket{W_{12}(x_1-x_2)\partial_{1}^2\left(\frac{\Psi_F(x)}{x_1-x_2}\right)\chi\Bigg\vert W_{12}(x_1-x_2)\frac{\Psi_F(x)}{x_1-x_2}\chi}\Bigg].
            \end{aligned}
	\] 
    Also, by antisymmetry of $\Psi_F$ and partial integration (notice only $x_1$ gives a boundary term), and using that $\Psi_F$ is an eigenfunction of $-\sum_{i=1}^{N}\partial_i^2$ with eigenvalue $E_F$,
    \[
		\begin{aligned}
			-\tbinom{N}{2}\int_{A_{12}}\sum_{i=3}^{N}\abs{\partial_i\Psi_F}^2&=-\tbinom{N}{2}\int_{A_{12}}\left(2\abs{\partial_1\Psi_F}^2+\sum_{i=3}^{N}\abs{\partial_i\Psi_F}^2\right)\\&=-\tbinom{N}{2}\int_{A_{12}}\sum_{i=1}^{N}\overline{\Psi_F}(-\partial^2_i\Psi_F)-2\binom{N}{2}\int\left[\overline{\Psi_F}\partial_1\Psi_F\right]_{x_1=x_2-b}^{x_1=x_2+b}\\
			&=-E_F\tbinom{N}{2}\int_{A_{12}}\abs{\Psi_F}^2-\int\left[\partial_1\gamma^{(2)}(x_1,x_2;y,x_2)\vert_{y=x_1}\right]_{x_2-b}^{x_2+b} d x_2.
		\end{aligned}
	\]
    The first term is negative and can be thrown out for an upper bound, while the second cancels a similar term in the previous equation. Putting these equations together and integrating out the coordinates $x_3,...,x_N$ for the first equality, we find that the quantity of interest satisfies
\[
\begin{aligned}
&\tbinom{N}{2}\int_{A_{12}}2\braket{(\Psi)_{12}\chi \vert \partial_1(\Psi)_{12}\chi}+\braket{(\Psi)_{12}\chi|V_{12}(\Psi)_{12}\chi}-\sum_{i=1}^{N}\abs{\partial_i \Psi_F}^2\\
&=\int\left[-\frac{\rho^{(2)}(x_1,x_2)}{(x_1-x_2)}\right]_{x_1=x_2-b}^{x_2+b} dx_2\\ 
&\ +\int_{\abs{x_1-x_2}<b}\Bigg[\left(|\partial_{1}W_{12}(x_1-x_2)\chi|^2+ \frac{1}{2}\braket{W_{12}(x_1-x_2)\chi|V_{12}W_{12}(x_1-x_2)\chi}\right)\frac{\rho^{(2)}(x_1,x_2)}{(x_1-x_2)^2}\\
&\hspace{4cm}-|W_{12}(x_1-x_2)\chi|^2\frac{1}{x_1-x_2}\partial_{y_1}^2\left(\frac{\gamma^{(2)}(x_1,x_2;y_1,y_2)}{y_1-y_2}\right)\Bigg\rvert_{y=x}\Bigg]dx_1dx_2.\\
&\leq\frac{\pi^2}{3}N\rho^2\left(2\rho\braket{\chi\bigg\vert \left[\frac{b^2}{b-\mathsf{A}^{1,2}}-b\right]\chi}\Big(1+\text{const. }\frac{\ln(N)}{N}\Big)+\text{const. }(\rho b)^3\right) \\
&\leq E_F\Bigg( 2\rho \braket{\chi \bigg\vert \mathsf{A}^{1,2}\frac{b}{b-\mathsf{A}^{1,2}}\chi}+\text{const. }\left(N(\rho b)^3+\rho \norm{\mathsf{A}}\frac{\ln(N)}{N}\right)\Bigg).
\end{aligned}
\]
where, for the first inequality, we use Lemma \ref{Lemma rho2 bound} in the first line, Lemma \ref{Lemma rho2 bound} and \eqref{EqScatteringEnergy} from Definition \ref{DefinitionScatteringLengthMatrix} in the second line (with $R=b$), and Lemma \ref{LemmaDensityBounds} and the uniform bound $|W_{12}(x_1-x_2)\chi|^2\leq b^2$ in the last line.\footnote{\label{footconvexbound}This bound follows from convexity of $|W_{12}(x_1-x_2)\chi|^2$ with boundary conditions $|W_{12}(\pm b)\chi|^2=b^2$, which in turn follows from the fact that $W_{12}(x_1-x_2)=bF_{12}(|x_1-x_2|)$ and $F_{12}$ solves \eqref{EqEvenScatteringEquation} with positive definite $V$ (see \eqref{convexity} for the straightforward calculation of the second derivative).} For the second inequality, we use that $b>2\norm{\mathsf{A}}$ by assumption. 
Finally, to find the desired result \eqref{042094}, we rewrite the operator in a translation-independent way (with periodic boundary conditions) using the translation symmetry of $\chi$.
\end{proof}

Combining \eqref{EqTrialStateEnergyUpperBound2} with Lemma \ref{LemmaE1BoundSpin1/2}, we arrive at the following estimate of $\mathcal{E}(\Psi_\chi)$, which will be used to estimate the energy of the full trial state in the next section.
\begin{proposition}
\label{LemmaUpperBoundFewParticles}
	Assuming $N(\rho b)^3< 1$ and $b>2\max(\norm{\mathsf{A}},R_0)$, the energy of the (unnormalized) trial state $\Psi_\chi$ defined in \eqref{EqTrialStateSpin1/2Fermi} is bounded by 
\[
\begin{aligned}
\mathcal{E}(\Psi_\chi)\leq E_F\Bigg[1&+\braket{\chi \Bigg\vert \frac{1}{N}\sum_{i=1}^{N}2\rho \mathsf{A}^{i,i+1}\frac{b}{b-\mathsf{A}^{i,i+1}}\chi}\Bigg]\\
&+\textnormal{const. } E_F\left(N(\rho b)^3\left(1+\rho b^2 \int \|\tilde{V}\|\right)+\rho\norm{\mathsf{A}} \frac{\ln(N)}{N}\right).
\end{aligned}
\]
\end{proposition}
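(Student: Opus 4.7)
The proposition packages together the ingredients that have already been assembled in the paper, so my plan is to carry out the synthesis step carefully rather than produce new estimates. The starting point will be the decomposition \eqref{EqTrialStateEnergyUpperBound1}, obtained from the definition of $\Psi_\chi$ on $\{0\leq x_1<\dots<x_N<L\}$, antisymmetry in space-spin variables (which replaces the integral over $B$ with $\binom{N}{2}$ copies of the integral over $B_{12}$), the fact that $V_{ij}$ is supported in $|x_i-x_j|\leq R_0<b$, and the partition of $B_{12}$ into $B_{12}^{\geq}$, $B_{12}^{23}$, $B_{12}^{34}$ together with the $A_{12}$-relaxations used to bound the free Fermi cross-terms.

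Next, I will dispose of lines two through five of \eqref{EqTrialStateEnergyUpperBound1} using Lemmas \ref{extralemma0}, \ref{extralemma}, \ref{LemmaSpin1/2EtaContribution1}, and \ref{LemmaE2BoundSpin1/2}. Each of these bounds absorbs its respective term into the error budget of the form $E_F\,N(\rho b)^3\bigl(1+\rho b^2\int\|\tilde V\|\bigr)$, which is exactly the error appearing in the conclusion. The assumption $N(\rho b)^3<1$ and $b>2\max(\|\mathsf{A}\|,R_0)$ in the proposition matches the hypotheses needed to invoke these lemmas.

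For the first line of \eqref{EqTrialStateEnergyUpperBound1}, I will use the translation-symmetry observation \eqref{translation} together with its potential-energy analogue \eqref{potenergyeq}. These identities say that on all of $B_{12}^{\geq}$ (regardless of the ambient permutation $\sigma$), both $|\partial_1\Psi_\chi|^2$ and $\langle\Psi_\chi|V_{12}\Psi_\chi\rangle$ agree with the corresponding quantities built from $(\Psi)_{12}\chi$. Since the integrand is nonnegative, I can enlarge the domain from $B_{12}^{\geq}$ to $A_{12}$ at no cost for an upper bound, yielding the bracketed expression in \eqref{EqTrialStateEnergyUpperBound2}. At this point, Lemma \ref{LemmaE1BoundSpin1/2} applies directly to the bracket and delivers
\[
E_F\Bigl\langle\chi\Bigm|\tfrac{1}{N}\sum_i 2\rho\,\mathsf{A}^{i,i+1}\bigl[1+\tfrac{\mathsf{A}^{i,i+1}}{b-\mathsf{A}^{i,i+1}}\bigr]\chi\Bigr\rangle
\]
plus an error of order $E_F\bigl(N(\rho b)^3+\rho\|\mathsf{A}\|\ln(N)/N\bigr)$. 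A one-line algebraic simplification $1+\mathsf{A}/(b-\mathsf{A})=b/(b-\mathsf{A})$ rewrites this in the form stated in the proposition.

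The only mildly delicate point is making sure the various error contributions combine into the single error expression displayed in the proposition, and that the extension from $B_{12}^{\geq}$ to $A_{12}$ is legitimate (positivity of the kinetic and potential densities, which holds because $V\geq 0$ and kinetic terms are squared norms). None of this introduces new obstacles; the main technical work lives inside Lemmas \ref{extralemma0}--\ref{LemmaE1BoundSpin1/2}, so this proof is essentially a bookkeeping step and should be short.
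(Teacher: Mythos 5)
Your proposal is correct and follows exactly the paper's own route: the proposition is obtained by combining the decomposition \eqref{EqTrialStateEnergyUpperBound1}, the four auxiliary Lemmas \ref{extralemma0}--\ref{LemmaE2BoundSpin1/2} (yielding \eqref{EqTrialStateEnergyUpperBound2}), and Lemma \ref{LemmaE1BoundSpin1/2}, with the final rewriting $\mathsf{A}\bigl[1+\mathsf{A}(b-\mathsf{A})^{-1}\bigr]=\mathsf{A}\,b(b-\mathsf{A})^{-1}$. The bookkeeping of hypotheses and error terms is also as in the paper (note $N(\rho b)^3<1$ implies $\rho b<1$, so Lemma \ref{LemmaE1BoundSpin1/2} indeed applies).
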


\subsubsection{A full trial state (Proof of Theorem \ref{TheoremUpperBoundSpinJFermi})}
\label{proofupperbound}
This proof is similar to the strategy used for bosons \cite{agerskov2022ground}: concatenating copies of the (Dirichlet) trial state $\Psi_\chi$ on smaller intervals, or `boxes'. 
\begin{proof}[Proof of Theorem \ref{TheoremUpperBoundSpinJFermi}]
 Let $\rho\norm{\mathsf{A}}<\frac{1}{32}$ and fix $b$ to satisfy 
 \begin{equation}
 \label{choiceb}
 b=\max\left(\frac{(\rho\norm{\mathsf{A}})^{4/5}}{\rho},R_0\right).
 \end{equation}
 Then, $(\rho\norm{\mathsf{A}})^{4/5}<\frac{1}{16}$, which implies $\frac{1}{16}\geq \rho b>(\rho \norm{\mathsf{A}})^{-1/5}\rho \norm{\mathsf{A}}\geq 2\rho \norm{\mathsf{A}}$. Therefore, we find
	\begin{equation}
 \label{someeq8}
		\mathsf{A}\frac{b}{b-\mathsf{A}}=\mathsf{A}+\mathsf{A}\frac{\mathsf{A}}{b-\mathsf{A}}\leq\mathsf{A}+2\norm{\mathsf{A}}^2/b\leq  \mathsf{A}+2\norm{\mathsf{A}}(\rho\norm{\mathsf{A}})^{1/5}.
	\end{equation}
 Notice first that one box suffices for $N<\frac{1}{2}(\rho b)^{-3/2}$: the result follows directly from Proposition \ref{LemmaUpperBoundFewParticles} and Lemma $\ref{LemmaDensityBounds}$ with the observation that
  \begin{equation}
  \label{someeq9}
  \int_{[0,L]^N}|\Psi_\chi|^2\geq 1-\int_B\abs{\Psi_F}^2\geq 1-\int_{\abs{x_1-x_2}<b}\rho^{(2)}(x_1,x_2)\geq 1-\frac{16\pi^2}{3}N(\rho b)^3,  
\end{equation}
as well as $\frac{16\pi^2}{3}N(\rho b)^3\leq \frac{8\pi^2}{3}(\rho b)^{3/2}< 1/2$, so that $\frac{1}{1-N(\rho b)^3}\leq 1+2 N(\rho b)^3$. The error term of order $N(\rho b)^3\left( \rho b^2 \int \|\tilde{V}\|\right)$ is lower order in $\rho b$: for $\rho b$ sufficiently small (or $\rho \norm{A}$ and $\rho R_0$ sufficiently small as in the assumptions of Theorem \ref{TheoremUpperBoundSpinJFermi}) we have $\rho b^2 \int \|\tilde{V}\|\leq 1$.

For $N\geq \frac{1}{2}(\rho b)^{-3/2}$, we consider $M$ disjoint `boxes' (intervals) $I_1,I_2,\dots,I_M$, with $\bigcup_{k=1}^{M}I_M\subset [0,L]$. We then set the number of particles in each box to be either $\ceil{N/M}$ or $\ceil{N/M}-1$, adding up to $N$ in total. We choose
\begin{equation}
\label{choiceM}   
M=\ceil{2N(\rho b)^{3/2}},
\end{equation}
 so that
  $$ \frac14(\rho b)^{-3/2}\leq \ceil{N/M}-1 <\ceil{N/M}\leq \frac{1}{2}(\rho b)^{-3/2}+1\leq \frac{33}{64}(\rho b)^{-3/2}.$$
  Hence, the number of particles in each box lies between $\frac{1}{4}(\rho b)^{-3/2}$ and $\frac{33}{64}(\rho b)^{-3/2}$.
We write $\tilde{N}=\ceil{N/M}$ and fix the length of each box so that the density is $\tilde{\rho}=\frac{N}{L-(M-1)b}$. In that way, the sum of the lengths is $L-(M-1)b$, leaving a distance $ b $ between each box. Also,  \begin{equation}\label{EqTilderhoEstimate}
    \tilde{\rho}=\frac{\rho}{1-\frac{b(M-1)}{L}}\leq \rho(1+2b(M-1)/L)\leq \rho(1+4(\rho b)^{5/2}), 
\end{equation} which follows from $ b(M-1)/L\leq 2(\rho b)^{5/2}< 1/2 $. 

We now consider the glued trial state $\Psi_{\chi,\text{full}}=\prod_{k=1}^{M}\Psi_{\chi,k}$, where $\Psi_{\chi,k}$ denotes the state defined in \eqref{EqTrialStateSpin1/2Fermi}, on the box $I_k$.
Note that there is no interaction energy between the boxes, as the distance $b$ between the boxes \eqref{choiceb} is greater than the range of the potential $R_0$.

 To study the energy within each box, let $ e_{0,I_k}=\frac{\pi^2}{3}\abs{I_k}\tilde{\rho}^3(1+\text{const. }\tilde{N}^{-1}) $ be the free Fermi energy in box $I_k$ (where $\abs{I_k}$ denotes the length of the box).
 From Proposition \ref{LemmaUpperBoundFewParticles}, the trial state constructed using the small boxes gives the upper bound 
 \begin{equation}\label{EqUpperBoundSmallN}
		\begin{aligned}
			E^{\textnormal{Dirichlet}}_{J}(N,L)\leq \sum_{k=1}^{M}\Big(\int_{[0,L]^N}|\Psi_{\chi,k}|^2&\Big)^{-1}e_{0,I_k}\Bigg[1+\braket{\chi \Bigg\vert \frac{1}{\tilde{N}}\sum_{i=1}^{\tilde{N}}2\tilde{\rho} \mathsf{A}^{i,i+1}\frac{bI}{bI-\mathsf{A}^{i,i+1}}\chi}\\
            &+\textnormal{const. } \left(\tilde{N}(\tilde{\rho} b)^3\left(1+\tilde{\rho} b^2 \int \|\tilde{V}\|\right)+\tilde{\rho}\norm{\mathsf{A}} \frac{\ln(\tilde{N})}{\tilde{N}}\right)\Bigg].
		\end{aligned}
	\end{equation}
 Notice that $\tilde{\rho} \norm{\mathsf{A}} \frac{\ln(\tilde{N})}{\tilde{N}}\leq  \frac{2}{e}\max (\tilde{N}^{-1},(\tilde{\rho} \norm{\mathsf{A}})^{3/2} )$ (this can be seen by considering the cases $\tilde{N}\leq(\tilde\rho \norm{\mathsf{A}})^{-1}$ and $\tilde{N}>(\tilde\rho \norm{\mathsf{A}})^{-1}$ separately). Hence, this term is subleading, and we absorb it into other error terms. We now choose $\chi$ to be a translation-invariant ground state of $\sum_{i=1}^{\tilde{N}} \mathsf{A}^{i,i+1}$, and using a simple trial state argument (cutting up a length $N$ spin chain in $M$ parts and adding periodic terms to chains of length $\tilde{N}$), we find 
  $$
\frac{M}{N}\braket{\chi \Bigg\vert \sum_{i=1}^{\tilde{N}} \mathsf{A}^{i,i+1}\chi}\leq \epsilon_{J}(\mathsf{A})+2\frac{M}{N}\norm{\mathsf{A}}, 
 $$
 where $\epsilon_{J}(\mathsf{A})$ was defined to be the ground state energy of $\frac{1}{N}\sum_{i=1}^{N}\mathsf{A}^{i,i+1}$. Therefore, using \eqref{someeq8},
 $$
 \frac{1}{\tilde{N}}\braket{\chi \Bigg\vert \sum_{i=1}^{\tilde{N}}2\tilde{\rho} \mathsf{A}^{i,i+1}\frac{bI}{bI-\mathsf{A}^{i,i+1}}\chi}\leq 2\tilde{\rho} \epsilon_{J}(\mathsf{A})+2(\rho\norm{\mathsf{A}})^{6/5} + 4\frac{M}{N}\tilde{\rho} \norm{\mathsf{A}}.
 $$
  The second term is of order $(\rho b)^{3/2}\rho \norm{\mathsf{A}}$ by \eqref{choiceM} and is therefore subleading.
Using \eqref{someeq9} and \eqref{EqTilderhoEstimate} and the observations above, we see that \eqref{EqUpperBoundSmallN} is upper bounded by
\begin{equation}
\label{lasteq}
\begin{aligned}
N\frac{\pi^2}{3}\rho^2\frac{1+2\rho \epsilon_J(\mathsf{A}) +4(\rho \norm{\mathsf{A}})^{6/5}+\text{const. }\left[\frac{M}{N}+\tilde{N}(\rho b)^3\left(1+\rho b^2 \int\|\tilde{V}\|\right)\right]}{1-\frac{16\pi^2}3 \tilde{N}(\rho b)^3}.
\end{aligned}
\end{equation}
Using $M/N\leq (\rho b)^{3/2}+\frac{1}{N}$ and the fact that $ \frac{16\pi^2}{3}\tilde{N}(\rho b)^3\leq 1/2 $, so that $$ 1/(1-\tilde{N}(\rho b)^3)\leq 1+2\tilde{N}(\rho b)^3, $$
 as well as $\tilde{N}(\rho b)^3\leq \textnormal{const. }(\rho b)^{3/2}$,
 the desired result \eqref{bounddirichlet} follows (recall the remark about $\int\|\tilde{V}\|$ below \eqref{someeq9}). Note the error term $N\frac{\pi^2}{3}\rho^2\mathcal{O}(N^{-1})$ in \eqref{bounddirichlet} is only important when $N<(\rho b)^{-3/2}$, so it does not appear in \eqref{lasteq}.
\end{proof}

This concludes the proof of Theorem \ref{TheoremUpperBoundSpinJFermi}. It remains to prove the lemmas that we used.

\subsubsection{Proof of Lemmas \ref{extralemma0}, \ref{extralemma} and \ref{LemmaSpin1/2EtaContribution1}}
\label{lemmas}
\begin{proof}[Proof of Lemma \ref{extralemma0}]  
    Given that the integration domain is $B^\geq_{12}$, we use \eqref{translation}, and extend to $A_{12}\supset B^\geq_{12}$ by positivity. As $\Psi_F$ is an eigenfunction of $-\sum_{i=1}^{N}\partial_i^2$ with eigenvalue $E_F$, by partial integration, we see that
\[
	\begin{aligned}
		\tbinom{N}{2}\int_{B^\geq_{12}}\sum^N_{i=3}\abs{\partial_i\Psi_{\chi}}^2&\leq \tbinom{N}{2}\int_{A_{12}}\sum_{i=3}^{N}\braket{\partial_i(\Psi)_{12}\chi \vert \partial_i(\Psi)_{12}\chi}\\
        &=\tbinom{N}{2}\int_{A_{12}}\sum_{i=3}^{N} \braket{W_{12}(x_2-x_1)\chi\vert W_{12}(x_2-x_1)\chi }\overline{\Psi_F}\frac{1}{(x_1-x_2)^2}(-\partial^2_i)\Psi_F\\\quad&=E_F\tbinom{N}{2}\int_{A_{12}}\braket{W_{12}(x_2-x_1)\chi \vert W_{12}(x_2-x_1)\chi} \frac{1}{\abs{x_1-x_2}^2}\left\lvert \Psi_F\right\rvert^2\\
        &\quad  -2\tbinom{N}{2}\int_{A_{12}} \overline{\Psi_F}\frac{\braket{W_{12}(x_2-x_1)\chi \vert W_{12}(x_2-x_1)\chi}}{(x_1-x_2)^2}(-\partial^2_1)\Psi_F\\
        &\leq   b^2 \int_{\abs{x_1-x_2}\leq b} \left[E_F\frac{\rho^{(2)}(x_1,x_2)}{(x_1-x_2)^2}+\frac{1}{2}\sum_{i=1}^{2}\frac{\partial_{y_i}^2\gamma^{(2)}(x_1,x_2;y_1,y_2)\vert_{y=x}}{(x_1-x_2)^2}\right]\\
        &\leq \textnormal{const. } E_F N(\rho b)^3,
	\end{aligned}
\]
where we used $|W_{12}(x_1-x_2)\chi|^2\leq b^2$ (see Footnote \ref{footconvexbound}) and Lemma \ref{LemmaDensityBounds} in the last two steps. 
\end{proof}

\begin{proof}[Proof of Lemma \ref{extralemma}]
Recalling the definitions \eqref{Wij}, \eqref{hcij} and \eqref{eta}, define
\[
(\tilde{\Psi})_{12}(x):=\frac{\Psi_F(x)}{\abs{x_2-x_1}} \tilde{W}_{12}(x),
\]
with $\tilde{W}_{12}(x)=\eta(x)W_{12}(x_2-x_1)+(1-\eta(x))W_{\textnormal{hc},12}(x_2-x_1)$. Note that $\Psi_\chi(x)=(\tilde{\Psi})_{12}(x)\chi$ on the set $B_{12}\cap\{0\leq x_1<\dots<x_N<L\}$ by the definition \eqref{EqTrialStateSpin1/2Fermi}. Similar to \eqref{psionsectors}, we find that on $B_{12}\cap \{\sigma\}$ (assuming the intersection is not empty),
\begin{equation}
\label{psionsectors2}
    \Psi_\chi(x)=(\tilde{\Psi})_{12}\chi_\sigma(x)=\frac{\Psi_F(x)}{\abs{x_2-x_1}}\tilde{W}_{12}(x_2-x_1)\chi_\sigma.
\end{equation}
Note that we can repeat the argument that led to \eqref{translation} and \eqref{potenergyeq} to find that on all of $B_{12}$,
\begin{equation}
\label{lastlabel}
\braket{\Psi_\chi(x)|\Psi_\chi(x)}=\langle(\tilde{\Psi})_{12}(x)\chi|(\tilde{\Psi})_{12}(x)\chi\rangle
\end{equation}
Before we start estimating, it is useful to note that 
\begin{equation}
\label{convexity1}
\langle\tilde{W}_{12}(x_2-x_1)\chi\vert \tilde{W}_{12}(x_2-x_1)\chi\rangle\leq \braket{W_{12}(x_2-x_1)\chi\vert W_{12}(x_2-x_1)\chi}\leq b^2.
\end{equation}
The first inequality follows from the convexity of squaring and the fact that $\braket{W_{\textnormal{hc},12}\chi\vert W_{\textnormal{hc},12}\chi}\leq \braket{W_{12}\chi\vert W_{12}\chi}$ pointwise (see Lemma \ref{LemmaScatteringSolution hard core pointwise bound} in Appendix \ref{AppendixA} for a proof). The second follows from convexity of $\braket{W_{12}\chi\vert W_{12}\chi}$ (see Footnote \ref{footconvexbound} and the proof just referenced) and the boundary condition $|W_{12}(\pm b)\chi|^2=b^2$. 

Recall that by Assumption \ref{AssumptionUpperBound}, we can write $V_{ij}=v_{ij}+\tilde{V}_{ij}$, with $v_{ij}$ positive and scalar, and $\tilde{V}_{ij}$ a Hermitian matrix-valued measure of bounded variation, both with support contained in $[-b,b]$. A useful fact for the estimates is that, for a function $g\in L^2(\R^N;(\C^{2J+1})^{\otimes N})$, \footnote{To see this, it suffices to study a simple function $g(x)=\sum_{k=1}^{n}\chi_k \mathbbm{1}_{X_k}$. Then $\int \langle g(x)|\tilde{V}(x) g(x)\rangle dx=\sum_{k=1}^{n}\int_{X_k} \langle\xi_k|\tilde{V}\xi_k\rangle\leq\sum_{k=1}^{n}\int_{X_k} |\xi_k|^2\|\tilde{V}\|= \int|g(x)|^2\|\tilde{V}\|(x)dx$. We can then generalize to $L^2$-functions by approximating with simple functions in the usual way.} 
\begin{equation}
\label{estimate20023}
\int \langle g(x)|\tilde{V}(x) g(x)\rangle dx\leq \int|g(x)|^2\|\tilde{V}\|(x)dx.
\end{equation}
This gives 
\begin{equation}
\label{someeq12910}
\tbinom{N}{2}\int_{B_{12}}\sum_{\substack{1\leq i<j\leq N\\(i,j)\neq (1,2)}}\braket{\Psi_\chi|V_{ij}\Psi_\chi}\leq \tbinom{N}{2}\int _{B_{12}}\sum_{\substack{1\leq i<j\leq N\\(i,j)\neq (1,2)}}\left(v_{ij}|\Psi_\chi|^2+\|\tilde{V}_{ij}\||\Psi_\chi|^2\right).
\end{equation}
By \eqref{lastlabel} and antisymmetry to collect similar terms, we find that the $\tilde{V}_{ij}$-part of \eqref{someeq12910} satisfies
	\begin{equation}\label{EqE3bound0}
		\begin{aligned}
			&\tbinom{N}{2}\int_{B_{12}} \Bigg(\sum_{2\leq i<j\leq N}\|\tilde{V}_{ij}\||(\tilde{\Psi})_{12}\chi|^2+\sum_{j=3}^{N}\|\tilde{V}_{1j}\||(\tilde{\Psi})_{12}\chi|^2\Bigg)\\&\leq \text{const. }b^2\int_{\{\abs{x_1-x_2}<b\}\cap\{\abs{x_3-x_4}<b\}}\|\tilde{V}\|(\abs{x_3-x_4})\frac{1}{(x_1-x_2)^2}\rho^{(4)}(x_1,x_2,x_3,x_4)\\
			&\ +\text{const. }b^2\int_{\{\abs{x_1-x_2}<b\}\cap\{\abs{x_2-x_3}<b\}}\|\tilde{V}\|(\abs{x_2-x_3})\frac{1}{(x_1-x_2)^2}\rho^{(3)}(x_1,x_2,x_3)\\
            &\leq \textnormal{const. }(N^2+N)\rho^2\cdot (\rho b)^3\cdot \rho b^2
            \int \|\tilde{V}\|\leq \textnormal{const. }E_FN(\rho b)^3\cdot \rho b^2
            \int \|\tilde{V}\|,
		\end{aligned}
	\end{equation}
    where we used Lemma \ref{LemmaDensityBounds} for the estimates.
For the $v_{ij}$-part of \eqref{someeq12910}, we note that for fixed $\abs{x_i-x_j}$ and $(i,j)\neq (1,2)$, by positivity of $v_{ij}$,
\begin{equation}
\label{someeq128}
v(x_i-x_j)|W_{12}(x_1-x_2)\chi|^2\leq v(x_i-x_j)|W_{12}(x_i-x_j)\chi|^2
\end{equation}
since the function $|W_{12}(x_1-x_2)\chi|^2$ is increasing in $|x_1-x_2|$ because it is convex and even, and $\abs{x_i-x_j}\geq\abs{x_1-x_2}$ on $B_{12}$. Thus, using \eqref{lastlabel}, followed by \eqref{convexity1} and \eqref{someeq128} and antisymmetry, and integrating out certain coordinates, we find that
\begin{equation}
\label{EqE3bound}
\begin{aligned}
		&\tbinom{N}{2}\int_{B_{12}} \left(\sum_{2\leq i<j\leq N}v_{ij}|(\tilde{\Psi})_{12}\chi|^2+\sum_{j=3}^{N}v_{1j}|(\tilde{\Psi})_{12}\chi|^2\right)\\
            &\leq \text{const. }\left(\int_{\{\abs{x_1-x_2}<b\}\cap\{\abs{x_3-x_4}<b\}}v(\abs{x_3-x_4})\abs{W_{12}(x_3-x_4)\chi}^2\frac{1}{(x_1-x_2)^2}\rho^{(4)}(x_1,x_2,x_3,x_4)\right.\\
			&\qquad\qquad\left.+\int_{\{\abs{x_1-x_2}<b\}\cap\{\abs{x_2-x_3}<b\}}v(\abs{x_2-x_3})\abs{W_{12}(x_2-x_3)\chi}^2\frac{1}{(x_1-x_2)^2}\rho^{(3)}(x_1,x_2,x_3)\right).
		\end{aligned}
	\end{equation}
	Hence, combining \eqref{someeq12910},  \eqref{EqE3bound0} and \eqref{EqE3bound}, and by adding and subtracting interaction terms with $\tilde{V}_{12}(x_2-x_3)$ and $\tilde{V}_{12}(x_3-x_4)$ instead of $v(|x_3-x_4|)$ and $v(|x_2-x_3|)$ and estimating the subtraction as in \eqref{someeq12910} and \eqref{EqE3bound0}, and using Lemma \ref{LemmaDensityBounds}, we find that
\[
\begin{aligned}
&\tbinom{N}{2}\int_{B_{12}}\sum_{\substack{1\leq i<j\leq N\\(i,j)\neq (1,2)}}\braket{\Psi_\chi|V_{ij}\Psi_\chi}\\
&\leq \text{const. }\Bigg(\rho^8\int_{\{\abs{x_1-x_2}<b\}\cap\{\abs{x_3-x_4}<b\}}(x_3-x_4)^2\braket{W_{12}(x_3-x_4)\chi| V_{12}(x_3-x_4)W_{12}(x_3-x_4)\chi}\\
&\qquad\qquad\quad\left.+\rho^7\int_{\{\abs{x_1-x_2}<b\}\cap\{\abs{x_2-x_3}<b\}}(x_2-x_3)^2\braket{W_{12}(x_2-x_3)\chi| V_{12}(x_2-x_3)W_{12}(x_2-x_3)\chi}\right.\\
&\qquad\qquad\quad
+ E_FN(\rho b)^3\cdot\rho b^2\int \|\tilde{V}\|
\Bigg)\\
 &\leq\text{const. } \left(N^2+N\right)\rho^2(\rho b)^4\cdot b\braket{\chi\vert(b-\mathsf{A}^{1,2})^{-1}\chi}+E_FN(\rho b)^3\cdot\rho b^2\int \|\tilde{V}\|\\
&\leq\text{const. }E_F N (\rho b)^3\Big[\rho b+\rho b^2\int \|\tilde{V}\|\Big],
		\end{aligned}
\]
where we used the scattering energy \eqref{EqScatteringEnergy} and the assumption  $b>2\|\mathsf{A}\|$ in the last steps. 
\end{proof}

\begin{proof}[Proof of Lemma \ref{LemmaSpin1/2EtaContribution1}]
Consider the trial state $\Psi_\chi$  \eqref{EqTrialStateSpin1/2Fermi} on the set $B_{12}^{34}$ \eqref{intdomains}, which means that the particles with positions $x_1$ and $x_2$ are the closest pair, while those with positions $x_3$ and $x_4$ are the next closest pair. Similar to \eqref{psionsectors2} (also see \eqref{psionsectors} and \eqref{examplepsi}), we see that on $B_{12}^{34}\cap\{\sigma\}$,
\begin{equation}
\label{someeq1013039}
\begin{aligned}
\Psi_{\chi}(x)&= \frac{\Psi_F(x)}{\abs{x_2-x_1}} W_{12}^{34}(x)\chi_\sigma\\
&\coloneqq  \frac{\Psi_F(x)}{\abs{x_2-x_1}}\Big[\eta_{34}(x) W_{12}(x_1-x_2)+(1-\eta_{34}(x))W_{\text{hc},12}(x_1-x_2)\Big]\chi_\sigma,
\end{aligned}
\end{equation}
where we defined $W_{12}^{34}(x)=W_{12}^{34}(x_1,x_2,x_3,x_4)$ in terms of $W_{12}$ \eqref{Wij} and $W_{hc,12}$ \eqref{hcij}, 
\[
\begin{aligned}
\eta_{34}(x):=\begin{cases}
0&\text{ if } \abs{x_3-x_4}\leq b\\
\frac{\abs{x_3-x_4}}{b}-1 &\text{ if } b<\abs{x_3-x_4}<2b\\
1 &\text{ if } \abs{x_3-x_4}\geq 2b
\end{cases}.
\end{aligned}
\]
Also, because of reasoning similar to \eqref{lastlabel}, $ B_{12}^{34}\subset A_{12}^{34} $, and partial integration in $x_5,\dots, x_N$, 
\[
\begin{aligned}
\tbinom{N}{2}\tbinom{N-2}{2}\int_{B_{12}^{34}}\sum_{i=1}^{N}\abs{\partial_i\Psi_{\chi}}^2&=\tbinom{N}{2}\tbinom{N-2}{2}\int_{B_{12}^{34}}\sum_{i=1}^{N}\abs{\partial_i\left( \frac{\Psi_F}{\abs{x_2-x_1}} W_{12}^{34}\chi\right)}^2\\
&\leq \tbinom{N}{2}\tbinom{N-2}{2}\int_{A_{12}^{34}}\sum_{i=1}^{N}\abs{\partial_i\left( \frac{\Psi_F}{\abs{x_2-x_1}} W_{12}^{34}\chi\right)}^2.\\
&= \tbinom{N}{2}\tbinom{N-2}{2}\int_{A_{12}^{34}}\sum_{i=1}^{4}\abs{\partial_i\left( \frac{\Psi_F}{\abs{x_2-x_1}} W_{12}^{34}\chi\right)}^2\\
&\ \ \ + \tbinom{N}{2}\tbinom{N-2}{2}\int_{A_{12}^{34}}\sum_{i=5}^{N}\braket{ \frac{\Psi_F}{\abs{x_2-x_1}} W_{12}^{34}\chi \Bigg\vert \frac{-\partial_i^2\Psi_F}{\abs{x_2-x_1}} W_{12}^{34}\chi }.
\end{aligned}
\]
 Using that $ \Psi_F $ is an eigenfunction of $ -\Delta $ with eigenvalue $ E_F $, we find that
\begin{equation}
\label{someeq1320931}
\begin{aligned}
\tbinom{N}{2}\tbinom{N-2}{2}&\int_{B_{12}^{34}}\sum_{i=1}^{N}\abs{\partial_i\Psi_{\chi}}^2\leq \tbinom{N}{2}\tbinom{N-2}{2}\Bigg[\int_{A_{12}^{34}}\sum_{i=1}^{4}\abs{\partial_i\left( \frac{\Psi_F}{\abs{x_2-x_1}} W_{12}^{34}\chi\right)}^2\\
&\hspace{1cm}-\int_{A_{12}^{34}}\sum_{i=1}^{4}\braket{ \frac{\Psi_F}{\abs{x_2-x_1}} W_{12}^{34}\chi \Bigg\vert \frac{-\partial_i^2\Psi_F}{\abs{x_2-x_1}} W_{12}^{34}\chi }+ E_F\int_{A_{12}^{34}}\abs{\frac{\Psi_F}{\abs{x_2-x_1}} W_{12}^{34}\chi}^2 \Bigg].
\end{aligned}
\end{equation}
By convexity of squaring, the scattering energy \eqref{EqScatteringEnergy} and the assumption $b>2\|\mathsf{A}\|$, we find that 
\[
    \int_{x_2-b}^{x_2+b}\braket{\partial_1  W_{12}^{34}\chi\vert \partial_1  W_{12}^{34}\chi}dx_1 \leq b^2\max\left(\frac{2}{b-\norm{\mathsf{A}}},\frac{2}{b-R_0}\right)\leq 4b.
\]
Also, similar to \eqref{convexity1}, we find that $\braket{W_{12}^{34}\chi\vert  W_{12}^{34}\chi}\leq \braket{ W_{12}\chi\vert  W_{12}\chi}\leq b^2$.
Using the dependence on $x_3$ in \eqref{someeq1013039} for the first inequality and Cauchy--Schwarz for the other two, we find that
\[
\begin{aligned}
    \int_{x_2-b}^{x_2+b}\braket{\partial_3  W_{12}^{34}\chi\vert \partial_3  W_{12}^{34}\chi}dx_1 &\leq 8b.\\
        \int_{x_2-b}^{x_2+b}\abs{\braket{\partial_1 W_{12}^{34}\chi\vert  W_{12}^{34}\chi} }dx_1&\leq 4b^2\\
        \int_{x_2-b}^{x_2+b}\abs{\braket{\partial_3 W_{12}^{34}\chi\vert  W_{12}^{34}\chi} }dx_1&\leq 4b^2.
\end{aligned}
\]
Thus, using Lemmas \ref{LemmaDensityBounds} and \ref{LemmaDensityBounds2} and the assumption $N(\rho b)^3<1$ (and the notation from Section \ref{freefermi}), we find that \eqref{someeq1320931} (the kinetic energy part of the object we wish to estimate) is upper bounded by 
\[
    \begin{aligned}
        \textnormal{const. }&\Bigg(\sum_{i=1}^{4} \Bigg[\int_{\substack{\{\abs{x_1-x_2}<b\}\\\{\abs{x_3-x_4}<2b\}}}b^2\partial_{y_i}\partial_{x_i}\frac{\gamma^{(4)}(y_1,y_2,y_3,y_4;x_1,x_2,x_3,x_4)}{\abs{x_2-x_1}\abs{y_2-y_1}}\Bigg \lvert_{y=x}dx_1dx_2dx_3dx_4
        \\
&+b^2\int_{\{\abs{x_3-x_4}<2b\}}\sup_{x_1\in\{\abs{x_1-x_2}<b\}}\left(\abs{\partial_{y_i}\frac{\gamma^{(4)}(y_1,y_2,y_3,y_4;x_1,x_2,x_3,x_4)}{\abs{x_2-x_1}\abs{y_2-y_1}}\Bigg \lvert_{y=x}}\right) dx_2dx_3dx_4\Bigg]\\
&+b\int_{\{\abs{x_3-x_4}<2b\}}\sup_{x_1\in\{\abs{x_1-x_2}<b\}}\frac{\rho^{(4)}(x_1,x_2,x_3,x_4)}{(x_1-x_2)^2}dx_2dx_3dx_4\\
&+b^2\int_{\substack{\{\abs{x_1-x_2}<b\}\\\{\abs{x_3-x_4}<2b\}}} \frac{1}{(x_1-x_2)^2}\abs{\sum_{i=1}^{4}\partial_{y_i}^2\gamma^{(4)}(y_1,y_2,y_3,y_4;x_1,x_2,x_3,x_4) \Bigg \lvert_{y=x}}dx_1dx_2dx_3dx_4\\
&+ E_F b^2\int_{\substack{\{\abs{x_1-x_2}<b\}\\\{\abs{x_3-x_4}<2b\}}}\frac{\rho^{(4)}(x_1,x_2,x_3,x_4)}{(x_1-x_2)^2}dx_1dx_2dx_3dx_4\Bigg)\leq \text{ const. } E_F N\left(\rho b\right)^3.
\end{aligned}
\]
We continue with the potential energy part. The strategy will be similar to what was done in Lemma \ref{extralemma}, but simplified because we are only dealing with the $V_{12}$-term. By Assumption \ref{AssumptionUpperBound}, $V_{ij}=v_{ij}+\tilde{V}_{ij}$, with $v_{ij}$ positive and scalar, and $\tilde{V}_{ij}$ a Hermitian matrix-valued measure of bounded variation, both with support contained in $[-b,b]$.
Applying \eqref{estimate20023} and using an argument similar to the one that gave \eqref{lastlabel} (also see \eqref{translation} and \eqref{potenergyeq}), followed by $\braket{W_{12}^{34}\chi\vert  W_{12}^{34}\chi}\leq \braket{ W_{12}\chi\vert  W_{12}\chi}\leq b^2$ already used before, we find that
\[
\begin{aligned}
&\tbinom{N}{2}\tbinom{N-2}{2}\int_{B_{12}^{34}}\braket{\Psi_\chi|V_{12}|\Psi_\chi}\leq \tbinom{N}{2}\tbinom{N-2}{2}\int _{B_{12}^{34}}v_{12}|\Psi_\chi|^2+\|\tilde{V}_{12}\||\Psi_\chi|^2\\
        &= \tbinom{N}{2}\tbinom{N-2}{2}\int _{B_{12}^{34} }v_{12}\frac{\abs{\Psi_F}^2}{\abs{x_2-x_1}^2}\braket{ W_{12}^{34}\chi| W_{12}^{34}\chi}+\|\tilde{V}_{12}\|\frac{\abs{\Psi_F}^2}{\abs{x_2-x_1}^2}\braket{ W_{12}^{34}\chi| W_{12}^{34}\chi}\\
&\leq \tbinom{N}{2}\tbinom{N-2}{2}\int _{B_{12}^{34} }v_{12}\frac{\abs{\Psi_F}^2}{\abs{x_2-x_1}^2}\braket{W_{12}\chi|W_{12}\chi}+\|\tilde{V}_{12}\|\frac{\abs{\Psi_F}^2}{\abs{x_2-x_1}^2}b^2.
    \end{aligned}
\]
Adding and subtracting $\frac{\abs{\Psi_F}^2}{\abs{x_2-x_1}^2}\langle W_{12}\chi|\tilde{V}_{12}W_{12}\chi\rangle$ and estimating the subtraction as in the previous step, followed by $B^{34}_{12}\subset A^{34}_{12}$ and integrating out the coordinates $x_5,...,x_N$, and using Lemma \ref{LemmaDensityBounds} and the scattering energy \eqref{EqScatteringEnergy} and the assumption $b>2\|\mathsf{A}\|$ gives
\[
    \begin{aligned}
        \tbinom{N}{2}\tbinom{N-2}{2}&\int_{B_{12}^{34}}\braket{\Psi_\chi|V_{12}|\Psi_\chi}\leq \tbinom{N}{2}\tbinom{N-2}{2}\int _{A_{12}^{34} }\frac{\abs{\Psi_F}^2}{\abs{x_2-x_1}^2}\braket{W_{12}\chi|V_{12} W_{12}\chi}+2\|\tilde{V}_{12}\|\frac{\abs{\Psi_F}^2}{\abs{x_2-x_1}^2}b^2\\
        &\leq \int _{\substack{\abs{x_1-x_2}<b\\
        \abs{x_3-x_4 }<2b}}\frac{\rho^{(4)}(x_1,x_2,x_3,x_4)}{\abs{x_2-x_1}^2}\braket{W_{12}\chi|V_{12} W_{12}\chi}+2\|\tilde{V}_{12}\|\frac{\rho^{(4)}(x_1,x_2,x_3,x_4)}{\abs{x_2-x_1}^2}b^2\\
        &\leq \textnormal{const. } N\rho^2 N(\rho b)^4 \left(\braket{\chi\Big\vert\frac{b}{b-\mathsf{A}^{1,2}}\chi}+2b \int \|\tilde{V}\|\right)\leq E_FN(\rho b)^3\Big(\rho b+\rho b^2\int \|\tilde{V}\|\Big).
    \end{aligned}
\]
This concludes the proof of the first estimate of the lemma. The second estimate can be found by straightforwardly adapting all the steps above to the $B^{23}_{12}$ case. 
\end{proof}

\section{Lower Bound for scalar potentials}
\label{seclowerbound}
The following lower bound is used to prove Theorem \ref{TheoremDiluteFermiGroundStateEnergy}. Our method does not allow us to extend to the matrix-valued potentials that we considered for the upper bound---see Footnote \ref{Footnote:lower bound for matrix potentials} for the technical obstruction. 
\begin{theorem}[Lower bound in Theorem \ref{TheoremDiluteFermiGroundStateEnergy}]
	\label{TheoremLowerBoundSpin}
    Let $v$ be an even, positive interaction compactly supported in $[-R_0,R_0]$, with $a_e$ and $a_o$ its even and odd wave scattering lengths. Consider a spin-$J$ Fermi gas with energy functional \eqref{functional} and domain \eqref{domainfermions} with Neumann boundary conditions. Then, for $\rho R_0$, $\rho a_e$ and $\rho a_o$ sufficiently small, there exists a constant $C_L>0$ such that the ground state energy $E_{J}^{\textnormal{Neumann}}(N,L)$ satisfies
	\begin{equation}
    \label{lowerboundequation}
		E_{J}^{\textnormal{Neumann}}(N,L)\geq N\frac{\pi^2}{3}\rho^2\left(1+2\rho \left[a_e+(a_o-a_e)\epsilon^{\textnormal{LS}}_J\right]-C_L\big((\rho \max(R_0,a_o,\abs{a_e}))^{6/5}+N^{-2/3}\big)\right),
	\end{equation}
where $\epsilon^{\textnormal{LS}}_J$ is the minimal energy per site of the spin-$J$ Lai--Sutherland model, i.e. the minimal expectation value of \footnote{We will prove the bound for a chain with open boundary conditions, but this is irrelevant, see Footnote \ref{lastfootnote}.}
    \begin{equation}
    \label{spinchainlb}
h_{LS}=\frac{1}{N}\sum_{i=1}^{N}P_S^{i,i+1}.
\end{equation}
\end{theorem}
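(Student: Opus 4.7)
The plan is to adapt the lower-bound strategy of \cite{agerskov2022ground} summarized in Section \ref{proofideas}, by upgrading the scalar Dyson-type inequality \eqref{eqidea} to a spin-valued version that handles even and odd scattering simultaneously. Concretely, the key new ingredient is Lemma \ref{LemmaDysonSpin1/2Fermi}: for any $\psi \in H^1([-R,R]; (\C^{2J+1})^{\otimes 2})$ and $R \geq R_0$,
\[
\int_{-R}^{R} 2|\partial_x \psi|^2 + v(x)|\psi|^2 \, dx \geq \braket{\psi(R) | M \psi(R)} + \braket{\psi(-R) | M \psi(-R)},
\]
with $M = \tfrac{2}{R-a_e}P_A + \tfrac{2}{R-a_o}P_S$ acting on the two spin indices. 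This follows because a scalar $v$ does not couple the symmetric and antisymmetric spin sectors, so the scalar bounds \eqref{dyson1} and \eqref{dyson2} apply to $P_A\psi$ and $P_S\psi$ separately.

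Next, in the ordered sector $0 \leq x_1 < \dots < x_N < L$ (extended antisymmetrically), I would apply this spin-valued Dyson to each consecutive pair $(i,i+1)$ and discard the strips $|x_{i+1}-x_i| < R$. Gluing the two halves of configuration space at each cut makes the boundary $\delta$'s at $\pm R$ coalesce into a single $\delta$ at coincidence with coupling $\tfrac{4}{R-a_e}P_A + \tfrac{4}{R-a_o}P_S$, so the energy is lower bounded by the ground state of a spin-$J$ Lieb--Liniger--Heisenberg Hamiltonian on the reduced interval $[0, L-(N-1)R]$ at density $\tilde\rho = \rho/(1-\rho R)$. The discarding step is legitimate because $v$ is repulsive and a Poincar\'e-type bound controls the weight of the Neumann ground state in the strips (the same mechanism underlying \eqref{heurist}).

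To extract the Lai--Sutherland correction, I would decompose
\[
\tfrac{4}{R-a_e}P_A + \tfrac{4}{R-a_o}P_S = \tfrac{4}{R-a_e} I + \Big(\tfrac{4}{R-a_o} - \tfrac{4}{R-a_e}\Big) P_S.
\]
The scalar part is an ordinary Lieb--Liniger Hamiltonian whose dilute-limit ground state energy, via Yang--Yang and the relation $(1-\rho R)^{-2} = 1 + 2\rho R + O((\rho R)^2)$, reproduces the spinless leading expansion $N\tfrac{\pi^2}{3}\rho^2(1 + 2\rho a_e + \dots)$ of \eqref{energ2}. For the nonnegative spin-projected remainder, I would localize into small boxes so that within each box the spatial ground state of the effective Lieb--Liniger approaches its Tonks--Girardeau limit; Hellmann--Feynman then gives the spatial contact explicitly, while the spin factor is minimized by the Lai--Sutherland ground state, producing the advertised $2\rho(a_o-a_e)\epsilon^{\textnormal{LS}}_J$ correction.

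The main obstacle is the last step: turning the heuristic spatial/spin factorization into a true lower bound. A priori, $\sum_i \braket{P_S^{i,i+1}\delta(x_i-x_{i+1})}$ is not a product of a contact and a spin expectation, since the true ground state of the effective LLH problem correlates the two. A quantitative argument is needed to show the correlations contribute only at subleading order; this should be achievable because the perturbation strength $\tfrac{4}{R-a_o} - \tfrac{4}{R-a_e}$ is itself small in the dilute limit, so first-order perturbation theory suffices to within the error budget $(\rho\max(R_0, a_o, |a_e|))^{6/5} + N^{-2/3}$. Orchestrating the various error contributions (discarded strips, Lieb--Liniger-to-Tonks approximation, $\tilde\rho \leftrightarrow \rho$ conversion, and boundaries between boxes) to all fit within this budget is the delicate part. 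The argument relies crucially on $v$ being scalar so that $M$ takes the clean two-projector form, which is why the strategy does not directly extend to matrix-valued potentials.
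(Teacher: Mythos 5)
Your first two steps match the paper's: the spin-resolved Dyson lemma is exactly the paper's Lemma \ref{LemmaDysonSpin1/2Fermi}, and the strategy of applying it to consecutive pairs in the ordered sector, discarding the strips, and controlling the discarded weight of the Neumann ground state is also what the paper does. The gap is in your third step. After applying Dyson with a \emph{single} radius $R$ you are left with an effective Lieb--Liniger--Heisenberg-type Hamiltonian, i.e.\ a contact interaction with the spin-dependent coupling $\tfrac{4}{R-a_e}P_A+\tfrac{4}{R-a_o}P_S$, and you then need a \emph{lower} bound on its ground state energy exhibiting the Lai--Sutherland correction. Your proposal to decouple the spatial contact from the spin expectation by ``first-order perturbation theory'' is precisely the step that is not known how to make rigorous: $\sum_i\braket{P_S^{i,i+1}\delta(x_i-x_{i+1})}$ does not factorize in the true ground state, the perturbation $P_S^{i,i+1}\delta(x_i-x_{i+1})$ is small only in its coefficient and not in any operator sense, and a perturbative expansion of the energy does not by itself yield a one-sided bound with controlled error. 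The paper states explicitly (Section \ref{resultsLLH}) that it \emph{cannot} prove the matching lower bound for the LLH model because the lower-bound method only covers scalar potentials; your route reduces Theorem \ref{TheoremLowerBoundSpin} to that open problem.

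The paper circumvents this with a different device: it applies Dyson's lemma with \emph{two different radii}, $R$ on the spin-symmetric sector and $R-(a_o-a_e)$ on the spin-antisymmetric sector, chosen so that $R-a_o=\big(R-(a_o-a_e)\big)-a_e$, whence the resulting delta coupling is the \emph{same scalar} $2/(R-a_o)$ in both sectors. The spin dependence is thereby moved from the coupling constant into the amount of space removed. For each assignment $T$ of projections $P_A^{i,i+1}$ or $P_S^{i,i+1}$ to the $N-1$ consecutive bonds, one obtains the ordinary, exactly solvable, scalar Lieb--Liniger model on an interval of length $L-(N-1)R+(N-1-n_{P_S}(T))(a_o-a_e)$; expanding the known Lieb--Liniger energy in the density, the $T$-dependent length produces the term proportional to $n_{P_S}(T)(a_o-a_e)$, and summing over $T$ weighted by the corresponding projection operators reconstructs $\frac1N\sum_iP_S^{i,i+1}$, whose minimal expectation value is $\epsilon^{\textnormal{LS}}_J$. (Some care is needed with the ordering of the non-commuting projections on adjacent bonds, which the paper handles by treating even and odd bonds separately.) To complete your proof you should replace your factorization step by this two-radii argument; as written, that step is a genuine gap, not a technicality.
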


\subsection{Preparatory Facts: Dyson's Lemma and the Lieb--Liniger Model}
\label{secprep}
For fermions with spin, we will use the following generalization of Dyson's lemma, which was proved for spinless bosons (and fermions) in \cite{agerskov2022ground}.
\begin{lemma}[Dyson for spin-$J$ fermions, see Lemmas 2 and 21 in \cite{agerskov2022ground} and Appendix C in \cite{lieb2006mathematics}]
	\label{LemmaDysonSpin1/2Fermi}
	Let $ R>R_0=\textnormal{range}(v) $ and let
    \[
    \varphi\in \left(H_{\textnormal{even}}^1(\R)\otimes P_A \left(\C^{2J+1}\otimes \C^{2J+1}\right)\right)\oplus\left(H_{\textnormal{odd}}^1(\R)\otimes P_S \left(\C^{2J+1}\otimes\C^{2J+1}\right)\right).
    \]
    Then, for any interval $ \mathcal{I}\ni 0 $, 
	\begin{equation}
    \begin{aligned}
    \label{19283198}
		\int_{\mathcal{I}} \abs{\partial \varphi}^2+\frac12 v\abs{\varphi}^2&\geq \int_{\mathcal{I}}\overline{\varphi}\left(\frac{1}{R-a_e}P_A+\frac{1}{R-a_o}P_S\right)\left(\delta_R+\delta_{-R}\right)\varphi\\
        &\geq \int_{\mathcal{I}}\overline{\varphi}\left(\frac{1}{R-a_e}\right)\left(\delta_R+\delta_{-R}\right)\varphi,
    \end{aligned}
	\end{equation}
	where $ a_{e}$ and $a_o$ are the even and odd-wave scattering lengths of $v$.
\end{lemma}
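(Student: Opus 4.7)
The plan is to reduce to the scalar (spinless) Dyson lemmas from \cite{agerskov2022ground} — the bosonic (even-wave) Lemma 2, giving the coefficient $1/(R-a_e)$ for spatially symmetric scalar functions, and the fermionic (odd-wave) Lemma 21, giving $1/(R-a_o)$ for spatially antisymmetric ones — applied separately to the two orthogonal pieces of $\varphi$ singled out by the spin projections $P_A$ and $P_S$.

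Concretely, I would write $\varphi = \varphi_A + \varphi_S$ with $\varphi_A := P_A\varphi$ (spatially even, spin antisymmetric) and $\varphi_S := P_S\varphi$ (spatially odd, spin symmetric); this decomposition is unique and exhausts $\varphi$ by the hypothesis on the space to which $\varphi$ belongs. Since $P_A,P_S$ are orthogonal, constant-in-$x$ projections and $v$ acts as a scalar on the spin factor, the energy splits additively,
\begin{equation*}
\int_{\mathcal{I}}|\partial\varphi|^2+\tfrac12 v|\varphi|^2 = \int_{\mathcal{I}}\bigl(|\partial\varphi_A|^2+\tfrac12 v|\varphi_A|^2\bigr) + \int_{\mathcal{I}}\bigl(|\partial\varphi_S|^2+\tfrac12 v|\varphi_S|^2\bigr).
\end{equation*}
Fixing orthonormal bases of $P_A(\C^{2J+1}\otimes \C^{2J+1})$ and $P_S(\C^{2J+1}\otimes \C^{2J+1})$, expanding pointwise, and applying the scalar even-wave Dyson lemma to each scalar component of $\varphi_A$ and the scalar odd-wave one to each scalar component of $\varphi_S$, then summing and using spin orthogonality, yields
\begin{equation*}
\int_{\mathcal{I}}|\partial\varphi|^2+\tfrac12 v|\varphi|^2 \geq \int_{\mathcal{I}}\bar\varphi\left(\frac{P_A}{R-a_e}+\frac{P_S}{R-a_o}\right)(\delta_R+\delta_{-R})\varphi,
\end{equation*}
which is the first inequality in \eqref{19283198}. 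The second inequality is immediate from $a_e\leq a_o$: then $1/(R-a_e)\leq 1/(R-a_o)$, and $P_A+P_S=\mathds{1}$ gives $\frac{P_A}{R-a_e}+\frac{P_S}{R-a_o}\geq \frac{\mathds{1}}{R-a_e}$ as quadratic forms.

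The main point requiring care is the restriction to an arbitrary interval $\mathcal{I}\ni 0$ instead of exactly $[-R,R]$. If $\pm R\notin\mathcal{I}$ the corresponding delta contributes nothing on the right and the inequality is trivial from positivity of the left; if $\mathcal{I}\supseteq[-R,R]$ I would first drop the nonnegative kinetic contribution outside $[-R,R]$ (which is allowed because $v$ is supported in $[-R_0,R_0]\subset(-R,R)$) and then invoke the scalar lemmas as stated. Intermediate cases, where $\mathcal{I}$ contains only one of $\pm R$, are handled analogously by dropping the absent delta. The spatial parities of $\varphi_A$ and $\varphi_S$, which are exactly what select $a_e$ versus $a_o$ in the scalar lemmas, are preserved by the spin projection step and so the boundary conditions underlying \eqref{dyson1} and \eqref{dyson2} apply verbatim to each scalar component.
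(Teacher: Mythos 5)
Your proposal is correct and follows essentially the same route as the paper, which simply observes that the spin projections $P_A$ and $P_S$ select the spatially even and spatially odd components respectively, so the scalar even-wave and odd-wave Dyson lemmas from \cite{agerskov2022ground} apply componentwise and the results add by orthogonality, with the second inequality following from $a_e\leq a_o$ and $P_A+P_S=\mathds{1}$. Your extra care about the interval $\mathcal{I}$ and the basis expansion just makes explicit what the paper leaves implicit.
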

The proof is simply that of the bosonic case, or alternatively the spin-polarized fermionic case, both discussed in \cite{agerskov2022ground}, together with the fact that two spin-$J$ fermions have a symmetric spatial wave function only when their spin is antisymmetric and vice versa. The second inequality in \eqref{19283198} follows from positivity and $a_o\geq a_e$.

\subsubsection{The Lieb--Liniger model}
\label{SecLLNeumann}
The Lieb--Liniger model describes bosons in one dimension interacting through a contact potential. The formal Hamiltonian (to be understood in the quadratic form sense) is
\begin{equation}
\label{LLhamil}
	H_{\textnormal{\textnormal{LL}}}=-\sum_{i=1}^{N}\partial^2_{i}+2c\sum_{1\leq i<j\leq N}\delta(x_i-x_j),
\end{equation}
acting on the Hilbert space $L^2_{\text{sym}}([0,L]^N)$. We write $E_{\textnormal{LL}}^{\textnormal{Neumann}}(N,L,c)$ for its ground state energy at interaction strength $c$ and interval length $L$. In \cite[Lemma 17]{agerskov2022ground}, we proved the following lower bound. The proof uses the exact solution for periodic boundary conditions and a method of Robinson \cite{robinson2014thermodynamic} to adapt to Neumann boundary conditions. 

\begin{lemma}[Lemma 17 in \cite{agerskov2022ground}]
\label{LemmaLiebLinigerNeumannLowerBound}
The ground state energy of the Lieb--Liniger model \eqref{LLhamil} with Neumann boundary conditions is bounded below by
\[
E_{\textnormal{LL}}^{\textnormal{Neumann}}(N,L,c)\geq \frac{\pi^2}{3}N\rho^2\left(1-4\rho/c-\textnormal{const. }\frac{1}{N^{2/3}}\right).
\]
\end{lemma}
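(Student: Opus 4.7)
The plan is to deduce the Neumann Lieb--Liniger lower bound from the exact Bethe-ansatz solution for the periodic problem, via a Robinson-style boundary-condition comparison. First, I would recall that the periodic Lieb--Liniger ground state energy is determined exactly by the Lieb--Liniger integral equation, with thermodynamic energy per particle $\rho^2 e(c/\rho)$, where the Lieb--Liniger function satisfies $e(\gamma) = \tfrac{\pi^2}{3}(1 - 4/\gamma + \mathcal{O}(\gamma^{-2}))$ as $\gamma \to \infty$. A finite-$N$ lower bound of the form $E_{\textnormal{LL}}^{\textnormal{per}}(N,L,c) \geq \tfrac{\pi^2}{3}N\rho^2(1 - 4\rho/c) - \textnormal{const.}\cdot \rho^2 N^{1/3}$ then follows from a quantitative analysis of the discrete Bethe equations, using convexity of the energy density in $\rho$ and explicit estimates on the spacing of the Bethe roots in the dilute regime.

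Next I would apply Robinson's method to pass from periodic to Neumann boundary conditions. The idea is to partition $[0,L]$ into $M$ equal blocks of length $\ell = L/M$. For a given Neumann $H^1$-state $\Psi$, the kinetic energy decomposes as a sum over blocks and, since the Lieb--Liniger interaction is a delta function, inter-block interaction terms vanish identically. The restriction of $\Psi$ to a block lies in $H^1$ and is therefore admissible as a Neumann state on that block, which gives
\[
    E_{\textnormal{LL}}^{\textnormal{Neumann}}(N,L,c) \geq \min_{N_1+\cdots+N_M=N} \sum_{j=1}^M E_{\textnormal{LL}}^{\textnormal{Neumann}}(N_j,\ell,c).
\]
On each block, a reflection trick converts the Neumann ground state on $[0,\ell]$ into a periodic state on a ring of length $2\ell$ at the same density, at the cost of an extra delta contribution at the reflection point. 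This reduces the block problem to the known periodic result, with a boundary error that can be absorbed.

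I would then optimize $M$ to balance two competing errors: the per-block finite-size Bethe correction (which worsens as $M$ grows, since each block contains only $\sim N/M$ particles) against the cumulative boundary mismatch from the partition (which worsens as $M$ shrinks). The choice $M \sim N^{1/3}$ should then yield the claimed overall rate $\mathcal{O}(N^{-2/3})$ relative to the free-Fermi energy $\tfrac{\pi^2}{3}N\rho^2$.

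The main obstacle is to control the reflection-induced boundary delta-interactions and the distribution of particle numbers $\{N_j\}$ across blocks without destroying the $-4\rho/c$ correction. Concretely, after decomposing over particle-block assignments, one must show that configurations with highly non-uniform $\{N_j\}$ contribute negligibly, which by convexity of the per-block energy in $\rho_j = N_j/\ell$ reduces to showing that the two-body density of the Neumann ground state does not pile up near the block boundaries -- a property one would verify via an a priori bound analogous in spirit to the density estimates used in the fermionic lower bound sketched in Section \ref{proofideas}.
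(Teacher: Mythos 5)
The paper does not actually prove this lemma: it is imported verbatim from \cite[Lemma 17]{agerskov2022ground}, and the only description of the method given here is that it combines the exact solution for periodic boundary conditions with ``a method of Robinson'' to adapt to Neumann boundary conditions. Your top-level plan is consistent with that description, but two of your concrete steps would fail as written, and a third (the finite-$N$ control of the periodic Bethe solution, which is where the $N^{-2/3}$ actually originates) is asserted rather than argued.

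The reflection step is the crux, and it does not work as stated. Reflecting an $N$-particle Neumann state on $[0,\ell]$ yields an $N$-particle periodic state on a ring of circumference $2\ell$, i.e.\ at density $\rho/2$; the resulting inequality $E^{\textnormal{Neumann}}_{\textnormal{LL}}(N,\ell,c)\geq E^{\textnormal{per}}_{\textnormal{LL}}(N,2\ell,c)$ has leading term $\tfrac{\pi^2}{3}N(\rho/2)^2$, a factor of $4$ too small. Keeping ``the same density'' requires a $2N$-particle periodic trial state, and you do not say where the second $N$ particles come from. The natural candidate---the symmetrization of the Neumann ground state on $[0,\ell]$ tensored with its mirror image on $[\ell,2\ell]$---is not in $H^1$ of the ring, because the Neumann ground state does not vanish at the wall, so the glued function jumps when a particle crosses $x=\ell$ (or $x=0\equiv 2\ell$); repairing this costs kinetic and norm errors that must be quantified, and that quantification is precisely the content of the Robinson-type argument missing from your sketch. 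Separately, the block decomposition is unnecessary and counterproductive for this lemma: the boundary-condition comparison can be run on $[0,L]$ directly, whereas $M$ blocks of $N/M$ particles carry a per-block finite-size error $(N/M)^{-2/3}$, so your choice $M\sim N^{1/3}$ yields a relative error of order $N^{-4/9}$, worse than the claimed $N^{-2/3}$; the trade-off you describe is also backwards (partition boundary errors grow with $M$, they do not shrink), and the optimum of your own scheme is $M=1$. Finally, the minimum over $\{N_j\}$ cannot be dismissed by convexity alone, since the per-block expansion $\tfrac{\pi^2}{3}N_j\rho_j^2(1-4\rho_j/c)$ is valid, and convex in $N_j$, only in the dilute regime $\rho_j\ll c$, which the minimizing configuration need not respect.
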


\subsection{Proof of the lower bound (Theorem \ref{TheoremLowerBoundSpin})}
\label{SectionLowerBoundProof}
We will adapt the strategy used in the bosonic case \cite{agerskov2022ground}, outlined heuristically in Section \ref{proofideas}. Summarizing, this strategy used the following steps.
\begin{itemize}
\item Let's say the particle configuration is such that $x_{i+1}$ and $x_i$ are neighbouring particles. We fix some radius $R$ bigger than the support of the potential $R_0$, to be decided later, and apply Dyson's lemma to lower bound the energy contained in the region $|x_{i+1}-x_i|\leq R$ by a Hamiltonian involving $\delta$-potentials at $x_{i+1}-x_i=\pm R$.
\item We throw away the regions with $|x_{i+1}-x_i|<R$ for a lower bound (note the energy density is positive), which causes the $\delta$-potentials to collapse into a single delta (of twice the strength) at $x_{i+1}=x_i$. This gives the Lieb--Liniger model on a smaller interval. Using the known ground state energy, we find that it produces the desired expression. 
\item To really obtain the result of this calculation as a lower bound, we demonstrate that the norm of the wave function in the region that was thrown out is negligible. 
\item However, this strategy only works for a limited number of particles, and so an argument to extend to any particle number is needed. We use boxes and superadditivity caused by the positive potential as in \cite{lieb1998ground}.
\end{itemize}
What is different for particles with spin? As we can see from Lemma \ref{LemmaDysonSpin1/2Fermi} (Dyson's lemma for spin-$J$ fermions), the strength of the delta function that we obtain at radius $R$ depends on the spin symmetry of the two particles: antisymmetric spin goes with the even-wave scattering length $a_e$, whereas symmetric spin goes with the odd-wave $a_o$. This means we do not find Lieb and Liniger's spin-independent delta-function by using the above strategy and Lemma \ref{LemmaDysonSpin1/2Fermi}. This is a problem because the use of the exact expression for the ground state energy of the Lieb--Liniger model was crucial to the proof.   

However, there is something we can try. Note that in Lemma \ref{LemmaDysonSpin1/2Fermi}, we could use two different radii for the spin singlet and triplet, say $R>R_0$ and $R'>R_0$, such that $R-a_o=R'-a_e$. This would mean that we again obtain a spin-independent delta-interaction. Of course, there is a catch: if we were to follow the steps above, we should then throw out two different regions $|x_{i+1}-x_i|<R$ and $|x_{i+1}-x_i|<R'$ depending on whether the combined spin of the two particles is symmetric or antisymmetric!

This seems very inconvenient, perhaps even hopeless, but it is not. If we keep track of the spin-symmetry of each pair of neighbouring particles using projection operators, we know how much space we need to remove. This depends on the spin symmetry, but in all cases, we end up with the Lieb--Liniger model---what differs is the length of the interval. Summing the exact formula for the ground state energy over all possible spin symmetries, we obtain the Lai--Sutherland spin chain \eqref{spinchainlb}. These ideas are made precise in Section \ref{lbspn}.  

The final two steps of the strategy above are essentially unchanged: we discuss these briefly in Sections \ref{secnormloss} and \ref{lbapn}, referencing the bosonic proof in \cite{agerskov2022ground}.

\subsubsection{Lower bound for small particle numbers}
\label{lbspn}
Recall that $v$ is an even, repulsive (positive) interaction compactly supported in $[-R_0,R_0]$, and that $a_e$ and $a_o$ are its even and odd wave scattering lengths. For now, let $ \Psi $ be some (normalized) wave function describing $N$ fermions (in the domain \eqref{domainfermions} of the energy functional \eqref{functional}), which means that $\Psi(y_1,\dots,y_N)$ takes values in $(\C^{2J+1})^{\otimes N}$. We fix a radius $R>R_0+a_o-a_e$ that we will use throughout the proof.

To apply Dyson's lemma (Lemma \ref{LemmaDysonSpin1/2Fermi}) to the many-body case, define the set
\begin{equation}
\label{112341}
\Theta=\left\{(T_1,T_2,\dots,T_{N-1}) \ \vert\  T_i\in\{P_A^{i,i+1},P_S^{i,i+1}\}\right\},
\end{equation}
and the distances
\[
R_{T_i}=\begin{cases}
  R-(a_o-a_e) & T_i=P_A^{i,i+1}\\
  R & T_i=P_S^{i,i+1},\\
  \end{cases}
\]
with $R>R_0+a_o-a_e$ so that both distances are strictly bigger than the support of the potential $R_0$ (note $R_0\geq a_o\geq a_e$ and $a_o\geq0$, given the energies in \eqref{dyson1} and \eqref{dyson2}). These are the radii that we will use to apply Lemma \ref{LemmaDysonSpin1/2Fermi}: the bigger radius $R>R_0$ for symmetric spin combinations, and the smaller radius $R-(a_o-a_e)>R_0$ for antisymmetric spin combinations. The fact that these radii are different will mean that the result of the lemma depends on the spin symmetry defined by $T\in\Theta$. Therefore, we write
\begin{equation}
\label{psiT}
\psi_T(y_1,y_2,\dots,y_N)=\underbrace{\prod_{\substack{
  j\textnormal{ even}}}T_j}_{\coloneqq\Pi^T_{e}}\underbrace{\prod_{\substack{
  k\textnormal{ odd}}}T_k}_{\coloneqq \Pi^T_{o}}\Psi\left(y_1,y_2+R_{T_1},y_3+R_{T_1}+R_{T_2},\dots,y_N+\sum^{N-1}_{i=1}R_{T_i}\right),
\end{equation}
which is defined on the set
\[
M_T:=\{0< y_1<y_2<\dots<y_N<L-\sum_{i=1}^{N-1}R_{T_i} \}.
\]
This corresponds to selecting a particular spin symmetry for $\Psi$ with projection operators defined by $T$, and cutting away the regions $\{x_i-x_{i-1}<R_{T_i}\}$ from the domain, where $1\leq i\leq N-1$ (that is, those regions in which neighbouring particles are within distance $R_{T_i}$), which we will do after applying Dyson's lemma. The length of the remaining interval is $L-\sum_{i=1}^{N-1}R_{T_i}$.
The order of the projections $T_i$ matters for the definition: projections appearing within $\Pi^T_{e}$ and $\Pi^T_{o}$ commute, and their definition is unambiguous, but $\Pi^T_e$ and $\Pi^T_o$ themselves do not commute. The following lemma gives the details on applying Dyson's lemma.

\begin{lemma}\label{LemmaLowerBoundDyson+SpaceRemoved}
With the definitions above, for any $\Psi$ in the domain \eqref{domainfermions} of the energy functional \eqref{functional}, we have that
\begin{equation}\label{EqDysonLowerBound+SpaceRemoved}
		\begin{aligned}
		&\int_{\{0<x_1<\dots<x_N<L\}}\left(\sum_{i}\abs{\partial_i\Psi}^2+\sum_{i\neq j} \frac{1}{2}v_{ij}\abs{\Psi}^2\right)dx_1\cdots dx_N\\
        &\qquad\qquad \geq\sum_{T\in\Theta}\int_{M_T}\left(\sum_{i=1}^{N}\abs{\partial_i\psi_{T}}^2+\sum_{i=1}^{N-1}\frac{2}{R-a_o}\delta(y_{i+1}-y_i)\abs{\psi_T}^2\right)dy_1\cdots dy_N\\
        &\qquad\qquad \geq \sum_{T\in \Theta}E^{\textnormal{Neumann}}_{\textnormal{LL}}\left(N,L-\sum_{i\in T}R_{T_i},\frac{2}{R-a_o}\right)\int_{M_T}\abs{\psi_T}^2 dy_1\cdots dy_N,
		\end{aligned}
	\end{equation}
    where the last line contains the ground state energy of the Lieb--Liniger model with Neumann boundary conditions discussed in Section \ref{SecLLNeumann}.
\end{lemma}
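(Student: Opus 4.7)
The strategy is to extend the bosonic lower bound of \cite{agerskov2022ground} to the spin-$J$ setting via the spin-dependent Dyson's lemma (Lemma~\ref{LemmaDysonSpin1/2Fermi}). Working on the ordered sector $\{0<x_1<\cdots<x_N<L\}$, I would first drop the non-adjacent potential terms using $v\geq 0$, leaving $\sum_i|\partial_i\Psi|^2+\sum_{i=1}^{N-1}v_{i,i+1}|\Psi|^2$. Next, Dyson's lemma is applied to each adjacent pair on the relative coordinate $x_{i+1}-x_i$ with a spin-sector-dependent radius $R_{T_i}$, chosen precisely so that the dominant delta strength is the uniform $\tfrac{1}{R-a_o}$ independent of $T_i$: for $T_i=P_A^{i,i+1}$, $R_{T_i}-a_e=R-a_o$; for $T_i=P_S^{i,i+1}$, $R_{T_i}-a_o=R-a_o$. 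Combining with $|\partial_i|^2+|\partial_{i+1}|^2\geq 2|\partial_\xi|^2$ to pair the relative-coordinate kinetic energy with the Dyson form, then restricting to the subregion $\{x_{i+1}-x_i>R_{T_i},\ \forall i\}$ by positivity (so that the deltas at $\pm R_{T_i}$ collapse at the cut into a single doubled delta of strength $\tfrac{2}{R-a_o}$), and shifting $y_i=x_i-\sum_{k<i}R_{T_k}$ to land on $M_T$ yields the first inequality. The second inequality is then Lemma~\ref{LemmaLiebLinigerNeumannLowerBound} applied to $\psi_T$ on $M_T$ with coupling $c=\tfrac{1}{R-a_o}$.

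The sum $\sum_{T\in\Theta}$ on the RHS is organized by the pointwise spin decomposition $|\Psi(x)|^2=\sum_T|\psi_T(x)|^2$, with $\psi_T=\Pi^T_e\Pi^T_o\Psi$. This identity follows from the operator identity $\sum_T \Pi^T_o\Pi^T_e\Pi^T_o=I$, which uses $P_A^{i,i+1}+P_S^{i,i+1}=I$ together with the commutativity of even-indexed (respectively, odd-indexed) nearest-neighbor projections, which act on disjoint particle pairs. Since spatial derivatives and the scalar potential $v_{i,i+1}$ commute with spin projections, the identity upgrades to the kinetic and potential energy densities, reducing the proof to the per-$T$ estimate
\[
\int_{\{0<x_1<\cdots<x_N<L\}}\!\Big(\sum_i|\partial_i\psi_T|^2+\sum_{i=1}^{N-1}v_{i,i+1}|\psi_T|^2\Big)\geq\int_{M_T}\!\Big(\sum_i|\partial_i\psi_T|^2+\sum_{i=1}^{N-1}\tfrac{2}{R-a_o}\delta(y_{i+1}-y_i)|\psi_T|^2\Big).
\]

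The main obstacle is ensuring, for each fixed $T$ and pair $(i,i+1)$, that the Dyson estimate delivers $\tfrac{1}{R-a_o}$ uniformly on both the $T_i$- and the complementary spin component of $\psi_T$. For even $i$ this is automatic: $T_i$ sits in the outer projection $\Pi^T_e$, so $T_i\psi_T=\psi_T$ and only the $T_i$-sector contributes to the Dyson bound. For odd $i$, non-commutativity of $P^{i,i+1}_{A/S}$ with the neighboring even-pair projections in $\Pi^T_e$ can make $\psi_T$ leak across spin sectors; when $T_i=P_A$ the leakage is harmless because the complementary Dyson coefficient $\tfrac{1}{R-2a_o+a_e}$ dominates $\tfrac{1}{R-a_o}$ precisely by $a_e\leq a_o$, while the $T_i=P_S$, odd-$i$ case is absorbed by symmetrizing with the companion ordering $\Pi^T_o\Pi^T_e\Psi$, which interchanges the roles of the parities. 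The remaining manipulations---the change of variables on the cut region, the recombination of the two deltas at the boundary, and the final Lieb--Liniger lower bound---are direct adaptations of the spinless argument in \cite{agerskov2022ground}.
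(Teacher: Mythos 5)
Your overall architecture is the paper's: spin‑dependent Dyson radii $R_{T_i}$ chosen so that $R_{T_i}-a_{e/o}=R-a_o$, removal of the regions $\{x_{i+1}-x_i<R_{T_i}\}$, the shift onto $M_T$, and the Lieb--Liniger bound at the end; organizing the $T$-sum through $\sum_{T}\Pi^T_o\Pi^T_e\Pi^T_o=I$ is also a correct account of the left-hand side (the paper instead inserts the projections sequentially, odd pairs first, then even pairs, but for the kinetic terms and a scalar $v$ the two reductions agree). Two smaller slips: the coupling in the last line is $c=\tfrac{2}{R-a_o}$, not $\tfrac{1}{R-a_o}$, as the statement itself records; and in the ordered sector each pair produces a single delta at $x_{i+1}-x_i=R_{T_i}$, doubled because Dyson is applied once with $\partial_i$ and once with $\partial_{i+1}$, each carrying $\tfrac12 v_{i,i+1}$ --- if you instead use $|\partial_i|^2+|\partial_{i+1}|^2\geq 2|\partial_\xi|^2$ you must do so only on the regions that get discarded, or you lose the centre-of-mass kinetic energy that the first inequality retains on $M_T$.

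The genuine gap is in your third paragraph, where the parity analysis is backwards. The delicate point is not which spin sector $\psi_T$ lies in, but whether the hypothesis of Lemma~\ref{LemmaDysonSpin1/2Fermi} holds for the function you feed it: the coefficient $\tfrac{1}{R-a_o}$ on the $P_S$ component requires that component to be spatially odd, i.e.\ to vanish at $x_i=x_{i+1}$; without that, the unconditional (free boundary condition) bound only gives $\tfrac{1}{R_{T_i}-a_e}$. For $T_i=P_A^{i,i+1}$ this is already enough, since $\tfrac{1}{R_{T_i}-a_e}=\tfrac{1}{R-a_o}$ applies to \emph{every} spin component with no symmetry hypothesis --- so the ``leakage'' you worry about for odd $i$ is harmless for a cleaner reason than the one you give (your $\tfrac{1}{R-2a_o+a_e}$ coefficient for the leaked $P_S$ part itself presupposes an unverified Dirichlet condition). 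For $T_i=P_S^{i,i+1}$ the vanishing at coincidence is essential, and it \emph{does} hold for odd $i$: the remaining factors of $\Pi^T_o$ act on disjoint pairs, so $\Pi^T_o\Psi|_{x_i=x_{i+1}}=\bigl(\prod_{k\neq i}T_k\bigr)P_S^{i,i+1}P_A^{i,i+1}\Psi|_{x_i=x_{i+1}}=0$, hence $\psi_T$ vanishes there and no ``symmetrization with the companion ordering'' is needed (changing the ordering would change $\psi_T$ and hence the statement being proved). For \emph{even} $i$ it is precisely \emph{not} automatic: $T_i\psi_T=\psi_T$ fixes the spin sector, but the inner $\Pi^T_o$ contains $P^{i-1,i}$ and $P^{i+1,i+2}$, which do not commute with the exchange of particles $i$ and $i+1$, so $P_S^{i,i+1}\Pi^T_o\Psi|_{x_i=x_{i+1}}\neq0$ in general --- already for $N=3$, $J=1/2$ one checks $P_S^{2,3}P_A^{1,2}\chi\neq0$ for $\chi=\,\uparrow\otimes\tfrac{1}{\sqrt2}(\uparrow\downarrow-\downarrow\uparrow)$. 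This even-$i$, $T_i=P_S$ case is exactly where the difficulty of the lemma sits (it is what the paper's prescribed insertion order is aimed at), and your proposal declares it solved at the one place where it is not.
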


\begin{proof} We start by proving the first inequality in \eqref{EqDysonLowerBound+SpaceRemoved}. Recalling that $|.|^2$ relates to the inner product in $\otimes^N\C^{2J+1}$, and decomposing into orthonormal projections $P_A^{1,2}+P_S^{1,2}=I$, we have\footnote{\label{Footnote:lower bound for matrix potentials}This step is the obstruction in proving the lower bound for the matrix-valued potentials that we considered for the upper bound. For our method, it is crucial that we can split the energy in a sum of terms that contain either $P_A^{1,2}\Psi$ or $P_S^{1,2}\Psi$, but not both. However, for e.g.\ a matrix-valued interaction term $\langle\Psi|V_{23}\Psi\rangle$, this split is not possible since $V_{23}$ may not commute with $ P_A^{1,2}$ and $ P_S^{1,2}$ (such as in the LLH model \eqref{functionalLLH}, for which $V_{23}=(c'P_A^{2,3}+cP_S^{2,3})\delta_{23}$). For a scalar term $v_{23}|\Psi|^2$, this is not a problem.}
\begin{equation}
\label{someeq203090}
		\begin{aligned}
			\int_{\{0<x_1<\dots<x_N<L\}} &\sum_{i}\abs{\partial_i\Psi}^2+\sum_{i\neq j} \frac{1}{2}v_{ij}\abs{\Psi}^2=\\
       &\sum_{T_1\in\{ P_A^{1,2},P_S^{1,2}\}}\int_{\{0<x_1<\dots<x_N<L\}} \sum_{i}\abs{\partial_iT_1\Psi}^2+\sum_{i\neq j} \frac{1}{2}v_{ij}\abs{T_1\Psi}^2.
		\end{aligned}
	\end{equation}
Since $P_A^{1,2}\Psi$ and $P_S^{1,2}\Psi$ are symmetric and antisymmetric as functions of $x_1$ and $x_2$, respectively, we can apply the version of Dyson's lemma from Lemma \ref{LemmaDysonSpin1/2Fermi} to 
$$	\int_{\{0<x_2-x_1<R_{T_1}\}}\abs{\partial_{1}T_{1}\Psi}^2+\frac{1}{2}v_{12}\abs{T_{1}\Psi}^2 \quad\textnormal{ and }\quad \int_{\{0<x_2-x_1<R_{T_1}\}}\abs{\partial_{2}T_{1}\Psi}^2+\frac{1}{2}v_{12}\abs{T_{1}\Psi}^2.
$$ 
From \eqref{someeq203090}, this gives
\begin{equation}
		\begin{aligned}
			&\int_{\{0<x_1<\dots<x_N<L\}} \sum_{i}\abs{\partial_i\Psi}^2+\sum_{i\neq j} \frac{1}{2}v_{ij}\abs{\Psi}^2\geq\\
			&\sum_{T_1\in\{ P_A^{1,2},P_S^{1,2}\}}\int_{\{R_{T_1}<x_1+R_{T_1}<x_2<\dots<x_N<L\}}\Bigg[\sum_{i=1}^{N}\abs{\partial_iT_1\Psi}^2 +\frac{2}{R-a_o}\delta(x_{2}-x_1-R_{T_1})\abs{T_1\Psi}^2\\
   &\qquad\qquad\qquad\qquad\qquad\qquad\qquad\qquad\qquad\quad+\sum_{2\leq j<k\leq N}v_{jk}\abs{T_1\Psi}^2\Bigg],
		\end{aligned}
	\end{equation}
where we used positivity of $v_{1j} $ for $j\geq3$, and we discarded the integral over the region $\{0<x_2-x_1<R_{T_1}\}$ by positivity. Note that the amount of space removed differs between the symmetric and antisymmetric cases, but that the delta function ends up with coupling $2/(R-a_o)$ in both cases.
 
We now repeat the above steps to insert $T_2,T_3,\dots T_{N-1}$, but we need to do this in a particular order to arrive at the claim. Given the order of the projections in the definition \eqref{psiT} of $\psi_T$, we first repeat the steps above for odd $k$, followed by the even values of $k$.
 
After making the $T$-dependent change of coordinates $(y_1,y_2,\dots,y_N)=(x_1,x_2-R_{T_1},\dots,x_N-\sum^{N-1}_{i=1}R_{T_i})$, we find the first inequality in \eqref{EqDysonLowerBound+SpaceRemoved} as desired. The second inequality follows by noting that the second line of $\eqref{EqDysonLowerBound+SpaceRemoved}$
is the Lieb--Liniger energy of the state $\psi_T$ in a box of length $L-\sum_{i\in T}R_i$ with Neumann boundary conditions and coupling $c=2/(R_0-a_o)$, restricted to the particle ordering defined by $M_T$.
\end{proof}
The terms of the final sum in \eqref{EqDysonLowerBound+SpaceRemoved} have a $T$-dependence, and we will have to deal with this to continue the lower bound. In particular, the functions $\psi_T$ are defined on an interval of length
\begin{equation}
\label{some203901111}
L-\sum_{i\in T}R_i = L-(N-1)R+(N-1-n_{P_S}(T))(a_o-a_e)\geq L-(N-1)R
\end{equation}
with $n_{P_S}(T)$ denoting the number of times $P_S$ appears in $T\in\Theta$ (see \eqref{112341}). It turns out to be convenient to cut back the functions $\psi_T$ to an interval of a fixed length $L-(N-1)R$ by defining $\tilde{\psi}\in L^2([0,L-(N-1)R]^N,\otimes^N\mathbb{C}^{2J+1})$, again starting from a given wave function $\Psi$, as
\begin{equation}
	\label{defpsi}
	\tilde{\psi}(z_1,z_2,\dots,z_N):=\Psi(z_1,z_2+R,z_3+2R,\dots,z_N+(N-1)R),
\end{equation}
extended symmetrically in the $z_i$ to other orderings of the particles (this ensures continuity so that we can later use $\tilde{\psi}$ as a trial state for the bosonic Lieb--Liniger model).
We have that the integral in the final line of \eqref{EqDysonLowerBound+SpaceRemoved} can be lower bounded by
\begin{equation}
\label{some23091}
\int_{M_T}\abs{\psi_T}^2 dy_1\cdots dy_N\geq \int_{\{0<z_1<\dots<z_N<L-(N-1)R\}}|\Pi^T_{e}\ \Pi^T_{o}\tilde{\psi}|^2 dz_1\cdots dz_N=:\langle\tilde{\psi}\left\vert\Pi^T_{o}\ \Pi^T_{e}\ \Pi^T_{o}\right\vert  \tilde{\psi}\rangle^{\textnormal{sec}}_{L^2},
\end{equation}
where we introduced an inner product notation for the integral over the ordered  sector $\{0\leq x_1<\dots<x_N<L-(N-1)R\}$ that will be used in the rest of the proof. We will write $\langle.|.\rangle_{L^2}$ for the $L^2([0,L-(N-1)R]^N,\otimes^N\mathbb{C}^{2J+1})$ inner product on the full domain (recall $\langle.|.\rangle$ is used for the $\otimes^N\mathbb{C}^{2J+1}$ inner product). The following lemma is proved in Section \ref{secnormloss}.
\begin{lemma}\label{LemmaNormLossImproved}
Let $\Psi$ be the ground state corresponding to $E_{J}^{\textnormal{Neumann}}(N,L)$ considered in Theorem \ref{TheoremLowerBoundSpin}. With the definition of $\tilde{\psi}$ above and assuming that $N(\rho R)^2\leq  \frac{3}{16^2\pi^2}\frac{1}{8}$, and $ \rho R\leq \frac{1}{2}$, 
\[
\langle\tilde{\psi}\vert \tilde{\psi}\rangle_{L^2}\geq 1-\textnormal{const. }\left(N(\rho R)^3+N^{1/3}(\rho R)^2\right).
\]
\end{lemma}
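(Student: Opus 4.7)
The norm loss can be rewritten as a probability that two particles are within $R$ of each other: by the change of variables $z_i \mapsto z_i + (i-1)R$ (on ordered sectors) that defines $\tilde\psi$ from $\Psi$ in \eqref{defpsi}, together with antisymmetry of $\Psi$,
\begin{equation*}
1 - \langle \tilde\psi|\tilde\psi\rangle_{L^2} = \int_{[0,L]^N} |\Psi(x)|^2\,\mathbf{1}_{\mathcal{R}(x)<R}\,dx
\;\le\; \tfrac12 \int_{|x_1-x_2|<R} \rho^{(2)}_\Psi(x_1,x_2)\,dx_1\,dx_2,
\end{equation*}
where the inequality is a union bound on close pairs and the definition of the two-particle density of $\Psi$. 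The rest of the work is to estimate the right-hand side.

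To do so, I will feed in the upper bound from Corollary \ref{TheoremUpperBound2}: since $v\ge 0$, it yields the kinetic energy bound $\int\sum_i|\partial_i\Psi|^2 \le \mathcal{E}(\Psi) \le \tfrac{\pi^2}{3}N\rho^2\bigl(1+O(\rho\max(R_0,|a_e|,a_o))\bigr)$, so the excess over the free Fermi kinetic energy is a small $O(E_F\cdot \rho\max(R_0,|a_e|,a_o))$. To translate this into a near-diagonal estimate on $\rho^{(2)}_\Psi$, I will decompose $\Psi = P_A^{12}\Psi + P_S^{12}\Psi$. The triplet part $P_S^{12}\Psi$ is spatially antisymmetric in $(x_1,x_2)$ and therefore vanishes on the diagonal, so a 1D Cauchy--Schwarz along the $x_1$-direction gives $|P_S^{12}\Psi(x_1,x_2,\ldots)|^2 \le |x_1-x_2|\int_{x_2}^{x_1}|\partial_1 P_S^{12}\Psi|^2\,dt$, which integrates to a bound in terms of the kinetic energy density. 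For the singlet part $P_A^{12}\Psi$ (spatially symmetric, no diagonal vanishing), I will apply Dyson's lemma (Lemma \ref{LemmaDysonSpin1/2Fermi}) in the relative coordinate of particles 1 and 2: this bounds the weight of $P_A^{12}\Psi$ at distances $<R$ through the even-wave scattering length $a_e$ and the interaction energy, both of which are controlled by the energy upper bound.

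Combining the two sector estimates, integrating out the remaining $N-2$ coordinates, and using the free-Fermi asymptotics $\rho^{(2)}_F(x_1,x_2) \lesssim \rho^4(x_1-x_2)^2$ from Lemma \ref{LemmaDensityBounds} to extract the leading behaviour, will yield the two competing terms. The $N(\rho R)^3$ piece is the natural ``expected number of close pairs'' estimate, matching the free Fermi integral $\int_{|x_1-x_2|<R}\rho^{(2)}_F \sim N(\rho R)^3$. The $N^{1/3}(\rho R)^2$ piece arises from the cruder Poincaré/kinetic-energy bound that is needed to handle the singlet sector, after optimizing the free parameter (effectively a box-size choice) to balance the $R^2$ cost from Cauchy--Schwarz against the number of offending pairs.

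The main obstacle will be the spatially-symmetric sector, where $\Psi$ does not vanish at $x_1=x_2$: there a direct Poincaré bound is unavailable, and one must trade the smallness of the kinetic energy excess $\int\sum|\partial_i\Psi|^2 - E_F$ against the scattering-length contribution from Dyson's lemma. This trade-off is precisely what produces the subleading $N^{1/3}(\rho R)^2$ error rather than the sharper $N(\rho R)^3$ estimate that is immediately available in the antisymmetric sector; keeping this error genuinely subleading under the stated assumptions $N(\rho R)^2 \lesssim 1$ and $\rho R \le 1/2$ is the delicate point of the proof.
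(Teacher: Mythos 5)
Your overall strategy (feed in the energy upper bound, use a Poincar\'e-type estimate near coincidence points, treat the spatially symmetric sector separately) is in the right family, but two steps as written would fail, and together they are the real content of the lemma. First, the union bound over all $\binom{N}{2}$ pairs combined with a pairwise Cauchy--Schwarz in the relative coordinate gives, already in the antisymmetric sector, a bound of the form $\binom{N}{2}R^2\int\abs{\partial_1\Psi}^2\sim N^2R^2\rho^2=N\cdot N(\rho R)^2$, which under the hypothesis $N(\rho R)^2\lesssim 1$ is only $O(N)$ --- useless. The paper avoids this overcounting by working with the nearest-neighbour sets $D_i^j=\{\mathfrak{r}_i(x)=\abs{x_i-x_j}<R\}$, which for fixed $i$ are disjoint in $j$, so the restricted kinetic energies sum to the total kinetic energy once rather than $N$ times (Lemma \ref{LemmaNormLoss}). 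Even with that fix one only reaches norm loss $\lesssim R^2E_F\sim N(\rho R)^2$, still far from the claimed $N(\rho R)^3+N^{1/3}(\rho R)^2$. The decisive idea you are missing is the $\epsilon$-splitting of Lemma \ref{LemmaNormBoundEpsilon}: a fraction $\epsilon$ of the energy is fed into Dyson's lemma to produce the Lieb--Liniger energy $E_{\textnormal{LL}}^{\textnormal{Neumann}}(N,L-(N-1)R,2\epsilon/(R-a_e))$, which nearly saturates the upper bound of Corollary \ref{TheoremUpperBound2}, while the remaining fraction $1-\epsilon$ controls the norm loss with coefficient $(1-\epsilon)/(8R^2)$. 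The norm loss is then bounded by $8R^2/(1-\epsilon)$ times the energy \emph{deficit} (upper bound minus $E_{\textnormal{LL}}$), which is $O(E_F(\rho R+N^{-2/3}))$; this is exactly where both error terms come from. In particular $N^{1/3}(\rho R)^2$ is $8R^2E_F\cdot N^{-2/3}$, i.e.\ it originates in the $N^{-2/3}$ error of the Neumann Lieb--Liniger lower bound (Lemma \ref{LemmaLiebLinigerNeumannLowerBound}), not in a box-size optimization for the singlet sector as you suggest.

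Three further concrete problems. (i) Your premise that the kinetic energy exceeds $E_F$ by only $O(E_F\rho\max(R_0,\abs{a_e},a_o))$ is not available: the upper bound yields $\int\sum_i\abs{\partial_i\Psi}^2\le E_F(1+o(1))$, but there is no a priori lower bound $E_F(1-o(1))$ on the kinetic energy \emph{alone} (a spatially symmetric spin configuration can have arbitrarily small kinetic energy); the matching lower bound genuinely requires the potential via Dyson's lemma, which is precisely what the $\epsilon$-splitting implements. (ii) Lemma \ref{LemmaDensityBounds} concerns the \emph{free Fermi} reduced densities; the bound $\rho^{(2)}\lesssim\rho^4(x_1-x_2)^2$ cannot be transferred to the interacting ground state without essentially proving the theorem, so it cannot be used to ``extract the leading behaviour'' of $\int_{\abs{x_1-x_2}<R}\rho^{(2)}_\Psi$. (iii) Dyson's lemma does not bound the $L^2$ weight of $P_A^{1,2}\Psi$ at distances $<R$: it replaces the kinetic and potential energy inside $[-R,R]$ by boundary delta terms at $\pm R$ and says nothing about the mass inside; that mass is controlled by the separate Poincar\'e-type estimate of Lemma \ref{LemmaNormLoss}, which uses both the kinetic and the potential energy and needs no spatial symmetry decomposition at all.
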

The purpose of this lemma is to explain why it was not a problem to cut the ground state $\Psi$ (and $\psi_T$) back to $\tilde{\psi}$: the wave function remains normalized up to a small error. With this fact, we are ready to prove a lower bound on the energy with an error that is small when the particle number is not too large. 

\begin{proposition}[Lower bound for small particle number]\label{PropositionLowerBoundSmallParticleNumber}
	Assuming $N(\rho R)^2\leq \frac{3}{16^2\pi^2}\frac18$, and $ \rho R\leq \frac{1}{2} $, it holds that
	\[
		\begin{aligned}
			E_{J}^{\textnormal{Neumann}}(N,L)\geq N\frac{\pi^2}{3}\rho^2\Bigg[1&+2\rho \inf_{\substack{\chi\in \left(\C^{2J+1}\right)^{\otimes N}\\
			|\chi|=1}}\braket{\chi\left\vert\frac{1}{N}\sum_{i=1}^{N-1}\left(a_eP_A^{i,i+1}+a_oP_S^{i,i+1}\right)\right\vert\chi}\\&-\textnormal{const. }\left(N^{-2/3}+N(\rho R)^3+N^{1/3}(\rho R)^2\right)\Bigg].
		\end{aligned}
	\]
\end{proposition}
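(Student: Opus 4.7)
The plan is to execute the strategy of Section \ref{SectionLowerBoundProof}. I start from the normalized ground state $\Psi$ achieving $E_{J}^{\textnormal{Neumann}}(N,L)$ and apply Lemma \ref{LemmaLowerBoundDyson+SpaceRemoved} with the two Dyson radii $R$ (for symmetric pairs) and $R-(a_o-a_e)$ (for antisymmetric pairs), chosen so that both induced delta-couplings equal $c=2/(R-a_o)$. This reduces the energy on the sector $\{0<x_1<\dots<x_N<L\}$ to a sum over $T\in\Theta$ of Lieb--Liniger problems for $\psi_T$ on intervals of length $\ell_T = L-(N-1)R+(N-1-n_{P_S}(T))(a_o-a_e)$. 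Lemma \ref{LemmaLiebLinigerNeumannLowerBound} then yields $E_{\textnormal{LL}}^{\textnormal{Neumann}}(N,\ell_T,c)\geq \frac{\pi^2}{3}N\rho_T'^2\bigl(1-2\rho_T'(R-a_o)-\textnormal{const.}\,N^{-2/3}\bigr)$ with $\rho_T'=N/\ell_T$. A routine Taylor expansion using $\rho R\leq 1/2$ and $(1-x)^{-2}\geq 1+2x$ rearranges this as
\[
\tfrac{\pi^2}{3}N\rho^2\Bigl(1+2\rho a_e+\tfrac{2\rho(a_o-a_e)n_{P_S}(T)}{N}\Bigr)+\mathcal{O}\bigl(N\rho^2((\rho R)^2+\rho R/N)\bigr),
\]
so the entire $T$-dependence is captured by $n_{P_S}(T)$.

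Because $R\geq R_{T_i}$, positivity together with the inclusion $\tau_R(\{0<z_1<\dots<z_N<L-(N-1)R\})\subseteq \tau_T(M_T)$ gives
\[
\int_{M_T}|\psi_T|^2\,dy\geq\int_{\{0<z_1<\dots<z_N<L-(N-1)R\}}\langle\tilde\psi\,|\,\Pi^T_o\Pi^T_e\Pi^T_o\,|\,\tilde\psi\rangle\,dz.
\]
The task therefore reduces to bounding, pointwise in $z$ and uniformly over normalized $\chi\in(\C^{2J+1})^{\otimes N}$, the spin expectation
\[
S(\chi):=\sum_{T\in\Theta}\Bigl(1+2\rho a_e+\tfrac{2\rho(a_o-a_e)n_{P_S}(T)}{N}\Bigr)\langle\chi\,|\,\Pi^T_o\Pi^T_e\Pi^T_o\,|\,\chi\rangle
\]
from below by the desired spin-chain quantity.

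The crux of the argument is the algebra of these projections. Because within $\Pi^T_o$ (resp.\ $\Pi^T_e$) the projectors act on disjoint pairs and so mutually commute, and since $\sum_{T_i}T_i=I$ for each $i$, summing over the even indices first (the identity resolution inside $\Pi^T_e$) and then over the odd indices (using $(\Pi^T_o)^2=\Pi^T_o$) produces the clean identity $\sum_{T\in\Theta}\Pi^T_o\Pi^T_e\Pi^T_o=I$; this disposes of the $1+2\rho a_e$ piece. Performing the same calculation while holding $T_i$ fixed to $P_S^{i,i+1}$ yields $\sum_{T:T_i=P_S^{i,i+1}}\Pi^T_o\Pi^T_e\Pi^T_o=\textnormal{pinch}_{\textnormal{odd}}(P_S^{i,i+1})$, where $\textnormal{pinch}_{\textnormal{odd}}(X):=\sum_{T_{\textnormal{odd}}}\Pi^{T_{\textnormal{odd}}}_o X\Pi^{T_{\textnormal{odd}}}_o$ with $\Pi^{T_{\textnormal{odd}}}_o:=\prod_{k\textnormal{ odd}}T_k$. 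This pinching acts trivially ($=P_S^{i,i+1}$) for odd $i$ but is genuinely non-trivial for even $i$, and in general one does \emph{not} have the operator inequality $\textnormal{pinch}_{\textnormal{odd}}(P_S^{i,i+1})\geq P_S^{i,i+1}$. The key move is to pass to expectation values: for any normalized $\chi$,
\[
\Big\langle\chi\,\Big|\sum_{i=1}^{N-1}\textnormal{pinch}_{\textnormal{odd}}(P_S^{i,i+1})\,\Big|\chi\Big\rangle=\sum_{T_{\textnormal{odd}}}\|\Pi^{T_{\textnormal{odd}}}_o\chi\|^2\Big\langle\tilde\chi_{T_{\textnormal{odd}}}\,\Big|\sum_{i=1}^{N-1}P_S^{i,i+1}\Big|\tilde\chi_{T_{\textnormal{odd}}}\Big\rangle\geq\lambda_{\min}\!\Big(\sum_{i=1}^{N-1}P_S^{i,i+1}\Big),
\]
where $\tilde\chi_{T_{\textnormal{odd}}}:=\Pi^{T_{\textnormal{odd}}}_o\chi/\|\Pi^{T_{\textnormal{odd}}}_o\chi\|$ is a normalized spin vector (when non-zero) and the weights $\|\Pi^{T_{\textnormal{odd}}}_o\chi\|^2$ are non-negative and sum to $1$ because $\{\Pi^{T_{\textnormal{odd}}}_o\}_{T_{\textnormal{odd}}}$ is a mutually orthogonal resolution of the identity on the full spin space. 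The right-hand side is thus a genuine convex combination of expectations, each at least the Lai--Sutherland ground-state energy, so $S(\chi)\geq 1+2\rho[a_e+(a_o-a_e)\epsilon^{\textnormal{LS}}_J]+\mathcal{O}(\rho/N)$ uniformly in $\chi$.

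To conclude, I integrate this pointwise bound against $|\tilde\psi(z)|^2$, multiply by $N!$ to pass from one ordered sector to the full Hilbert-space integral, and invoke Lemma \ref{LemmaNormLossImproved} to replace $\|\tilde\psi\|^2$ by $1$ at the cost of a relative error $\mathcal{O}(N(\rho R)^3+N^{1/3}(\rho R)^2)$; combined with the $\mathcal{O}(N^{-2/3})$ error from Lemma \ref{LemmaLiebLinigerNeumannLowerBound} and the absorbable expansion errors of the first step, this produces the claimed bound. I anticipate the main obstacle to be exactly this pinching step: the naive operator identity $\sum_T n_{P_S}(T)\Pi^T_o\Pi^T_e\Pi^T_o=\sum_i P_S^{i,i+1}$ is \emph{false}, so one must work at the level of expectation values and exploit the convex-combination structure produced by pinching, which then pairs perfectly with the infimum-over-$\chi$ formulation of the Lai--Sutherland ground state energy appearing in the statement of the proposition.
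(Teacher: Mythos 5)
Your proposal is correct and follows essentially the same route as the paper: the same application of Lemma \ref{LemmaLowerBoundDyson+SpaceRemoved} with spin-dependent radii, the same Lieb--Liniger expansion \eqref{EqLL_LowerBound}, and your ``pinching'' identity together with the passage to expectation values is precisely the paper's \eqref{EqProjectionRelation2} combined with \eqref{some124} and $\sum_{T_o}\Pi^T_o=I$. Your observation that the naive operator identity $\sum_T n_{P_S}(T)\Pi^T_o\Pi^T_e\Pi^T_o=\sum_i P_S^{i,i+1}$ fails, and that one must instead exploit the convex-combination structure against the infimum over $\chi$, is exactly the point the paper's argument is built around.
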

\begin{proof}
Combining \eqref{some23091} with \eqref{EqDysonLowerBound+SpaceRemoved}, we have that for any $\Psi$, 
\begin{equation}
\label{EqEnergyLowerBound1}
\begin{aligned}
\int_{\{0<x_1<\dots<x_N<L\}}\sum_{i}&\abs{\partial_i\Psi}^2+\sum_{i\neq j} \frac{1}{2}v_{ij}\abs{\Psi}^2\\
&\geq\sum_{T\in\Theta}E^{\textnormal{Neumann}}_{\textnormal{LL}}\left(N,L-\sum^{N-1}_{i=1}R_{T_i},\frac{2}{R-a_o}\right)\langle\tilde{\psi}\left\vert\Pi^T_{o}\ \Pi^T_{e}\ \Pi^T_{o}\right\vert  \tilde{\psi}\rangle^{\textnormal{sec}}_{L^2}.
\end{aligned}
\end{equation}
where we used positivity of $E^{\textnormal{Neumann}}_{\textnormal{LL}}$.
Given $R_{T_i}\leq R$, and using the assumptions on $R$ made in the statement of the lemma, note that we can write out expansions like
\[
\frac{N}{L-\sum_{i}R_{T_i}}\geq\rho\left(1+2\rho\sum_{i}R_{T_i}-\textnormal{const.}\ (\rho R)^2\right)\geq\rho\left(1+2\rho\sum_{i}R_{T_i}-\textnormal{const.}\ N^{-1}\right).
\]
Using estimates of this type, the expression \eqref{some203901111} for $\sum_{i}R_{T_i}$, Lemma \ref{LemmaLiebLinigerNeumannLowerBound}, and $a_o\geq0$, we can prove that
\begin{equation}
\label{EqLL_LowerBound}
\begin{aligned}
E^{\textnormal{Neumann}}_{\textnormal{LL}}&\left(N,L-\sum^{N-1}_{i=1}R_{T_i},\frac{2}{R-a_o}\right)\\&\geq N\frac{\pi^2}{3}\rho^2\left(1+2\rho\frac{1}{N}\left[ (N-1)a_e + n_{P_S}(T)(a_o-a_e) \right]-\text{const. }N^{-2/3}\right). 
\end{aligned}
\end{equation}
To continue, we introduce a notation to be able to choose even and odd projections in $T\in\Theta$ independently:
\[
\begin{aligned}
\Theta_o&=\{(T_1,T_3,T_5,\dots,T_{2\floor{N/2}-1}) \ \vert \ T_i\in\{P_A^{i,i+1},P_S^{i,i+1}\}\}\\
\Theta_e&=\{(T_2,T_4,T_6,\dots, T_{2\ceil{N/2}-2}) \ \vert \ T_i\in\{P_A^{i,i+1},P_S^{i,i+1}\}\},
\end{aligned}
\]
where the last indices are simply the highest odd and even numbers $\leq N-1$. We will slightly abuse the notation introduced in \eqref{psiT} and note that the projection
\[
\Pi^T_{e}:=\prod_{j\textnormal{ even}}T_j
\]
is fully defined by the projections $T_i$ with even indices $i$ collected in $T_e\in\Theta_e$. Now note that  
\begin{equation}
\label{EqProjectionRelation1}
 		\sum_{T_e\in\Theta_e}\Pi^T_{e}=I,
\end{equation}
which follows directly from $P_A+P_S=I$. We also have that 
\begin{equation}
\label{EqProjectionRelation2}
\begin{aligned}
\sum_{T\in\Theta}n_{P_S}(T)\Pi^T_{o}\ \Pi^T_{e}\ \Pi^T_{o}&=\sum_{T_e\in\Theta_e}\sum_{T_o\in\Theta_o}\Pi^T_{o}\ (n_{P_S}(T_e)+n_{P_S}(T_o))\Pi^T_{e}\ \Pi^T_{o}\\
&=\sum_{T_o\in\Theta_o}\Pi^T_{o}\ \left(\sum_{i=1}^{N-1}P_S^{i,i+1}\right) \ \Pi^T_{o},
\end{aligned}
\end{equation}
where we used \eqref{EqProjectionRelation1} and the fact that for $T_o\in\Theta_o$, 
\[
\Pi^T_{o}\ \left(\sum_{i\textnormal{ odd}}P_S^{i,i+1}\right)=n_{P_S}(T_o),
\]
as well as the relation
\[
\begin{aligned}
\sum_{T_e\in\Theta_e}n_{P_S}(T_e)\Pi^T_{e}
=\sum_{T_e\in\Theta_e}\sum_{i\textnormal{ even}}\delta_{T_i,P^{i,i+1}_S}\Pi^T_{e}=\sum_{i\textnormal{ even}}P_S^{i,i+1},
\end{aligned}
\]
which again uses $P_A+P_S=I$. Combining \eqref{EqEnergyLowerBound1}, \eqref{EqLL_LowerBound}, \eqref{EqProjectionRelation1} and \eqref{EqProjectionRelation2}, we find that 
\begin{equation}
\label{some123}
	\begin{aligned}
		\int_{\{0<x_1<\dots<x_N<L\}}\sum_{i}&\abs{\partial_i\Psi}^2+\sum_{i\neq j} \frac{1}{2}v_{ij}\abs{\Psi}^2\\
        &\geq N\frac{\pi^2}{3}\rho^2\left(1+2\rho a_e-\text{const. }N^{-2/3}\right)\sum_{T_o\in\Theta_o}\langle\Pi^T_{o}\tilde{\psi}\vert\Pi^T_{o}\tilde{\psi}\rangle^{\textnormal{sec}}_{L^2}\\&+N\frac{\pi^2}{3}\rho^2\cdot 2\rho(a_o-a_e)\sum_{T_o\in\Theta_o}\big\langle\tilde{\psi}\big|\Pi^T_{o}\frac{1}{N}\sum_{i=1}^{N-1}P_S^{i,i+1}\Pi^T_{o}\big|\tilde{\psi}\big\rangle^{\textnormal{sec}}_{L^2}.
	\end{aligned}
\end{equation}
For the relevant operator in this equation, we know that for all $\tilde\psi$,
\begin{equation}
\label{some124}
\begin{aligned}
&\big\langle\tilde{\psi}\big|\Pi^T_{o}\frac{1}{N}\sum_{i=1}^{N-1}\left(a_eP_A^{i,i+1}+a_oP_S^{i,i+1}\right)\Pi^T_{o}\big|\tilde{\psi}\big\rangle^{\textnormal{sec}}_{L^2}\geq \\
&\inf_{\substack{\chi\in \left(\C^{2J+1}\right)^{\otimes N}\\
|\chi|=1}}\big\langle\chi\big|\frac{1}{N}\sum_{i=1}^{N-1}\left(a_eP_A^{i,i+1}+a_oP_S^{i,i+1}\right)\big|\chi\big\rangle\big\langle\Pi^T_{o}\tilde{\psi}\vert\Pi^T_{o}\tilde{\psi}\big\rangle^{\textnormal{sec}}_{L^2}.
\end{aligned}
\end{equation}
To complete the proof, we take $\Psi$ to be the (normalized) Neumann ground state, combine \eqref{some123} with $P_A+P_S=I$ and \eqref{some124}, and use that $\Pi^T_o $ are projections with $\sum_{T_o\in\Theta_0}\Pi^T_o=I$ so that we obtain $\langle\tilde\psi|\tilde\psi\rangle^{\textnormal{sec}}_{L^2}$. All the integrals are now only on a sector, but we find the full energy $E_{\mathcal{J}}^{\textnormal{Neumann}}(N,L)$ by multiplying by the number of sectors $N!$ and using antisymmetry of $\Psi$. All this gives 
\[
\begin{aligned}
&E_{\mathcal{J}}^{\textnormal{Neumann}}(N,L)\geq\\ &N\frac{\pi^2}{3}\rho^2\left(1+2\rho \inf_{\substack{\chi\in \left(\C^{2J+1}\right)^{\otimes N}\\
|\chi|=1}}\braket{\chi\left\vert\frac{1}{N}\sum_{i=1}^{N-1}\left(a_eP_A^{i,i+1}+a_oP_S^{i,i+1}\right)\right\vert\chi}-\text{const. }N^{-2/3}\right)\big\langle\tilde{\psi}\vert\tilde{\psi}\big\rangle_{L^2},
\end{aligned}
\]
since $N!\langle\tilde\psi|\tilde\psi\rangle^{\textnormal{sec}}_{L^2}=\langle\tilde\psi|\tilde\psi\rangle_{L^2}$ by the symmetric extension of $\tilde{\psi}$ discussed below \eqref{defpsi}. The proposition now follows from Lemma \ref{LemmaNormLossImproved} stated above (proved in the next section), as well as $a_e\leq a_o\leq R_0\leq R$ and boundedness of the spin chain ground state energy.
\end{proof}

\subsubsection{Estimating $\langle\tilde\psi|\tilde\psi\rangle_{L^2}$ (proof of Lemma \ref{LemmaNormLossImproved})}
\label{secnormloss}

The purpose of this section is to prove the missing ingredient in the proof of Proposition \ref{PropositionLowerBoundSmallParticleNumber}: the bound on the $L^2$-norm of $\tilde{\psi}$ in Lemma \ref{LemmaNormLossImproved}. A proof of this result was given for spinless bosons (or scalar $\Psi$ and $\tilde{\psi}$) in \cite[Section 3.2]{agerskov2022ground}. We adapt it to the case in which $\Psi$ and $\tilde{\psi}$ take values in $\otimes^N\C^{2J+1}$ in this section. To make it easier for the reader to see that the same strategy applies, we restate the lemmas used in \cite{agerskov2022ground} and remark on the minor adaptations in the proofs. Note that the only place where we use the assumption that $\Psi$ is the ground state corresponding to $E_{J}^{\textnormal{Neumann}}(N,L)$ is in applying the upper bound Theorem \ref{TheoremUpperBoundSpinJFermi} in the last step.

Note that the set that gets removed from the domain in going from $\Psi$ to $\tilde{\psi}$ in \eqref{defpsi} is $\{x\in\R^n\vert \min_{i,j}\abs{x_i-x_j}<R \}$. We then see that since $\Psi$ is normalized,
\[
\begin{aligned}
1-\big\langle\tilde{\psi}|\tilde{\psi}\big\rangle_{L^2}=\int_{\{x\in\R^n\vert \min_{i,j}\abs{x_i-x_j}<R \}}\abs{\Psi}^2\leq\sum_{i<j}\int_{D_{i}^{j}}\abs{\Psi}^2,
\end{aligned}
\]
where $ D_{i}^{j}:=\{x\in\R^n \vert \mathfrak{r}_i(x)=\abs{x_i-x_j}<R \} $ with $ \mathfrak{r}_i(x):=\min_{j\neq i}(\abs{x_i-x_j}) $ (note that $ D_{i}^{j} $ is not symmetric in $ i$ and $j $; that for $j\neq j'$, $ D_{i}^{j}\cap D_{i}^{j'}=\emptyset$ up to sets of measure zero; and that $ \{x\in\R^n\vert \min_{i,j}\abs{x_i-x_j}<R \}=\cup_{i<j}D_{i}^{j}$). The first relevant lemma in \cite{agerskov2022ground} is the following. 
\begin{lemma}[Lemma 20 in \cite{agerskov2022ground}]
\label{LemmaNormLoss}
	For general $\Psi$ and $\tilde{\psi} $ defined in \eqref{defpsi}, we have 
    \begin{equation}
		\label{eqlemmanormloss}
		1-\big\langle\tilde{\psi}|\tilde{\psi}\big\rangle_{L^2}\leq\sum_{i<j}\int_{D_{i}^{j}}\abs{\Psi}^2\leq8R^2\sum_{i< j}\int \mathbbm{1}_{\{\mathfrak{r}_i(x)<R\}} \abs{\partial_i \Psi}^2+ R(R-a_e)\sum_{i< j}\int v_{ij} \abs{\Psi}^2.
	\end{equation}
\end{lemma}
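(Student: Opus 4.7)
The plan is to establish the two inequalities in turn, adapting the proof of Lemma~20 in \cite{agerskov2022ground} by carrying the singlet/triplet decomposition at each pair of indices.

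For the first inequality, the key point is that, by the definition \eqref{defpsi} of $\tilde{\psi}$, the deficit $1-\langle\tilde{\psi}|\tilde{\psi}\rangle_{L^2}$ equals $\int_{\{\min_{i,j}\abs{x_i-x_j}<R\}}\abs{\Psi}^2$. If $(i_0,j_0)$ with $i_0<j_0$ realises the minimum on a configuration $x$, then $j_0$ is automatically the particle closest to $i_0$, so $x\in D_{i_0}^{j_0}$. Hence $\{\min_{i,j}\abs{x_i-x_j}<R\}\subset\bigcup_{i<j}D_i^j$ and a union bound yields the first inequality.

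For the second inequality we treat one pair $(i,j)$ at a time, enlarging $D_i^j\subset\{\abs{x_i-x_j}<R\}$ for an upper bound. Fixing all coordinates $x_k$ with $k\neq i$, we regard $\Psi$ as a spin-valued function $\varphi(s):=\Psi(\dots,x_j+s,\dots)$ of the single variable $s=x_i-x_j\in(-R,R)$. Since $\Psi$ is totally antisymmetric under combined exchange of space and spin, $P_A^{i,j}\varphi$ is even in $s$ and $P_S^{i,j}\varphi$ is odd in $s$, which is precisely the symmetry needed to invoke Lemma \ref{LemmaDysonSpin1/2Fermi}. We then combine a one-dimensional Sobolev-type bound
\begin{equation*}
\int_{-R}^R\abs{\varphi(s)}^2\,ds\le 2R\bigl(\abs{\varphi(R)}^2+\abs{\varphi(-R)}^2\bigr)+8R^2\int_{-R}^R\abs{\varphi'}^2\,ds,
\end{equation*}
obtained from the Cauchy--Schwarz pointwise estimate $\abs{\varphi(s)}^2\le 2\abs{\varphi(\pm R)}^2+4R\int\abs{\varphi'}^2$ and integration in $s$, with Dyson's lemma in its weaker form,
\begin{equation*}
\abs{\varphi(R)}^2+\abs{\varphi(-R)}^2\le (R-a_e)\int_{-R}^R\Bigl(\abs{\varphi'}^2+\tfrac12 v\abs{\varphi}^2\Bigr)\,ds.
\end{equation*}
Substituting and integrating over the remaining $x_k$, $k\neq i$, produces
\begin{equation*}
\int_{\{\abs{x_i-x_j}<R\}}\abs{\Psi}^2\le 8R^2\int\abs{\partial_i\Psi}^2+R(R-a_e)\int v_{ij}\abs{\Psi}^2,
\end{equation*}
where the additional $2R(R-a_e)\int\abs{\partial_i\Psi}^2$ coming from Dyson is absorbed into the $\mathcal O(R^2)$ kinetic coefficient using $R>R_0+a_o-a_e\ge -a_e$, so that $R-a_e\le 2R$. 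Finally, enlarging $\{\abs{x_i-x_j}<R\}\subset\{\mathfrak{r}_i(x)<R\}$ on the kinetic side and summing over $i<j$ gives the claim.

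The main obstacle is bookkeeping: at each step one must track the even/odd spatial symmetry of $\varphi$ in $s$ inherited from the antisymmetry of $\Psi$, so that Dyson's lemma (stated specifically for even or odd spatial functions coupled to spin-antisymmetric or spin-symmetric components) can be invoked cleanly. Once this decomposition is in place the remaining estimates are purely one-dimensional and treat $\abs{\cdot}$ as the norm on $(\C^{2J+1})^{\otimes N}$, with no further complication from the spin degrees of freedom.
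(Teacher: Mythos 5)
Your strategy coincides with the paper's: the first inequality is the union bound over the sets $D_i^j$ already recorded in the text above the lemma, and the second is the one-dimensional Sobolev-plus-Dyson argument of the scalar Lemma~20 in \cite{agerskov2022ground}, applied componentwise (the paper itself only cites that lemma and invokes linearity). Two points in your reconstruction need repair, however. The parity claim is false: with $x_j$ and all other coordinates fixed and $\varphi(s)=\Psi(\dots,x_i=x_j+s,\dots)$, the exchange symmetry of $P_A^{i,j}\Psi$ relates $\varphi(-s)$ to $\Psi$ evaluated at a configuration in which the \emph{$j$-th} coordinate has moved to $x_j-s$, not to $\varphi(s)$; so $P_A^{i,j}\varphi$ is not even in $s$ and Lemma \ref{LemmaDysonSpin1/2Fermi} does not apply as stated. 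Fortunately the inequality you actually use, $\abs{\varphi(R)}^2+\abs{\varphi(-R)}^2\le(R-a_e)\int\bigl(\abs{\varphi'}^2+\tfrac12 v\abs{\varphi}^2\bigr)$, holds for \emph{every} $\varphi\in H^1([-R,R];(\C^{2J+1})^{\otimes N})$ when $v$ is scalar: split each component into its even and odd parts about $s=0$ (an artificial decomposition, unrelated to particle exchange), note that the cross terms vanish because $v$ is even, and apply \eqref{dyson1} and \eqref{dyson2} together with $a_o\ge a_e$ --- this is precisely \eqref{eqidea}. So the symmetry bookkeeping should simply be dropped; it is both incorrect and unnecessary here.

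Second, your constants do not reproduce the statement: after substituting Dyson you obtain a kinetic coefficient $8R^2+2R(R-a_e)\le 12R^2$, so as written you prove the lemma with $12R^2$ in place of $8R^2$. This is harmless downstream (it would only change unspecified constants, e.g.\ turn the $\tfrac{1-\epsilon}{8R^2}$ of Lemma \ref{LemmaNormBoundEpsilon} into $\tfrac{1-\epsilon}{12R^2}$), but it does not literally establish \eqref{eqlemmanormloss}. To recover the stated constant, use the two-endpoint representation $\varphi(s)=\tfrac12\bigl(\varphi(R)-\int_s^R\varphi'\bigr)+\tfrac12\bigl(\varphi(-R)+\int_{-R}^s\varphi'\bigr)$, which gives $\abs{\varphi(s)}^2\le\abs{\varphi(R)}^2+\abs{\varphi(-R)}^2+R\int_{-R}^R\abs{\varphi'}^2$ and hence $\int_{-R}^R\abs{\varphi}^2\le 2R\bigl(\abs{\varphi(R)}^2+\abs{\varphi(-R)}^2\bigr)+2R^2\int_{-R}^R\abs{\varphi'}^2$; the kinetic coefficient then becomes $2R(R-a_e)+2R^2\le 6R^2\le 8R^2$.
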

Regarding the proof: we have already verified the first inequality in \eqref{eqlemmanormloss}, and the second inequality immediately generalizes from the scalar to the multi-component case by linearity. The second relevant lemma in \cite{agerskov2022ground} is the following.
\begin{lemma}[Lemma 22 in \cite{agerskov2022ground}]
\label{LemmaNormBoundEpsilon}
Let  $ \epsilon\in[0,1] $. For general $\Psi$, and $ \tilde\psi $ defined in \eqref{defpsi} and $E_{\textnormal{LL}}^{\textnormal{Neumann}}$ introduced in Section \ref{SecLLNeumann},
\[
\int \sum_{i}\abs{\partial_i\Psi}^2+\sum_{i\neq j} \frac{1}{2}v_{ij}\abs{\Psi}^2\geq E_{\textnormal{LL}}^{\textnormal{Neumann}} \Big(N,L-(N-1)R,\frac{2\epsilon}{R-a_e}\Big)\big\langle\tilde{\psi}|\tilde{\psi}\big\rangle_{L^2}+ \frac{(1-\epsilon)}{8R^2}(1-\big\langle\tilde{\psi}|\tilde{\psi}\big\rangle_{L^2}).
\]
\end{lemma}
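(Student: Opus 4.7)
The statement is the fermionic counterpart of Lemma 22 in \cite{agerskov2022ground}, and the plan is to follow that bosonic argument with the single observation that only the spin-\emph{independent} (second) inequality of Lemma \ref{LemmaDysonSpin1/2Fermi} is used, so all estimates involve $|\Psi|^2$ alone and never see the $\C^{2J+1}$-valued nature of $\Psi$. In particular, no $P_A/P_S$ splitting is needed and a single radius $R$ works for every pair.

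The key ingredient will be a single-pair inequality obtained by a convex-combination splitting of the energy budget. Multiplying Dyson's lemma by $\epsilon$ gives
\[
\epsilon\left(\int|\partial\varphi|^2+\tfrac{1}{2}v|\varphi|^2\right)\geq \frac{\epsilon}{R-a_e}\int(\delta_R+\delta_{-R})|\varphi|^2,
\]
while rearranging the pair-level version of Lemma \ref{LemmaNormLoss} and using $\tfrac{R-a_e}{8R}\leq\tfrac{1}{2}$ (which holds under the mild assumption $a_e\geq -3R$, automatic in the dilute regime of interest) gives
\[
(1-\epsilon)\left(\int|\partial\varphi|^2+\tfrac{1}{2}v|\varphi|^2\right)\geq \frac{1-\epsilon}{8R^2}\int_{\{|x|<R\}}|\varphi|^2.
\]
Adding the two yields a pair-level combined inequality, which I then insert into the many-body argument.

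I apply this combined inequality to each neighbouring pair $(x_i,x_{i+1})$ in the ordered sector $\{0<x_1<\cdots<x_N<L\}$, following the scheme of the proof of Lemma \ref{LemmaLowerBoundDyson+SpaceRemoved}. Applying the Dyson part separately to $\partial_i$ and $\partial_{i+1}$ doubles the $\delta$-coupling to $\tfrac{2\epsilon}{R-a_e}$; throwing away the regions $\{x_{i+1}-x_i<R\}$ (valid for a lower bound since the energy density is positive) collapses the surviving $\delta$'s at $x_{i+1}-x_i=R$ into $\delta(y_{i+1}-y_i)$ in the reduced coordinates $y_i=x_i-(i-1)R$. Together with the preserved kinetic energy outside the cut regions, this is precisely the Lieb--Liniger energy functional of $\tilde\psi$ on an interval of length $L-(N-1)R$ at coupling $\tfrac{2\epsilon}{R-a_e}$, and the variational principle delivers the first term in the claimed bound on the sector. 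The $(1-\epsilon)$ remainders accumulate into $\tfrac{1-\epsilon}{8R^2}\sum_i\int_{\{x_{i+1}-x_i<R\}}|\Psi|^2$.

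To conclude, I sum over all $N!$ orderings using the symmetry of $|\Psi|^2$ and the symmetric extension of $\tilde\psi$. The Lieb--Liniger piece assembles into $E^{\textnormal{Neumann}}_{\textnormal{LL}}(N,L-(N-1)R,2\epsilon/(R-a_e))\langle\tilde\psi|\tilde\psi\rangle_{L^2}$. For the norm-loss piece, in each ordered sector one has $\bigcup_i\{x_{i+1}-x_i<R\}=\{\min_{j,k}|x_j-x_k|<R\}$ (the minimal pair distance in an ordered configuration is always realised between neighbours), so subadditivity gives $\sum_i\int_{\{x_{i+1}-x_i<R\}}|\Psi|^2\geq \int_{\{\min_{j,k}|x_j-x_k|<R\}}|\Psi|^2$ sector-wise; summing over sectors produces $1-\langle\tilde\psi|\tilde\psi\rangle_{L^2}$ and hence the full inequality. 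The delicate point is the pair-level bookkeeping: the kinetic energy fed to Dyson must be the kinetic \emph{inside} the cut region so that the kinetic \emph{outside} is preserved for the Lieb--Liniger kinetic term in the reduced coordinates. This is exactly the accounting carried out in \cite[Section~3.2]{agerskov2022ground} for bosons, and it transfers without modification to the present scalar-valued inequalities for fermions.
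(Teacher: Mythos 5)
Your proposal is correct and follows essentially the same route as the paper, which simply imports the $\epsilon$-splitting argument of \cite[Lemma 22]{agerskov2022ground} after replacing the spinless Dyson lemma by the second, spin-independent inequality of Lemma \ref{LemmaDysonSpin1/2Fermi}; your observations that no $P_A/P_S$ splitting is needed, that a single radius $R$ suffices, and that $R>|a_e|$ (hence $a_e\geq -3R$) holds under the standing assumption on $R$ are exactly the points the paper flags. The only detail worth making explicit is that the variational step for the vector-valued $\tilde\psi$ uses the component-wise decomposition in a spin basis, so that the relevant Lieb--Liniger ground state energy is still the scalar bosonic one.
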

The proof is the same as before, replacing the use of Dyson's lemma for spinless bosons \cite[Lemma 21]{agerskov2022ground} by the analogue for spin-$J$ from Lemma \ref{LemmaDysonSpin1/2Fermi} (using the bound that only involves $a_e$). (Note that the ground state energy of the multicomponent Lieb--Liniger model---for which $\tilde{\psi}$ is a trial state---is simply the bosonic ground state energy. Also note that $R>R_0+a_o-a_e>-a_e$ so that $R>|a_e|$ by the assumption on $R$ made at the start of Section \ref{lbspn}.)

With Lemma \ref{LemmaNormBoundEpsilon}, we can prove Lemma \ref{LemmaNormLossImproved}, which was the missing ingredient in the previous section---we again simply follow our earlier work \cite[Proof of Lemma 23]{agerskov2022ground}. In the proof, the use of the upper bound \cite[Proposition 6]{agerskov2022ground} gets replaced the corresponding Theorem \ref{TheoremUpperBoundSpinJFermi} (or rather the version for scalar-valued potentials in Corollary \ref{TheoremUpperBound2}) and the estimates $a_e+(a_o-a_e)\epsilon^{\textnormal{LS}}_J\leq a_o< R$ and $|a_e|<R$. 

Note that we have now fully proved the lower bound from Proposition \ref{PropositionLowerBoundSmallParticleNumber} that works well for limited particle numbers. It still needs to be extended to arbitrary particle numbers, and we discuss this in the next subsection.


\subsubsection{Lower bound for arbitrary particle numbers (proof of Theorem \ref{TheoremLowerBoundSpin})}
\label{lbapn}
In this subsection, we extend the lower bound in Proposition \ref{PropositionLowerBoundSmallParticleNumber} to large particle numbers to prove Theorem \ref{TheoremUpperBoundSpinJFermi}. We again use the strategy from \cite{agerskov2022ground}, which consisted in noticing that it suffices to restrict to (boxes with) particle numbers of order $(\rho R)^{-9/5}$, then performing a Legendre
transformation in the particle number and studying the grand canonical ensemble, while using superadditivity caused by the positive potential. 
Indeed, for particle numbers of order $(\rho R)^{-9/5}$, we get the desired lower bound as a corollary of Proposition \ref{PropositionLowerBoundSmallParticleNumber}.
\begin{corollary} 
\label{CorollaryLowerBoundSpecN}
For $ \frac{\tau}{2} (\rho R)^{-9/5}\leq N\leq \tau (\rho R)^{-9/5} $ with $ \tau=\frac{3}{16^2\pi^2}\frac18$ and $ \rho R\leq \frac{1}{2} $, we have 
\[
\begin{aligned}
&E^{\textnormal{Neumann}}_{\mathcal{J},\lambda}(N,L)\geq \\
&N\frac{\pi^2}{3}\rho^2\Bigg[1+2\rho \inf_{\substack{\chi\in \left(\C^{2J+1}\right)^{\otimes N}\\|\chi|=1}}\braket{\chi\left\vert\frac{1}{N}\sum_{i=1}^{N-1}\left(a_eP_A^{i,i+1}+a_oP_S^{i,i+1}\right)\right\vert\chi}
-\textnormal{const. }(\rho R)^{6/5}\Bigg].
\end{aligned}
\]
\end{corollary}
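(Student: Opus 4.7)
My plan is to apply Proposition~\ref{PropositionLowerBoundSmallParticleNumber} directly and observe that the specified range of $N$ is precisely the one that optimally balances the three error terms in that proposition against a common $(\rho R)^{6/5}$ scale.

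First I would verify that the hypotheses of Proposition~\ref{PropositionLowerBoundSmallParticleNumber} hold throughout the range in question. For $N\le \tau(\rho R)^{-9/5}$ with $\tau=\tfrac{3}{16^2\pi^2}\tfrac18$ and $\rho R\le 1/2$, we have
$$N(\rho R)^{2}\le \tau (\rho R)^{1/5}\le \tau=\tfrac{3}{16^2\pi^2}\tfrac18,$$
which is exactly the first condition, and $\rho R\le 1/2$ is assumed. Hence the proposition applies throughout the specified window.

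Second, I would insert the two-sided bound $\tfrac{\tau}{2}(\rho R)^{-9/5}\le N\le \tau(\rho R)^{-9/5}$ into each error term appearing in Proposition~\ref{PropositionLowerBoundSmallParticleNumber}. A direct calculation gives
$$N^{-2/3}\le \left(\tfrac{\tau}{2}\right)^{-2/3}(\rho R)^{6/5},\qquad N(\rho R)^{3}\le \tau\,(\rho R)^{6/5},\qquad N^{1/3}(\rho R)^{2}\le \tau^{1/3}(\rho R)^{7/5}\le \tau^{1/3}(\rho R)^{6/5},$$
where the last step uses $\rho R\le 1$. Summing, the total error is controlled by $\textnormal{const.}\,(\rho R)^{6/5}$, and the corollary follows. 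Note that the exponent $-9/5$ in the range of $N$ was chosen precisely so that the two dominant errors $N^{-2/3}$ and $N(\rho R)^{3}$ become equal, and it is this choice that produces the final scale $(\rho R)^{6/5}$ that will match the upper bound of Theorem~\ref{TheoremUpperBoundSpinJFermi}.

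No step presents a genuine obstacle: the substantive work is already contained in Proposition~\ref{PropositionLowerBoundSmallParticleNumber} and Lemma~\ref{LemmaNormLossImproved}. The role of this corollary is bookkeeping, fixing the window of particle numbers in which the lower bound is useful, so that the Legendre transform and superadditivity argument described at the start of Section~\ref{lbapn} can then paste the estimate together across boxes to reach arbitrary $N$.
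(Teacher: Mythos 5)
Your proposal is correct and matches the paper's (implicit) argument: the paper simply states that the corollary follows from Proposition \ref{PropositionLowerBoundSmallParticleNumber}, and your verification of the hypothesis $N(\rho R)^2\leq\tau$ and the substitution of $N\sim(\rho R)^{-9/5}$ into the three error terms is exactly the computation being left to the reader. The exponent arithmetic checks out in all three terms.
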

This replaces \cite[Corollary 25]{agerskov2022ground}. Starting from this corollary and Proposition \ref{PropositionLowerBoundSmallParticleNumber}, we now follow the strategy used in \cite{agerskov2022ground}, where $a$ gets replaced by the infimum from the previous equation. The proof is then completely identical, with the proof of the lower bound Theorem \ref{TheoremLowerBoundSpin} mimicking the proof of \cite[Proposition 14]{agerskov2022ground}, and use of the exact analogue of \cite[Lemma 26]{agerskov2022ground}.

\section*{Acknowledgments}
JA and JPS were partially supported by the Villum Centre of Excellence for the Mathematics of Quantum Theory (QMATH, Grant No. 10059). JA was partially supported by the Novo Nordisk Foundation, Grant number NNF22SA0081175, NNF Quantum Computing Programme. RR was partially supported by the European Research Council (ERC) under the European Union's Horizon 2020 research and innovation programme (ERC CoG UniCoSM, Grant Agreement No. 724939). RR thanks the University of Copenhagen and the University of Cambridge for hospitality during visits, and the Royal Society for supporting the research visit to Cambridge (AL$\backslash$24100022).  RR is a member of the \textit{Gruppo Nazionale per la Fisica Matematica} (GNFM) of the \textit{Istituto Nazionale di
Alta Matematica Francesco Severi} (INdAM).

\begin{appendices}
\section{Facts about the scattering length matrix}
\label{AppendixA}
Consider a symmetric, positive-semidefinite measure $V$ that takes values in the space of complex $(2J+1)^2\times (2J+1)^2$ matrices. Assume that $V$ commutes with $P_S$ and $P_A$ (meaning $V$ leaves the symmetric and antisymmetric spin spaces invariant), and assume that $V$ is supported in $[-R_0,R_0]$ for some $R_0\geq 0$.
 
Let $R\geq R_0$ and define the scattering energy functional $\mathcal{E}_{\textnormal{scattering}}:H^1([-R,R];\C^{2J+1}\otimes \C^{2J+1})\to [0,\infty]$ by
	\begin{equation}
	\mathcal{E}_{\textnormal{scattering}}(\psi)=\int^R_{-R} 2\braket{\partial\psi,\partial\psi}+\braket{\psi,V\psi}. 
	\end{equation}
To state Definition \ref{DefinitionScatteringLengthMatrix}, we need the following facts. 
\begin{lemma}\label{LemmaConvexityScatteringFunctional}
    For $\chi\in\mathbb{C}^{2J+1}\otimes\mathbb{C}^{2J+1}$, denote by $\mathcal{E}_{\chi}$ the functional $\mathcal{E}_{\textnormal{scattering}}$ restricted to functions $\psi$ with boundary conditions $\psi(R)=\chi$ and $\psi(-R)=(P_A-P_S)\chi$. Then, $\mathcal{E}_{\chi}$ is strictly convex.
\end{lemma}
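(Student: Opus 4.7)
The plan is to exploit that $\mathcal{E}_{\textnormal{scattering}}$ is a (real) quadratic form, so strict convexity reduces to positive definiteness on the subspace of functions with zero boundary data. Write $\mathcal{E}_{\textnormal{scattering}}(\psi)=Q(\psi,\psi)$ with
\[
Q(\psi_1,\psi_2)=\int_{-R}^{R}2\Real\braket{\partial\psi_1,\partial\psi_2}+\Real\braket{\psi_1,V\psi_2},
\]
a real symmetric bilinear form on $H^1([-R,R];\C^{2J+1}\otimes\C^{2J+1})$. The admissible set for $\mathcal{E}_\chi$ is an affine subspace, and for any two admissible $\psi_1,\psi_2$ and any $t\in(0,1)$, a standard polarization computation (the quadratic terms and cross-terms in $t$ and $1-t$ rearrange cleanly) yields
\[
t\mathcal{E}_\chi(\psi_1)+(1-t)\mathcal{E}_\chi(\psi_2)-\mathcal{E}_\chi(t\psi_1+(1-t)\psi_2)=t(1-t)\,\mathcal{E}_{\textnormal{scattering}}(\psi_1-\psi_2).
\]
Because $\psi_1$ and $\psi_2$ agree at $\pm R$, the difference $\phi:=\psi_1-\psi_2$ lies in the Dirichlet subspace
\[
X_0=\bigl\{\phi\in H^1([-R,R];\C^{2J+1}\otimes\C^{2J+1})\ :\ \phi(R)=\phi(-R)=0\bigr\}.
\]
Thus strict convexity of $\mathcal{E}_\chi$ is equivalent to $\mathcal{E}_{\textnormal{scattering}}(\phi)>0$ for every nonzero $\phi\in X_0$.

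Next I would use positive-semidefiniteness of $V$ to discard the potential term: $\int_{-R}^{R}\braket{\phi,V\phi}\geq 0$, so
\[
\mathcal{E}_{\textnormal{scattering}}(\phi)\geq 2\int_{-R}^{R}|\partial\phi|^2.
\]
Finally, if this right-hand side vanishes then $\phi$ is (componentwise) constant on $[-R,R]$, and the Dirichlet condition $\phi(R)=0$ forces $\phi\equiv 0$. Hence $\mathcal{E}_{\textnormal{scattering}}(\phi)>0$ on $X_0\setminus\{0\}$, which gives strict convexity of $\mathcal{E}_\chi$.

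No serious obstacle is anticipated: the lemma is the standard statement that Dirichlet energy plus a nonnegative potential is strictly convex on $H^1$ modulo boundary data. The only care required is that boundary values are well defined, which follows from the continuous embedding $H^1([-R,R])\hookrightarrow C([-R,R])$ in one dimension; the identity-matrix-valued boundary values $\chi,(P_A-P_S)\chi$ play no role beyond ensuring that the admissible set is nonempty.
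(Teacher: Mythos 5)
Your proposal is correct and follows essentially the same route as the paper: both use the quadratic-form identity $t\mathcal{E}_\chi(\psi_1)+(1-t)\mathcal{E}_\chi(\psi_2)-\mathcal{E}_\chi(t\psi_1+(1-t)\psi_2)=t(1-t)\mathcal{E}_{\textnormal{scattering}}(\psi_1-\psi_2)$, then observe that the difference of two admissible functions has zero boundary values, so positive semidefiniteness of $V$ and positivity of the Dirichlet energy on nonconstant (hence, given the boundary condition, nonzero) differences give strict convexity. Your version is, if anything, slightly more careful notationally, since you correctly evaluate the unconstrained functional $\mathcal{E}_{\textnormal{scattering}}$ (rather than $\mathcal{E}_\chi$) on the difference $\psi_1-\psi_2$.
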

\begin{proof}
    Note that $\mathcal{E}_\chi$ is quadratic, and that it is non-negative by the assumptions on $V$. Hence, given $\psi_1,\psi_2$ in the domain of $\mathcal{E}_\chi$ and $\alpha\in[0,1]$, we can write \begin{equation}\label{EqConvexityScatteringFunctional}
        \begin{aligned}
            \mathcal{E}_\chi(\alpha \psi_1+(1-\alpha)\psi_2)&=\alpha\mathcal{E}_\chi(\psi_1)+(1-\alpha)\mathcal{E}_\chi(\psi_2)-\alpha(1-\alpha)\mathcal{E}_\chi(\psi_1-\psi_2)\\
            &\leq \alpha\mathcal{E}_\chi(\psi_1)+(1-\alpha)\mathcal{E}_\chi(\psi_2).
        \end{aligned}
    \end{equation}
    Now assume that $\psi_1\neq \psi_2$. Since they satisfy the same boundary condition, they cannot differ by a constant, and thus the kinetic energy in $\mathcal{E}_\chi(\psi_1-\psi_2)$ is strictly positive. By $V$ being positive semi-definite, we conclude that $\mathcal{E}_\chi$ is strictly convex,
    \[
        \begin{aligned}
            \mathcal{E}_\chi(\alpha \psi_1+(1-\alpha)\psi_2)< \alpha\mathcal{E}_\chi(\psi_1)+(1-\alpha)\mathcal{E}_\chi(\psi_2).
        \end{aligned}
    \]
\end{proof}
This directly implies the uniqueness of the scattering solution claimed in Definition \ref{DefinitionScatteringLengthMatrix}.
\begin{corollary}
    There is a unique minimizer of $\mathcal{E}_{\textnormal{scattering}}$ with boundary condition $\psi(R)=\chi$ and $\psi(-R)=(P_A-P_S)\chi$.
\end{corollary}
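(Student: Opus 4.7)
The plan is to combine the strict convexity established in Lemma \ref{LemmaConvexityScatteringFunctional} with a standard direct-method argument. Uniqueness is immediate from strict convexity: if $\psi_1\neq \psi_2$ were both minimizers, then their midpoint $\tfrac12(\psi_1+\psi_2)$ would satisfy the same boundary conditions and have strictly smaller energy by \eqref{EqConvexityScatteringFunctional}, contradicting minimality.

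For existence I would use the direct method of the calculus of variations. First, the admissible set is nonempty: for instance, the affine interpolation $\psi_0(x)=\tfrac12(1+x/R)\chi+\tfrac12(1-x/R)(P_A-P_S)\chi$ satisfies the boundary conditions and lies in $H^1$, so the infimum is finite. Let $\psi_n$ be a minimizing sequence. Since $V\geq 0$, we have $\mathcal{E}_\chi(\psi_n)\geq 2\|\partial\psi_n\|_{L^2}^2$, so the derivatives are uniformly bounded in $L^2$. Combining the fundamental theorem of calculus with the boundary condition $\psi_n(R)=\chi$ yields the pointwise bound $|\psi_n(x)|\leq |\chi|+\sqrt{R-x}\,\|\partial\psi_n\|_{L^2}$, giving a uniform $L^\infty$ bound and hence a uniform $H^1$ bound.

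Next, extract a (not relabelled) subsequence with $\psi_n\rightharpoonup \psi^\star$ weakly in $H^1([-R,R];\mathbb{C}^{2J+1}\otimes\mathbb{C}^{2J+1})$. In one dimension the embedding $H^1\hookrightarrow C([-R,R])$ is compact (Rellich--Morrey), so in fact $\psi_n\to\psi^\star$ uniformly. This preserves the boundary values, so $\psi^\star(R)=\chi$ and $\psi^\star(-R)=(P_A-P_S)\chi$. Passing to the limit in the energy: the kinetic term $2\int|\partial\psi|^2$ is weakly lower semicontinuous as the square of a Hilbert space norm, while the potential term $\int\langle\psi_n,V\psi_n\rangle$ converges to $\int\langle\psi^\star,V\psi^\star\rangle$ because $\langle\psi_n,V\psi_n\rangle\to\langle\psi^\star,V\psi^\star\rangle$ uniformly (as a consequence of the uniform convergence and the uniform $L^\infty$ bound) and $V$ is a finite positive matrix-valued measure on the compact support $[-R_0,R_0]$. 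Therefore $\mathcal{E}_\chi(\psi^\star)\leq \liminf_n\mathcal{E}_\chi(\psi_n)=\inf\mathcal{E}_\chi$, so $\psi^\star$ is a minimizer.

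The only genuinely delicate point is handling the potential energy when $V$ is merely a matrix-valued measure rather than an $L^1_{\mathrm{loc}}$ function; in one dimension this is resolved cleanly by the compact embedding $H^1\hookrightarrow C$, which upgrades weak $H^1$ convergence to uniform convergence on the compact support of $V$ and makes integration against the bounded measure trivial. Everything else is standard functional-analytic bookkeeping, and uniqueness then follows from strict convexity as noted above.
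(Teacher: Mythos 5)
Your proposal is correct, and it is worth noting how it relates to what the paper actually does. The uniqueness half is exactly the paper's argument: Lemma \ref{LemmaConvexityScatteringFunctional} gives strict convexity of $\mathcal{E}_\chi$ via the identity \eqref{EqConvexityScatteringFunctional}, and two distinct minimizers would contradict it. The existence half, however, is something the paper glosses over --- the corollary is presented as following ``directly'' from strict convexity, which of course only yields uniqueness, and existence is left implicit (one can read it off either from a standard direct-method argument like yours, or from solvability of the linear Euler--Lagrange problem \eqref{EqEvenScatteringEquation} with the prescribed boundary data, which is how Proposition \ref{CorollaryUniqueF0} is phrased). Your direct-method argument fills this gap cleanly, and the key technical point you flag is the right one: since $V$ is only a matrix-valued measure, the potential term is not handled by the usual $L^1_{\mathrm{loc}}$ dominated-convergence reasoning, but the compact embedding $H^1([-R,R])\hookrightarrow C([-R,R])$ upgrades weak $H^1$ convergence of the minimizing sequence to uniform convergence, after which integration against the finite positive measure $V$ on its compact support $[-R_0,R_0]$ is immediate (note that finiteness of $V$ on $[-R_0,R_0]$, implicit in the paper's setup of Definition \ref{DefinitionScatteringLengthMatrix}, is also what makes your affine competitor have finite energy, so the infimum is finite). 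The uniform $L^\infty$ bound from the boundary condition at $x=R$ plus Cauchy--Schwarz is correct and gives the needed $H^1$ coercivity. In short: same uniqueness argument as the paper, plus a complete existence proof that the paper omits.
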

Given this uniqueness, we can define a map $F_0:\C^{2J+1}\otimes\C^{2J+1}\to H^1([-R,R];\C^{2J+1}\otimes\C^{2J+1})$ that maps a boundary spin-state $\chi$ to the corresponding minimizer of $\mathcal{E}_\chi$. Note that any minimizer is the unique solution of the Euler-Lagrange equation
\begin{equation}
\label{eleqn}
	-\psi''(x)+\frac{1}{2}V(x)\psi(x)=0,
\end{equation}
with boundary conditions $\psi(R)=\chi$ and $\psi(-R)=(P_A-P_S)\chi$. Thus, linearity of \eqref{eleqn} and the boundary condition imply linearity of $F_0$. This means that $F_0$ can be identified with an element in $H^1([-R,R];\operatorname{Mat}((2J+1)^2,(2J+1)^2))$. This proves the first part of the following proposition.
\begin{proposition}\label{CorollaryUniqueF0}
    The scattering solution $F_0\in H^1([-R,R];\operatorname{Mat}((2J+1)^2,(2J+1)^2))$ is the unique solution of \eqref{EqEvenScatteringEquation} on $[-R,R]$ with boundary conditions $ F(R)=I $ and $F(-R)=P_A-P_S$. Moreover, $F_0$ satisfies \eqref{EqScatteringSolution} and \eqref{EqScatteringEnergy}. 
\end{proposition}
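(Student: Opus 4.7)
The uniqueness and existence of $F_0$ as an element of $H^1([-R,R];\mathrm{Mat}((2J+1)^2,(2J+1)^2))$ satisfying \eqref{EqEvenScatteringEquation} with the given boundary conditions was already established in the paragraphs preceding the proposition (via strict convexity of $\mathcal{E}_\chi$ and linearity in the boundary data). So the plan focuses on deriving the explicit form \eqref{EqScatteringSolution} and the energy identity \eqref{EqScatteringEnergy}, together with the claimed properties of $\mathsf{A}$.

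The first step is to exploit that $V$ is supported in $[-R_0,R_0]$. On $[R_0,R]$ the equation reduces to $F_0''=0$, so $F_0$ is affine; writing $F_0(x)=Cx+D$ and imposing $F_0(R)=I$ gives $D=I-CR$, i.e.\ $F_0(x)=I+C(x-R)$. Since $V$ commutes with both $P_S$ and $P_A$, the ODE preserves the decomposition, so $F_0$ (and hence $C$) is block-diagonal in this decomposition; on the block where $C$ is invertible, setting $\mathsf{A}:=R-C^{-1}$ yields $F_0(x)=(R-\mathsf{A})^{-1}(x-\mathsf{A})$ on $[R_0,R]$. (On a block on which $C$ is not invertible, $F_0$ is constant, which is the $\mathsf{A}=-\infty$ convention discussed in the paper.) To handle $[-R,-R_0]$, observe that evenness of $V$ together with $[V,P_S]=[V,P_A]=0$ implies that $x\mapsto (P_A-P_S)F_0(-x)$ solves the same ODE with the same boundary conditions (using $(P_A-P_S)^2=I$), hence by uniqueness equals $F_0$. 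This yields the second case of \eqref{EqScatteringSolution}.

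Next I would prove \eqref{EqScatteringEnergy} by a one-line integration by parts: using $VF_0=2F_0''$ on $[-R,R]$,
\begin{equation*}
\int_{-R}^{R}2(F_0')^\ast F_0'+F_0^\ast VF_0=\int_{-R}^{R}2(F_0')^\ast F_0'+2F_0^\ast F_0''=2\bigl[F_0^\ast F_0'\bigr]_{-R}^{R}.
\end{equation*}
From the explicit form, $F_0^\ast F_0'|_R=I\cdot(R-\mathsf{A})^{-1}=(R-\mathsf{A})^{-1}$, while on $[-R,-R_0]$ one has $F_0'(x)=-(P_A-P_S)(R-\mathsf{A})^{-1}$ and $F_0(-R)=P_A-P_S$, so $F_0^\ast F_0'|_{-R}=-(P_A-P_S)^2(R-\mathsf{A})^{-1}=-(R-\mathsf{A})^{-1}$. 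Adding gives $4(R-\mathsf{A})^{-1}$. Hermiticity of $\mathsf{A}$ is then immediate: the left-hand side of \eqref{EqScatteringEnergy} is manifestly Hermitian, so $(R-\mathsf{A})^{-1}$ (and hence $\mathsf{A}$) is Hermitian.

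The remaining facts—$R$-independence of $\mathsf{A}$ and $\mathsf{A}\leq R_0$—I would obtain as follows. For $R'\geq R\geq R_0$, let $F_0^{(R')}$ denote the scattering solution at radius $R'$ and set $G(x):=F_0^{(R')}(x)F_0^{(R')}(R)^{-1}$ on $[-R,R]$ (on each spin block where the inverse exists; otherwise restrict attention to that block). Linearity of the ODE gives $G''=\tfrac12 VG$, and since $F_0^{(R')}(R)$ is block-diagonal it commutes with $P_A-P_S$, so $G(R)=I$ and $G(-R)=(P_A-P_S)F_0^{(R')}(R)F_0^{(R')}(R)^{-1}=P_A-P_S$. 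By uniqueness $G=F_0^{(R)}$, and substituting the explicit form of $F_0^{(R')}$ on $[R_0,R]$ (where all matrices involved are polynomials in $\mathsf{A}'$ and hence commute) yields $F_0^{(R)}(x)=(R-\mathsf{A}')^{-1}(x-\mathsf{A}')$; matching with \eqref{EqScatteringSolution} forces $\mathsf{A}=\mathsf{A}'$. Finally, $V\geq 0$ makes the left-hand side of \eqref{EqScatteringEnergy} a non-negative Hermitian matrix, so $(R-\mathsf{A})^{-1}\geq 0$ for every $R\geq R_0$; letting $R\downarrow R_0$ gives $\mathsf{A}\leq R_0$. The main technical care point throughout is handling blocks on which $(R-\mathsf{A})$ fails to be invertible (the $\mathsf{A}=-\infty$ convention), which is precisely why the argument is carried out block-by-block in the $P_S/P_A$ decomposition.
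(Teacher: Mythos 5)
Your proposal is correct and follows essentially the same route as the paper's (very terse) proof: uniqueness from the convexity analysis preceding the proposition, harmonicity/affineness of $F_0$ outside the support of $V$ to get \eqref{EqScatteringSolution}, and integration by parts using \eqref{EqEvenScatteringEquation} to get \eqref{EqScatteringEnergy}. You additionally supply clean justifications (the reflection-symmetry argument for the left half, and the proofs of Hermiticity, $R$-independence, and $\mathsf{A}\leq R_0$) for claims the paper only asserts "follow from the above" in Definition \ref{DefinitionScatteringLengthMatrix}, all of which check out.
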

\begin{proof}
   This first part of the proposition follows from the analysis above. To check \eqref{EqScatteringSolution}, notice that for $\abs{x}>R_0$ (outside the range of $V$), $F_0$ is harmonic by \eqref{EqEvenScatteringEquation}. Hence, there exists $\mathsf{A}\in\operatorname{Mat}((2J+1)^2,(2J+1)^2) $ such that \eqref{EqScatteringSolution} holds. Integration by parts and use of \eqref{EqEvenScatteringEquation} gives \eqref{EqScatteringEnergy}.
\end{proof}

The following result is used in the proof of the upper bound.
\begin{lemma}\label{LemmaScatteringSolution hard core pointwise bound}
    Consider the set-up in Definition \ref{DefinitionScatteringLengthMatrix}---that is, let $F$ be the scattering solution and $\mathsf{A}$ the scattering length matrix of some potential $V$. Furthermore, let $F_{\rm hc}$ be the scattering solution for the hard-core potential with radius $R_0$. Then, for all $\xi\in \C^{2J+1}$ and $x\in[-R,R]$,
    \[
    \braket{F_{\rm hc}(x) \xi |F_{\rm hc}(x) \xi }\leq \braket{F(x)\xi|F(x)\xi}.
    \]
\end{lemma}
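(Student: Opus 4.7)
The plan is to split $[-R,R]$ into three regions: the reflection $[-R,0]$ handled by a symmetry, the hard-core region $|x|\le R_0$ where $F_{\rm hc}$ vanishes and the inequality is trivial, and the exterior $R_0\le x\le R$ where both scattering solutions are explicit. First, because $V$ is even and commutes with $P_A-P_S$, I would verify via the uniqueness statement in Proposition \ref{CorollaryUniqueF0} that the function $\tilde F(x):=(P_A-P_S)F(-x)$ solves \eqref{EqEvenScatteringEquation} with the same boundary conditions as $F$, so that $F(-x)=(P_A-P_S)F(x)$, and the identical identity holds for $F_{\rm hc}$. Since $P_A-P_S$ is unitary, both $x\mapsto |F(x)\xi|^2$ and $x\mapsto |F_{\rm hc}(x)\xi|^2$ are even, reducing the inequality to $x\in[0,R]$.

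On $[0,R_0]$ there is nothing to prove, since $F_{\rm hc}(x)=0$ on the support of the hard-core potential. On $[R_0,R]$, Proposition \ref{CorollaryUniqueF0} gives the explicit linear expressions $F(x)=(R-\mathsf{A})^{-1}(x-\mathsf{A})$ and $F_{\rm hc}(x)=\tfrac{x-R_0}{R-R_0}I$ (the latter reflecting that the scalar scattering length of the hard core is $R_0$). Since $\mathsf{A}$ is Hermitian with $\mathsf{A}\le R_0$, I would diagonalize it in an orthonormal basis $\{\chi_i\}$ of $\mathbb{C}^{2J+1}\otimes \mathbb{C}^{2J+1}$ with eigenvalues $a_i\le R_0$, and expand $\xi=\sum_i c_i\chi_i$. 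The matrix inequality decouples into a weighted sum of scalar inequalities of the form
\[
\left(\frac{x-a}{R-a}\right)^{\!2}\;\ge\;\left(\frac{x-R_0}{R-R_0}\right)^{\!2},\qquad a\le R_0\le x\le R,
\]
with nonnegative weights $|c_i|^2$.

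The scalar inequality, after noting that all factors are nonnegative and cross-multiplying, reduces to $(R-x)(R_0-a)\ge 0$, which is immediate. I do not foresee a substantial obstacle: the main bookkeeping step is the parity argument $F(-x)=(P_A-P_S)F(x)$, which is standard, while the rest is direct computation using \eqref{EqScatteringSolution}. One mild subtlety worth stating explicitly is that the identity $F_{\rm hc}\equiv 0$ on $[-R_0,R_0]$ should be read as the natural extension making the scattering energy finite for the hard-core potential, rather than a classical pointwise solution of \eqref{EqEvenScatteringEquation}.
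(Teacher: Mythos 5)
Your proof is correct, but it follows a different route from the paper. The paper's argument is based on convexity: differentiating $x\mapsto\braket{F(x)\xi|F(x)\xi}$ twice and using the scattering equation \eqref{EqEvenScatteringEquation} gives $\partial_x^2\braket{F\xi|F\xi}=\braket{F'\xi|F'\xi}+\braket{F\xi|VF\xi}\geq0$ (this is the identity \eqref{convexity}, which the paper also reuses in Footnote \ref{footconvexbound}); the comparison with the hard-core solution is then made by matching the values at $x=R$ and comparing the derivatives there via $\mathsf{A}\leq R_0$, together with the vanishing of $F_{\rm hc}$ on $[-R_0,R_0]$. You instead bypass convexity entirely: you reduce to $[R_0,R]$ by the parity identity $F(-x)=(P_A-P_S)F(x)$ (which indeed follows from uniqueness, since $V$ is even and commutes with $P_A-P_S$, and is already visible in \eqref{EqScatteringSolution}), note the inequality is trivial on $[-R_0,R_0]$, and on $[R_0,R]$ diagonalize the Hermitian matrix $\mathsf{A}$ and verify the scalar inequality $\bigl(\tfrac{x-a}{R-a}\bigr)^2\geq\bigl(\tfrac{x-R_0}{R-R_0}\bigr)^2$ for $a\leq R_0\leq x\leq R$, which after cross-multiplication is $(R-x)(R_0-a)\geq0$; the decoupling is legitimate because $F(x)=(R-\mathsf{A})^{-1}(x-\mathsf{A})$ is a function of $\mathsf{A}$ and hence diagonal in the same basis. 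Both arguments are valid; yours is more explicit and entirely elementary on the region where it matters, while the paper's convexity computation has the side benefit of being needed elsewhere in the upper-bound proof. A minor point: your scalar inequality degenerates gracefully under the convention $\mathsf{A}\chi=-\infty\chi$ for constant eigenvectors (the left-hand side tends to $1$), so no separate treatment of that case is required.
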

\begin{proof}
Note that $\braket{F(x) \xi | F(x)\xi}$ is convex in $x$: by the scattering equation \eqref{EqEvenScatteringEquation},
\begin{equation}
\label{convexity}
        \partial_x^2\langle F \xi | F\xi\rangle=\braket{F' \xi | F'\xi}+\braket{F'' \xi | F\xi}+\braket{F \xi | F''\xi}=\braket{F' \xi | F'\xi}+\braket{F \xi |V F\xi}\geq0,
\end{equation}
and this goes for $\braket{F_{\rm hc}(x) \xi |F_{\rm hc}(x) \xi }$ as well. To prove the claim, note that for $\abs{x}\leq R_{\rm hc}$, we have $\braket{F_{\rm hc}(x) \xi |F_{\rm hc}(x) \xi }=0$ and the statement is trivially true.
For $\abs{x}>R_{\rm hc}$, we use this fact at $x=R_0$, convexity, and that by \eqref{EqScatteringSolution} and $\mathsf{A}\leq R_0$, the derivatives at $R$ satisfy
\[
\partial_x\braket{F\xi | F\xi}(R)=\langle\xi | 2(R-\mathsf{A})^{-1}\xi\rangle\leq\langle\xi | 2(R-R_0)^{-1}\xi\rangle=\partial_x\langle F_{\rm hc} \xi | F_{\rm hc}\xi\rangle(R),
\]
while the functions are equal at $R$ by the boundary condition.
\end{proof}

We add a remark about scalar potentials, to prove a claim made in Remark \ref{remarkrangeintro}.

\begin{remark}
\label{remarkrange}
In Remark \ref{remarkrangeintro}, we claimed that all scattering lengths $a_e$ and $a_o$ with $a_e\leq a_o$ and $a_o\geq0$ can be attained by varying the potential. To see this, consider the double delta potential, $v=2c(\delta_{-R_0}+\delta_{R_0})$, with strength $c> 0$ and range $R_0\geq 0$. This gives $a_e=R_0-\frac{1}{c}$ and $a_o=R_0-\frac{R_0}{1-R_0c}$. For $a_o\geq0$ fixed, we find $R_0=\frac{a_o+\sqrt{a_o^2+4a_o/c}}{2}$. Thus, $a_e(c)=\frac{a_o+\sqrt{a_o^2+4a_o/c}}{2}-\frac{1}{c}$. Since $a_e(c)$ is a continuous function of $c>0$ with $\lim_{c\to\infty}a_e(c)=a_o$ and $\lim_{c\to0^+}a_e(c)=-\infty$, we see that any $a_e\leq a_o$ can be achieved.
\end{remark}

\section{Remarks about the Lai--Sutherland and Yang--Gaudin models}
\label{AppendixB}
\subsection{Yang--Gaudin}
Consider the spin-$1/2$ Yang--Gaudin model, which is the spin-$1/2$ Fermi gas with point interactions, $v(x)=2c \delta(x)$. If the total spin-$z$ is fixed to  $(N-2M)/2$, the Bethe ansatz gives the following equations for the thermodynamic limit in which $ N,M,L\to\infty $ proportionally \cite{yang1967some},
\begin{align}
2\pi\sigma(\Lambda)&=-\int_{-B}^{B}\frac{2c\sigma(\Lambda')d \Lambda'}{c^2+(\Lambda-\Lambda')^2}+\int_{-Q}^{Q}\frac{4c f(k)d k}{c^2+4(k-\Lambda)^2}\label{EqYG1},\\
2\pi f(k) &= 1+\int_{-B}^{B}\frac{4c\sigma(\Lambda')d \Lambda'}{c^2+4(k-\Lambda')^2}\label{EqYG2},\\
\rho=N/L&=\int_{-Q}^{Q} f(k) d k,\quad M/L=\int_{-B}^{B}\sigma(\Lambda)d\Lambda,\label{EqYG3}\\
e_{\textnormal{YG}}=E_{\textnormal{YG}}/L&=\int_{-Q}^{Q}k^2 f(k)d k \label{EqYG4}.
\end{align}
Here, $e_{\textnormal{YG}}$ denotes the energy density of a given solution.
By the Lieb--Mattis result \cite{lieb1962theory}, the ground state is known to have total spin 0, so that it satisfies $M=N/2$. This is achieved for $B\to \infty$, which follows from \eqref{EqYG1} and \eqref{EqYG3}. It is not hard to show (see e.g.\ \cite{agerskov2023one}) that, given a solution to \eqref{EqYG1}--\eqref{EqYG4} with $B=\infty$, one finds
\[
    e_{\textnormal{YG}}=\frac{\pi^2}{3}\rho^3\left(1+2\ln(2)\rho a_e+\mathcal{O}\left((\rho/c)^2\right)\right),
\]
where $a_e=-2/c$ is the even-wave scattering length of the interaction $v(x)=2c \delta(x)$ (note $a_o=0$ for a point interaction).

While the technique used to solve the Yang--Gaudin model \cite{yang1967some,gaudin1967systeme,gaudin2014bethe} is similar to that for the Lieb--Liniger model \cite{lieb1963exact}, the mathematical rigour is somewhat less clear. More specifically, while the algebraic equations obtained with the Bethe ansatz for the Lieb--Liniger model are known to have unique solutions \cite{yang1969thermodynamics}, this is unclear for the ones found for the Yang--Gaudin model. Furthermore, even if existence and uniqueness of solutions to the relevant set of algebraic equations is assumed, one needs to argue that the ground state is among the Bethe ansatz states to begin with. For the Lieb--Liniger model, this follows from uniqueness of the bosonic ground state, and the fact that Girardeau's $c=\infty$ ground state is of the Bethe ansatz shape. More generally, it was proved by Dorlas in \cite{cmp/1104252974} that the Lieb--Liniger Bethe ansatz states form a complete orthogonal set, but none of these arguments are directly valid for the spin-$J$ (or even spin-$1/2$) Fermi gas. Hence, there does not seem to be a proof in the literature that the Bethe ansatz is valid for the ground state. However, Theorem \ref{TheoremDiluteFermiGroundStateEnergy} and the known expansion of the Heisenberg ground state energy (see \eqref{some310}) prove that first two terms of \eqref{EqYGGroundStateEnergy} are correct in the dilute limit.

\subsection{Lai--Sutherland}
The following lemma was used in \eqref{some310}.
\begin{lemma}\label{LemmaLaiSutherlandFiniteN}
    Let $E_{\textnormal{LS}}^N$ denote the ground state energy of the $N$-spin Lai--Sutherland model $H_{\textnormal{LS}}=\sum_{i=1}^{N}P_S^{i,i+1}$ with periodic boundary conditions (for some fixed spin $J$), and let $e_{\textnormal{LS}}=\lim\limits_{N\to\infty}\frac{1}{N}E_{\textnormal{LS}}^N$ be the thermodynamic ground state energy per site. Then, \begin{equation}
e_{\textnormal{LS}}-\frac{1}{N}\leq \frac{1}{N}E_{\textnormal{LS}}^{N}\leq e_{LS}+\frac{1}{N}.
\end{equation}
\end{lemma}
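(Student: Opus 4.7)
\noindent The plan is to prove both inequalities using only two elementary facts: each local term $P_S^{i,i+1}$ is a projection (so $0 \leq P_S^{i,i+1} \leq I$), and the Hamiltonian is translation invariant under periodic boundary conditions. From these, concatenation and cutting arguments will yield the $\mathcal{O}(1/N)$ discrepancy between the finite and thermodynamic ground state energies per site.

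\noindent For the lower bound $e_{\textnormal{LS}} - 1/N \leq E_{\textnormal{LS}}^N/N$, I would use the finite chain ground state as a trial state on a longer chain. Let $\phi_N$ denote the normalized ground state of $H_{\textnormal{LS}}^N$ with periodic boundary conditions. Setting $\Phi_k := \phi_N^{\otimes k}$ gives a normalized trial state on the $(kN)$-site chain (also with periodic boundary conditions). The Hamiltonian on this longer chain splits into $k$ copies of $H_{\textnormal{LS}}^N$ acting on consecutive blocks of $N$ sites (each producing energy $E_{\textnormal{LS}}^N$), plus $k$ additional ``interface'' projections $P_S^{jN, jN+1}$ between blocks (all non-negative and individually bounded above by $I$, so each contributing at most $1$). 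Thus $E_{\textnormal{LS}}^{kN} \leq k E_{\textnormal{LS}}^N + k$. Dividing by $kN$ and sending $k \to \infty$ gives $e_{\textnormal{LS}} \leq E_{\textnormal{LS}}^N/N + 1/N$.

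\noindent For the upper bound $E_{\textnormal{LS}}^N/N \leq e_{\textnormal{LS}} + 1/N$, I would introduce the open-boundary variant $E_{\textnormal{OBC}}^N$ (ground state energy of $\sum_{i=1}^{N-1} P_S^{i,i+1}$) and compare. Since $P_S^{N,1} \geq 0$, dropping it shows $E_{\textnormal{OBC}}^N \leq E_{\textnormal{PBC}}^N$; using the OBC ground state as a trial state for the PBC Hamiltonian and bounding $\langle P_S^{N,1}\rangle \leq 1$ yields $E_{\textnormal{PBC}}^N \leq E_{\textnormal{OBC}}^N + 1$. Next, by viewing the $(N+M)$-site OBC Hamiltonian as the sum of $N$- and $M$-site OBC Hamiltonians on consecutive blocks plus one non-negative connecting projection, one obtains the superadditivity $E_{\textnormal{OBC}}^{N+M} \geq E_{\textnormal{OBC}}^N + E_{\textnormal{OBC}}^M$. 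Since $E_{\textnormal{OBC}}^N/N \leq 1$ is bounded above, Fekete's lemma (in its superadditive form) gives $E_{\textnormal{OBC}}^N/N \leq \sup_N E_{\textnormal{OBC}}^N/N = \lim_N E_{\textnormal{OBC}}^N/N$. Since the PBC-OBC comparison forces these two limits to coincide with $e_{\textnormal{LS}}$, the conclusion $E_{\textnormal{PBC}}^N/N \leq E_{\textnormal{OBC}}^N/N + 1/N \leq e_{\textnormal{LS}} + 1/N$ follows.

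\noindent Since everything reduces to elementary operator inequalities and Fekete's lemma, I do not anticipate any genuine obstacle: the only mild subtlety is keeping straight whether Fekete's lemma is invoked in its subadditive or superadditive version, and confirming that the sign conventions in the bond-counting match (so that the extra $k$ interface terms in the lower-bound argument and the single extra bond in the PBC-OBC comparison both produce the advertised $1/N$ correction rather than a worse constant).
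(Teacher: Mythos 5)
Your proof is correct. The lower bound is essentially the paper's own argument: concatenate $k$ copies of the $N$-site ground state to get a trial state on the $kN$-site ring, bound each of the $k$ interface projections by $1$, and send $k\to\infty$. (One cosmetic point: the $kN$-site Hamiltonian does not literally split into $k$ copies of the periodic $H_{\textnormal{LS}}^N$ plus interfaces -- that would overcount bonds -- but since the sum of the $N-1$ interior bonds of a block is $\leq H_{\textnormal{LS}}^N$ as operators, your estimate $E_{\textnormal{LS}}^{kN}\leq kE_{\textnormal{LS}}^N+k$ stands; your version even avoids the paper's need to choose a \emph{translation-invariant} ground state of the $N$-chain.) The upper bound is where you genuinely diverge from the paper. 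The paper restricts the ground state of a long periodic chain of length $M$ to $N$ consecutive sites, uses translation invariance of that state to assign energy $E_{\textnormal{LS}}^M/M$ to each of the $N-1$ retained bonds plus at most $1$ for the closing bond, and lets $M\to\infty$. You instead pass to the open-boundary chain, prove superadditivity $E_{\textnormal{OBC}}^{N+M}\geq E_{\textnormal{OBC}}^N+E_{\textnormal{OBC}}^M$, invoke Fekete's lemma, and transfer back via $E_{\textnormal{OBC}}^N\leq E_{\textnormal{PBC}}^N\leq E_{\textnormal{OBC}}^N+1$. Both routes are sound; yours has the advantages of avoiding any appeal to translation invariance of ground states and of actually establishing the existence of the limit $e_{\textnormal{LS}}$ (which the lemma's statement otherwise presupposes), at the modest cost of importing Fekete's lemma and the OBC/PBC comparison, where the paper's truncation argument gets the upper bound in one step.
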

\begin{proof}
For $ M>N $, we have 
\begin{equation}
\label{some13011}
E_{\textnormal{LS}}^N\leq (N-1)\frac{E_{\textnormal{LS}}^M}{M}+1\leq N\frac{E_{\textnormal{LS}}^M}{M}+1
\end{equation}
This can be seen by taking the ground state of a chain of length $ M $, truncating it at length $ N $ to obtain a trial state for the chain of length $N$, and using that $N-1$ terms in the Hamiltonian all have the same expectation value by translation invariance of the chain of length $M$, while the last term is estimated by the maximal eigenvalue 1 of a projection operator. 

On the other hand, for $m\in\mathbb{N}$, take a translation-invariant ground state of the chain of length $N$, concatenate $m$ of them to obtain a trial state for the chain of length $mN$, and again use that all terms in the Hamiltonian corresponding to spins in the interior of the chain of length $N$ give the same energy contribution. This gives
\begin{equation}
\label{some13012}
E_{\textnormal{LS}}^{mN}\leq m(N-1)\frac{E_{\textnormal{LS}}^{N}}{N}+m\leq mN\frac{E_{\textnormal{LS}}^{N}}{N}+m.
\end{equation}
Taking the limits $ m\to \infty $ and $ M\to\infty $ in \eqref{some13011} and \eqref{some13012} gives the desired result.
\end{proof}

\end{appendices}

\bibliographystyle{ieeetr}
\bibliography{bibliography}

\newcommand{\noop}[1]{}
\begin{thebibliography}{10}

\bibitem{lee1957eigenvalues}
T.~D. Lee, K.~Huang, and C.~N. Yang, ``Eigenvalues and eigenfunctions of a
  {B}ose system of hard spheres and its low-temperature properties,'' {\em
  Physical Review}, vol.~106, no.~6, p.~1135, 1957.

\bibitem{dyson1957ground}
F.~J. Dyson, ``Ground-state energy of a hard-sphere gas,'' {\em Physical
  Review}, vol.~106, no.~1, p.~20, 1957.

\bibitem{lieb1998ground}
E.~H. Lieb and J.~Yngvason, ``Ground state energy of the low density bose
  gas,'' {\em Physical Review Letters}, vol.~80, no.~12, p.~2504, 1998.

\bibitem{yau2009second}
H.-T. Yau and J.~Yin, ``The second order upper bound for the ground energy of a
  {B}ose gas,'' {\em Journal of Statistical Physics}, vol.~136, no.~3,
  pp.~453--503, 2009.

\bibitem{fournais2020energy}
S.~Fournais and J.~P. Solovej, ``The energy of dilute {B}ose gases,'' {\em
  Annals of Mathematics}, vol.~192, no.~3, pp.~893--976, 2020.

\bibitem{basti2021new}
G.~Basti, S.~Cenatiempo, and B.~Schlein, ``A new second-order upper bound for
  the ground state energy of dilute {B}ose gases,'' in {\em Forum of
  Mathematics, Sigma}, vol.~9, Cambridge University Press, 2021.

\bibitem{fournais2021energy}
S.~Fournais and J.~P. Solovej, ``The energy of dilute {B}ose gases {I}{I}:
  {T}he general case,'' {\em Inventiones mathematicae}, vol.~232, no.~2,
  pp.~863--994, 2023.

\bibitem{basti2024upper}
G.~Basti, S.~Cenatiempo, A.~Giuliani, A.~Olgiati, G.~Pasqualetti, and
  B.~Schlein, ``Upper bound for the ground state energy of a dilute {B}ose gas
  of hard spheres,'' {\em Archive for Rational Mechanics and Analysis},
  vol.~248, no.~6, p.~100, 2024.

\bibitem{haberberger2023free}
F.~Haberberger, C.~Hainzl, P.~T. Nam, R.~Seiringer, and A.~Triay, ``The free
  energy of dilute {B}ose gases at low temperatures,'' {\em arXiv:2304.02405},
  2023.

\bibitem{haberberger2024upper}
F.~Haberberger, C.~Hainzl, B.~Schlein, and A.~Triay, ``Upper bound for the free
  energy of dilute bose gases at low temperature,'' {\em Advances in
  Mathematics}, vol.~490, p.~110825, 2026.

\bibitem{fournais2024free}
S.~Fournais, L.~Junge, T.~Girardot, L.~Morin, M.~Olivieri, and A.~Triay, ``The
  free energy of dilute bose gases at low temperatures interacting via strong
  potentials,'' {\em Annales Henri Poincar{\'e}}, 2026.

\bibitem{brooks2025third}
M.~Brooks, J.~Oldenburg, D.~S. Aubin, and B.~Schlein, ``Third order upper bound
  for the ground state energy of the dilute {B}ose gas,'' {\em
  arXiv:2506.04153}, 2025.

\bibitem{schick1971two}
M.~Schick, ``Two-dimensional system of hard-core bosons,'' {\em Physical Review
  A}, vol.~3, no.~3, p.~1067, 1971.

\bibitem{lieb2001ground}
E.~H. Lieb and J.~Yngvason, ``The ground state energy of a dilute
  two-dimensional {B}ose gas,'' {\em Journal of Statistical Physics}, vol.~103,
  no.~3, pp.~509--526, 2001.

\bibitem{andersen2002ground}
J.~O. Andersen, ``Ground state pressure and energy density of an interacting
  homogeneous {B}ose gas in two dimensions,'' {\em The European Physical
  Journal B-Condensed Matter and Complex Systems}, vol.~28, no.~4,
  pp.~389--396, 2002.

\bibitem{mora2009ground}
C.~Mora and Y.~Castin, ``Ground state energy of the two-dimensional weakly
  interacting {B}ose gas: first correction beyond {B}ogoliubov theory,'' {\em
  Physical Review Letters}, vol.~102, no.~18, p.~180404, 2009.

\bibitem{fournais2024ground}
S.~Fournais, T.~Girardot, L.~Junge, L.~Morin, and M.~Olivieri, ``The ground
  state energy of a two-dimensional {B}ose gas,'' {\em Communications in
  Mathematical Physics}, vol.~405, no.~3, p.~59, 2024.

\bibitem{fournais2024lower}
S.~Fournais, T.~Girardot, L.~Junge, L.~Morin, and M.~Olivieri, ``Lower bounds
  on the energy of the {B}ose gas,'' {\em Reviews in Mathematical Physics},
  vol.~36, no.~09, p.~2360004, 2024.

\bibitem{lieb2005ground}
E.~H. Lieb, R.~Seiringer, and J.~P. Solovej, ``Ground-state energy of the
  low-density {F}ermi gas,'' {\em Physical Review A}, vol.~71, no.~5,
  p.~053605, 2005.

\bibitem{lauritsen2024ground}
A.~B. Lauritsen and R.~Seiringer, ``Ground state energy of the dilute
  spin-polarized {F}ermi gas: upper bound via cluster expansion,'' {\em Journal
  of Functional Analysis}, vol.~286, no.~7, p.~110320, 2024.

\bibitem{lauritsen2024pressure}
A.~B. Lauritsen and R.~Seiringer, ``Pressure of a dilute spin-polarized {F}ermi
  gas: Lower bound,'' vol.~12, p.~e78, 2024.

\bibitem{giacomelli2024huang}
E.~L. Giacomelli, C.~Hainzl, P.~T. Nam, and R.~Seiringer, ``The {H}uang--{Y}ang
  formula for the low-density {F}ermi gas: Upper bound,'' {\em Communications
  on Pure and Applied Mathematics}, p.~e70040, 2024.

\bibitem{giacomelli2025huang}
E.~L. Giacomelli, C.~Hainzl, P.~T. Nam, and R.~Seiringer, ``The {H}uang-{Y}ang
  conjecture for the low-density {F}ermi gas,'' {\em arXiv:2505.22340}, 2025.

\bibitem{chen2025second}
X.~Chen, J.~Wu, and Z.~Zhang, ``The second order {H}uang-{Y}ang approximation
  to the {F}ermi thermodynamic pressure,'' {\em arXiv:2505.23136}, 2025.

\bibitem{agerskov2022ground}
J.~Agerskov, R.~Reuvers, and J.~P. Solovej, ``Ground state energy of dilute
  {B}ose gases in 1{D},'' {\em Communications in Mathematical Physics},
  vol.~406, no.~2, p.~27, 2025.

\bibitem{girardeau1960relationship}
M.~Girardeau, ``Relationship between systems of impenetrable bosons and
  fermions in one dimension,'' {\em Journal of Mathematical Physics}, vol.~1,
  no.~6, pp.~516--523, 1960.

\bibitem{lieb1963exact}
E.~H. Lieb and W.~Liniger, ``Exact analysis of an interacting {B}ose gas. i.
  the general solution and the ground state,'' {\em Physical Review}, vol.~130,
  no.~4, p.~1605, 1963.

\bibitem{astrakharchik2010low}
G.~Astrakharchik, J.~Boronat, I.~Kurbakov, Y.~E. Lozovik, and F.~Mazzanti,
  ``Low-dimensional weakly interacting {B}ose gases: Nonuniversal equations of
  state,'' {\em Physical Review A}, vol.~81, no.~1, p.~013612, 2010.

\bibitem{bethe1931theorie}
H.~Bethe, ``Zur {T}heorie der {M}etalle,'' {\em Zeitschrift f{\"u}r Physik},
  vol.~71, no.~3, pp.~205--226, 1931.

\bibitem{hult1938}
L.~Hulthén, ``{\"Uber das Austauschproblem eines Kristalles},'' {\em Ark. Met.
  Astron. Fysik}, vol.~26A, no.~Na. 11, 1938.

\bibitem{lai1974lattice}
C.~Lai, ``Lattice gas with nearest-neighbor interaction in one dimension with
  arbitrary statistics,'' {\em Journal of Mathematical Physics}, vol.~15,
  no.~10, pp.~1675--1676, 1974.

\bibitem{sutherland1975model}
B.~Sutherland, ``Model for a multicomponent quantum system,'' {\em Physical
  Review B}, vol.~12, no.~9, p.~3795, 1975.

\bibitem{volosniev2014strongly}
A.~Volosniev, D.~V. Fedorov, A.~S. Jensen, M.~Valiente, and N.~T. Zinner,
  ``Strongly interacting confined quantum systems in one dimension,'' {\em
  Nature Communications}, vol.~5, no.~1, p.~5300, 2014.

\bibitem{deuretzbacher2014quantum}
F.~Deuretzbacher, D.~Becker, J.~Bjerlin, S.~Reimann, and L.~Santos, ``Quantum
  magnetism without lattices in strongly interacting one-dimensional spinor
  gases,'' {\em Physical Review A}, vol.~90, no.~1, p.~013611, 2014.

\bibitem{yang2015strongly}
L.~Yang, L.~Guan, and H.~Pu, ``Strongly interacting quantum gases in
  one-dimensional traps,'' {\em Physical Review A}, vol.~91, no.~4, p.~043634,
  2015.

\bibitem{levinsen2015strong}
J.~Levinsen, P.~Massignan, G.~M. Bruun, and M.~M. Parish, ``Strong-coupling
  ansatz for the one-dimensional {F}ermi gas in a harmonic potential,'' {\em
  Science Advances}, vol.~1, no.~6, p.~e1500197, 2015.

\bibitem{guan2013fermi}
X.-W. Guan, M.~T. Batchelor, and C.~Lee, ``{F}ermi gases in one dimension: From
  {B}ethe ansatz to experiments,'' {\em Reviews of Modern Physics}, vol.~85,
  no.~4, pp.~1633--1691, 2013.

\bibitem{minguzzi2022strongly}
A.~Minguzzi and P.~Vignolo, ``Strongly interacting trapped one-dimensional
  quantum gases: Exact solution,'' {\em AVS Quantum Science}, vol.~4, no.~2,
  2022.

\bibitem{bloch2008many}
I.~Bloch, J.~Dalibard, and W.~Zwerger, ``Many-body physics with ultracold
  gases,'' {\em Reviews of Modern Physics}, vol.~80, no.~3, p.~885, 2008.

\bibitem{guan2022yang}
X.-W. Guan and H.-Q. Lin, ``Yang-gaudin model: A paradigm of many-body
  physics,'' in {\em A Festschrift in Honor of the CN Yang Centenary:
  Scientific Papers}, pp.~185--206, World Scientific, 2022.

\bibitem{yang1967some}
C.-N. Yang, ``Some exact results for the many-body problem in one dimension
  with repulsive delta-function interaction,'' {\em Physical Review Letters},
  vol.~19, no.~23, p.~1312, 1967.

\bibitem{gaudin1967systeme}
M.~Gaudin, ``Un systeme a une dimension de fermions en interaction,'' {\em
  Physics Letters A}, vol.~24, no.~1, pp.~55--56, 1967.

\bibitem{sutherland1968further}
B.~Sutherland, ``Further results for the many-body problem in one dimension,''
  {\em Physical Review Letters}, vol.~20, no.~3, p.~98, 1968.

\bibitem{ogata1990bethe}
M.~Ogata and H.~Shiba, ``Bethe-ansatz wave function, momentum distribution, and
  spin correlation in the one-dimensional strongly correlated {H}ubbard
  model,'' {\em Physical Review B}, vol.~41, no.~4, p.~2326, 1990.

\bibitem{lieb1968absence}
E.~H. Lieb and F.-Y. Wu, ``Absence of {M}ott transition in an exact solution of
  the short-range, one-band model in one dimension,'' {\em Physical Review
  Letters}, vol.~20, no.~25, p.~1445, 1968.

\bibitem{guan2007ferromagnetic}
X.-W. Guan, M.~T. Batchelor, and M.~Takahashi, ``Ferromagnetic behavior in the
  strongly interacting two-component {B}ose gas,'' {\em Physical Review
  A—Atomic, Molecular, and Optical Physics}, vol.~76, no.~4, p.~043617, 2007.

\bibitem{matveev2008spectral}
K.~Matveev and A.~Furusaki, ``Spectral functions of strongly interacting
  isospin-1 2 bosons in one dimension,'' {\em Physical Review Letters},
  vol.~101, no.~17, p.~170403, 2008.

\bibitem{volosniev2015engineering}
A.~G. Volosniev, D.~Petrosyan, M.~Valiente, D.~Fedorov, A.~Jensen, and N.~T.
  Zinner, ``Engineering the dynamics of effective spin-chain models for
  strongly interacting atomic gases,'' {\em Physical Review A}, vol.~91, no.~2,
  p.~023620, 2015.

\bibitem{massignan2015magnetism}
P.~Massignan, J.~Levinsen, and M.~M. Parish, ``Magnetism in strongly
  interacting one-dimensional quantum mixtures,'' {\em Physical Review
  Letters}, vol.~115, no.~24, p.~247202, 2015.

\bibitem{yang2016engineering}
L.~Yang, X.~Guan, and X.~Cui, ``Engineering quantum magnetism in
  one-dimensional trapped {F}ermi gases with p-wave interactions,'' {\em
  Physical Review A}, vol.~93, no.~5, p.~051605, 2016.

\bibitem{deuretzbacher2017tuning}
F.~Deuretzbacher and L.~Santos, ``Tuning an effective spin chain of three
  strongly interacting one-dimensional fermions with the transversal
  confinement,'' {\em Physical Review A}, vol.~96, no.~1, p.~013629, 2017.

\bibitem{murmann2015antiferromagnetic}
S.~Murmann, F.~Deuretzbacher, G.~Z{\"u}rn, J.~Bjerlin, S.~M. Reimann,
  L.~Santos, T.~Lompe, and S.~Jochim, ``Antiferromagnetic {H}eisenberg spin
  chain of a few cold atoms in a one-dimensional trap,'' {\em Physical Review
  Letters}, vol.~115, no.~21, p.~215301, 2015.

\bibitem{Tomonaga}
S.-i. Tomonaga, ``Remarks on bloch's method of sound waves applied to
  many-fermion problems,'' {\em Progress of Theoretical Physics}, vol.~5,
  pp.~544--569, 07 1950.

\bibitem{luttinger1963exactly}
J.~Luttinger, ``An exactly soluble model of a many-fermion system,'' {\em
  Journal of mathematical physics}, vol.~4, no.~9, pp.~1154--1162, 1963.

\bibitem{mattis1965exact}
D.~C. Mattis and E.~H. Lieb, ``Exact solution of a many-fermion system and its
  associated boson field,'' {\em Journal of Mathematical physics}, vol.~6,
  no.~2, pp.~304--312, 1965.

\bibitem{mattis1964band}
D.~C. Mattis, ``Band theory of magnetism in metals in context of exactly
  soluble model,'' {\em Physics Physique Fizika}, vol.~1, no.~3, p.~183, 1964.

\bibitem{haldane1981luttinger}
F.~D.~M. Haldane, ``'luttinger liquid theory'of one-dimensional quantum fluids.
  i. properties of the luttinger model and their extension to the general 1d
  interacting spinless fermi gas,'' {\em Journal of Physics C: Solid State
  Physics}, vol.~14, no.~19, pp.~2585--2609, 1981.

\bibitem{metzner1993conservation}
W.~Metzner and C.~Di~Castro, ``Conservation laws and correlation functions in
  the luttinger liquid,'' {\em Physical Review B}, vol.~47, no.~24, p.~16107,
  1993.

\bibitem{giamarchi2003quantum}
T.~Giamarchi, {\em Quantum physics in one dimension}, vol.~121.
\newblock Clarendon press, 2003.

\bibitem{benfatto1990perturbation}
G.~Benfatto and G.~Gallavotti, ``Perturbation theory of the fermi surface in a
  quantum liquid. a general quasiparticle formalism and one-dimensional
  systems,'' {\em Journal of Statistical Physics}, vol.~59, no.~3,
  pp.~541--664, 1990.

\bibitem{benfatto1994beta}
G.~Benfatto, G.~Gallavotti, A.~Procacci, and B.~Scoppola, ``Beta function and
  schwinger functions for a many fermions system in one dimension. anomaly of
  the fermi surface,'' {\em Communications in mathematical physics}, vol.~160,
  no.~1, pp.~93--171, 1994.

\bibitem{benfatto2011drude}
G.~Benfatto and V.~Mastropietro, ``Drude weight in non solvable quantum spin
  chains,'' {\em Journal of Statistical Physics}, vol.~143, no.~2,
  pp.~251--260, 2011.

\bibitem{benfatto2014universality}
G.~Benfatto, P.~Falco, and V.~Mastropietro, ``Universality of one-dimensional
  fermi systems, i. response functions and critical exponents,'' {\em
  Communications in Mathematical Physics}, vol.~330, no.~1, pp.~153--215, 2014.

\bibitem{benfatto2014universality2}
G.~Benfatto, P.~Falco, and V.~Mastropietro, ``Universality of one-dimensional
  fermi systems, ii. the luttinger liquid structure,'' {\em Communications in
  Mathematical Physics}, vol.~330, no.~1, pp.~217--282, 2014.

\bibitem{eisenberg2002polarization}
E.~Eisenberg and E.~H. Lieb, ``Polarization of interacting bosons with spin,''
  {\em Physical Review Letters}, vol.~89, no.~22, p.~220403, 2002.

\bibitem{takahashi1971one}
M.~Takahashi, ``One-dimensional {H}eisenberg model at finite temperature,''
  {\em Progress of Theoretical Physics}, vol.~46, no.~2, pp.~401--415, 1971.

\bibitem{cmp/1104252974}
T.~C. Dorlas, ``{Orthogonality and completeness of the Bethe ansatz eigenstates
  of the nonlinear Schroedinger model},'' {\em Communications in Mathematical
  Physics}, vol.~154, no.~2, pp.~347 -- 376, 1993.

\bibitem{yang1966one}
C.-N. Yang and C.-P. Yang, ``One-dimensional chain of anisotropic spin-spin
  interactions. i. proof of {B}ethe's hypothesis for ground state in a finite
  system,'' {\em Physical Review}, vol.~150, no.~1, p.~321, 1966.

\bibitem{mattis2012theory}
r.~Mattis, {\em The Theory of Magnetism I: Statics and Dynamics}.
\newblock Springer Series in Solid-State Sciences, Springer Berlin Heidelberg,
  2012.

\bibitem{lieb2004one}
E.~H. Lieb, R.~Seiringer, and J.~Yngvason, ``One-dimensional behavior of
  dilute, trapped {B}ose gases,'' {\em Communications in mathematical physics},
  vol.~244, no.~2, pp.~347--393, 2004.

\bibitem{girardeau2004theory}
M.~Girardeau and M.~Olshanii, ``Theory of spinor {F}ermi and {B}ose gases in
  tight atom waveguides,'' {\em Physical Review A—Atomic, Molecular, and
  Optical Physics}, vol.~70, no.~2, p.~023608, 2004.

\bibitem{girardeau2006ground}
M.~Girardeau, ``Ground and excited states of spinor {F}ermi gases in tight
  waveguides and the {L}ieb-{L}iniger-{H}eisenberg model,'' {\em Physical
  Review Letters}, vol.~97, no.~21, p.~210401, 2006.

\bibitem{deuretzbacher2016momentum}
F.~Deuretzbacher, D.~Becker, and L.~Santos, ``Momentum distributions and
  numerical methods for strongly interacting one-dimensional spinor gases,''
  {\em Physical Review A}, vol.~94, no.~2, p.~023606, 2016.

\bibitem{yang2017one}
L.~Yang and H.~Pu, ``One-body density matrix and momentum distribution of
  strongly interacting one-dimensional spinor quantum gases,'' {\em Physical
  Review A}, vol.~95, no.~5, p.~051602, 2017.

\bibitem{kerr2024analytic}
M.~L. Kerr, G.~De~Rosi, and K.~Kheruntsyan, ``Analytic thermodynamic properties
  of the {L}ieb-{L}iniger gas,'' {\em SciPost Physics Core}, vol.~7, no.~3,
  p.~047, 2024.

\bibitem{lieb2006mathematics}
{\relax E.H}.~Lieb, R.~Seiringer, {\relax J.P}.~Solovej, and J.~Yngvason, {\em
  The {M}athematics of the {B}ose {G}as and its {C}ondensation}.
\newblock Oberwolfach Seminars, Birkh{\"a}user Basel, 2006.

\bibitem{agerskov2023one}
J.~Agerskov, {\em One-dimensional Dilute Quantum Gases and Their Ground State
  Energies}.
\newblock PhD thesis, School of The Faculty of Science, University of
  Copenhagen, 2023.

\bibitem{robinson2014thermodynamic}
{\relax D.W}.~Robinson, {\em The {T}hermodynamic {P}ressure in {Q}uantum
  {S}tatistical {M}echanics}.
\newblock Lecture Notes in Physics, Springer Berlin Heidelberg, 1971.

\bibitem{lieb1962theory}
E.~H. Lieb and D.~Mattis, ``Theory of ferromagnetism and the ordering of
  electronic energy levels,'' {\em Physical Review}, vol.~125, no.~1,
  pp.~164--172, 1962.

\bibitem{gaudin2014bethe}
M.~Gaudin, {\em The Bethe Wavefunction}.
\newblock Cambridge University Press, 2014.

\bibitem{yang1969thermodynamics}
C.-N. Yang and C.~P. Yang, ``Thermodynamics of a one-dimensional system of
  bosons with repulsive delta-function interaction,'' {\em Journal of
  Mathematical Physics}, vol.~10, no.~7, pp.~1115--1122, 1969.

\end{thebibliography}
\end{document}